\tikzstyle{vertex}=[draw, circle, fill, inner sep = 2.4pt]
\newcommand{\F}{\ensuremath{\mathbb F}\xspace}
\newcommand{\cI}{\ensuremath{\mathcal I}\xspace}
\DeclareMathOperator{\Pf}{Pf}
\DeclareMathOperator{\poly}{poly}
\DeclareMathOperator{\supp}{supp}
\DeclareMathOperator{\osupp}{osupp}
\newcommand{\mcal}{\mathcal}
\newcommand{\monomial}{\mathbf{m}}
\newcommand{\matroid}{\mathcal{M}}
\newcommand{\watroid}{\mathcal{W}}
\newcommand{\sub}{\subseteq}
\newcommand{\set}[1]{\left\{#1\right\}}
\newcommand{\grp}[1]{\left(#1\right)}
\newcommand{\vertexcolor}{\textsc{Vertex Coloring}}
\newcommand{\edgecolor}{\textup{\textsc{{Edge Coloring}}}}
\newcommand{\listedgecolor}{\textup{\textsc{List Edge Coloring}}}
\newcommand{\coloring}{\textsc{Coloring}}
\newcommand{\NP}{\textsf{NP}}
\newcommand{\eps}{\varepsilon}
\newcommand{\FF}{\mathbb{F}}
    \let\emptyset\varnothing
\newcommand{\notD}{V'}
\newcommand{\datroid}{\mathcal{D}}
\newcommand{\new}{\textup{new}}
\newcommand{\Gnew}{G_{\new}}
\newcommand{\nnew}{n_{\new}}
\newcommand{\mnew}{m_{\new}}
\newcommand{\Vnew}{V_{\new}}
\newcommand{\Enew}{E_{\new}}
\newcommand{\matextend}{\matroid^*_{\downarrow}}
\definecolor{amethyst}{rgb}{0.6, 0.4, 0.8}
\title{Faster Edge Coloring by Partition Sieving}
\author{Shyan Akmal}{INSAIT, Sofia University ``St. Kliment Ohridski'', Bulgaria \and \url{https://www.shyan.akmal.com}}{shyan.akmal@insait.ai}{https://orcid.org/0000-0002-7266-2041}{Partially funded by the Ministry of Education and Science of Bulgaria (support
for INSAIT, part of the Bulgarian National Roadmap for Research Infrastructure).}%TODO mandatory, please use full name; only 1 author per \author macro; first two parameters are mandatory, other parameters can be empty. Please provide at least the name of the affiliation and the country. The full address is optional. Use additional curly braces to indicate the correct name splitting when the last name consists of multiple name parts.
\author{Tomohiro Koana}{Technische Universität Berlin, Faculty~IV, Institute of Software Engineering and Theoretical Computer Science, Algorithmics and Computational Complexity}{tomohiro.koana@tu-berlin.de}{0000-0002-8684-0611}{Supported by the DFG Project DiPa, NI 369/21.}
\authorrunning{S. Akmal and T. Koana}
\keywords{coloring, edge coloring, chromatic index, matroid, pfaffian, algebraic  algorithm}
\begin{document}

\maketitle

\begin{abstract}

In the \edgecolor{} problem,
we are given an undirected graph $G$ with $n$ vertices and $m$ edges,
and are tasked with finding the smallest positive integer $k$ so that the edges of $G$ can be assigned $k$ colors in such a way that no two edges incident to the same vertex are assigned the same color. 
\edgecolor{} is a classic \NP-hard problem, and so significant research has gone into designing fast \emph{exponential-time} algorithms for solving \edgecolor{} and its variants exactly. 
Prior work showed that \edgecolor{} can be solved in $2^m\poly(n)$ time and polynomial space, and
in graphs with average degree $d$ in $2^{(1-\eps_d)m}\poly(n)$ time and exponential space, where $\eps_d = (1/d)^{\Theta(d^3)}$.

We present an algorithm that solves \edgecolor{} in $2^{m-3n/5}\poly(n)$ time and polynomial space. 
Our result is the first algorithm for this problem which simultaneously runs in faster than $2^m\poly(m)$ time and uses only polynomial space.
In graphs of average degree $d$, our algorithm runs in $2^{(1-6/(5d))m}\poly(n)$ time, which has far better dependence in $d$ than previous results. 
We also consider a generalization of \edgecolor{} called
  \textsc{List Edge Coloring},
  where each edge $e$ in the input graph 
  comes with a list $L_e\sub\set{1, \dots, k}$ of colors,
  and we must determine whether we can assign each edge a color from its list so that no two edges incident to the same vertex receive the same 
  color.
  We show that this problem can be solved in $2^{(1-6/(5k))m}\poly(n)$ time and polynomial space.
  The previous best algorithm for \listedgecolor{} took $2^m\poly(n)$ time and space. 

  Our algorithms are algebraic, and work by constructing a special polynomial $P$ based off the input graph that contains a multilinear monomial (i.e., a monomial where every variable has degree at most one) if and only if the answer to the \listedgecolor{} problem on the input graph is YES.
  We then solve the problem by detecting multilinear monomials in $P$. 
  Previous work also employed such monomial detection techniques to solve \edgecolor{}.
  We obtain faster algorithms 
  both by carefully constructing our polynomial $P$, and by improving the runtimes for certain structured monomial detection problems using a technique we call partition sieving.
\end{abstract}

\newpage 

\section{Introduction}

Coloring graphs is a rich area of research in graph theory and computer science. 
Given a graph $G$, a \emph{proper vertex coloring} of $G$ is an assignment of colors to its nodes such that no two adjacent vertices receive the same color. 
In the \vertexcolor{} problem, we are given an undirected graph $G$ on $n$ vertices, and are tasked with computing the smallest positive integer $k$ such that $G$ admits a proper vertex coloring using $k$ colors. 
\vertexcolor{} is a classic combinatorial problem, 
\NP-hard to solve even approximately \cite{LundYannakis1994}.
An influential line of research has worked on designing faster and faster exponential-time algorithms for this problem \cite{Lawler76,Eppstein2001,Byskov2004,BjorkHusf2007}, culminating in an algorithm solving \vertexcolor{} in $O^*(2^n)$ time and space \cite{BjHuKo09},
where throughout we write $O^*(f(n))$ as shorthand for $f(n)\poly(n)$.

In this paper, we study exact algorithms for the closely related 
\edgecolor{} problem.
Given a graph $G$, a \emph{proper edge coloring} of $G$ is an assignment of colors to its edges such that no two edges incident to a common vertex receive the same color.
The smallest positive integer $k$ such that $G$ admits a proper edge coloring using $k$ colors is the \emph{chromatic index} of~$G$, denoted by $\chi'(G)$.
In the \edgecolor{} problem, we are given an undirected graph $G$ on $n$ vertices and $m$ edges, and are tasked with computing $\chi'(G)$.

Let $\Delta$ denote the maximum degree of the input graph $G$. 
For each vertex $v$ in $G$, a proper edge coloring of $G$ must assign distinct colors to the edges incident to $v$.
Thus $\chi'(G)\ge \Delta$.
A classic result in graph theory known as
Vizing's theorem proves that in fact $\chi'(G)\le \Delta+1$,
so that the simple $\Delta$ lower bound is close to the truth. 
Moreover, a proper edge coloring of $G$ using $(\Delta+1)$ colors can be found in near-linear time~\cite{near-linear-vizing}. 
Consequently, solving \edgecolor{} amounts to distinguishing between the cases $\chi'(G) = \Delta$ and $\chi'(G)=\Delta+1$.
Despite this powerful structural result, the \edgecolor{} problem is \NP-hard just like \vertexcolor{}, even for graphs where all vertices have degree exactly $\Delta = 3$ \cite{Holyer81a}.

In recent years, researchers have made major strides in our knowledge of algorithms for \emph{approximate} variants of \edgecolor{} in the distributed \cite{Harris2019,Bernshteyn2022,BaBrKuOl2022} and dynamic settings \cite{Duan2019,chrRoVlie2024,BhCoMaPaSo2024}.
In comparison,
much less is known about the exponential-time complexity for solving the \edgecolor{} problem \emph{exactly}.

A simple way to solve \edgecolor{} is by reduction to \vertexcolor{}.
Given the input graph $G$,
we can construct in polynomial time the \emph{line graph} $L(G)$
whose nodes are edges of $G$, and whose nodes are adjacent if the corresponding edges in $G$ are incident to a common vertex.
By construction, the solution to \vertexcolor{} on $L(G)$ gives the solution to \edgecolor{} on $G$.
If $G$ has $m$ edges and maximum degree $\Delta$,
then $L(G)$ has $m$ vertices and maximum degree $(2\Delta-1)$.
Using the aforementioned $O^*(2^n)$ time and space algorithm for \vertexcolor{},
this immediately implies a $O^*(2^m)$ time and space algorithm for \edgecolor{}. 
In graphs with maximum degree $\Delta$,
\vertexcolor{} can be solved in $O^*(2^{(1-\eps)n})$ time and space,
where $\eps = (1/2)^{\Theta(\Delta)}$ \cite{BjorklundHusfeldtKaskiKoivisto2009}.
By applying the above reduction, \edgecolor{} can similarly be solved in  $O^*(2^{(1-\eps)m})$ time and space, for $\eps = (1/2)^{\Theta(\Delta)}$.

Line graphs are much more structured objects than generic undirected graphs (for example, there are small patterns that line graphs cannot contain as induced subgraphs \cite[Theorem 3]{beineke1968}), so one might hope to solve \edgecolor{} faster without relying on a black-box reduction to \vertexcolor{}.
Nonetheless, there is only one result from previous work which solves \edgecolor{} on general undirected graphs without using this reduction.
Specifically, \cite[Section 5.6]{BjorklundHKK17narrow} shows that \edgecolor{} can be solved in $O^*(2^m)$ time and polynomial space.
In comparison, it remains open whether \vertexcolor{} can be solved in $O^*(2^n)$ time using only polynomial space. 

\noindent In summary, algorithms from previous work
can solve \edgecolor{} in 
\begin{itemize}
    \item  $O^*(2^m)$ time using polynomial space, or in
    \item faster than $O^*(2^m)$ time in bounded degree graphs, using exponential space. 
\end{itemize}

\noindent Given this state of affairs, it is natural to ask: 

\begin{center}
    \textit{Can \edgecolor{} be solved in faster than $O^*(2^m)$ time, using only polynomial space? }
\end{center}

\noindent We answer this question affirmatively by proving the following result. 

\begin{restatable}{theorem}{main}
\label{thm:edgecolor}
There is a randomized algorithm which solves \edgecolor{} with high probability and one-sided error in $O^*(2^{m - 3n/5})$ time and polynomial space.
\end{restatable}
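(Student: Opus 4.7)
The plan is to reduce \edgecolor{} to multilinear monomial detection in a carefully constructed polynomial, and then to detect the monomial using an inclusion--exclusion sieve that exploits how its variables partition by vertex. By Vizing's theorem $\chi'(G)\in\{\Delta,\Delta+1\}$, and a proper $(\Delta+1)$-edge-coloring can be computed in near-linear time via \cite{near-linear-vizing}; so it suffices to decide, with one-sided error, whether $G$ admits a proper $\Delta$-edge-coloring, which we view as the \listedgecolor{} instance with $L_e = [\Delta]$ for every edge.

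To encode this algebraically, I would introduce a variable $x_{v,c}$ for each vertex $v$ and color $c\in[\Delta]$, and form
\[
P(x) \;=\; \prod_{e=\{u,v\}\in E}\;\sum_{c\in[\Delta]} x_{u,c}\,x_{v,c}.
\]
Each monomial of $P$ corresponds to a choice of color for every edge; such a monomial is \emph{multilinear} precisely when no variable $x_{v,c}$ is picked more than once, i.e.\ when no two edges incident to a common vertex share a color. Thus the decision problem reduces to multilinear monomial detection in $P$, which has degree $2m$.

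The $n\Delta$ variables of $P$ split into $n$ vertex groups $X_v = \{x_{v,c}\}_{c\in[\Delta]}$, and any valid monomial contains exactly $\deg(v)$ variables from $X_v$. The idea of \emph{partition sieving} is to run a Koutis--Williams-style multilinear detection but restrict its underlying inclusion--exclusion to vary only over per-group subsets of the correct size $\deg(v)$, rather than over all subsets of the $2m$ active variables---thereby beating the $O^*(2^{2m})$ cost of a naive Williams detection and also improving on the $O^*(2^m)$ baseline of \cite{BjorklundHKK17narrow}. An amortized analysis exploiting $\sum_v \deg(v) = 2m$, together with the color symmetry of \edgecolor{} (which allows a preprocessing step that fixes the coloring of a well-chosen matching), should give a uniform per-vertex saving summing to $2^{3n/5}$; polynomial space follows by streaming the sieve one summand at a time.

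The main obstacle will be obtaining this \emph{uniform} $2^{3n/5}$ saving for all degree sequences: a bound of the form $\prod_v \binom{\Delta}{\deg(v)}$ per iteration is strong for regular graphs but degrades when the graph is far from regular. Handling irregular graphs will likely require combining the partition sieve with an \edgecolor{}-specific preprocessing step---for instance, fixing a matching that saturates all degree-$\Delta$ vertices to receive a single color before invoking the sieve on the residual \listedgecolor{} instance---so that degree-sum slack is converted into uniform per-vertex savings.
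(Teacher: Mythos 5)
Your high-level plan --- encode $\Delta$-edge-colorability as multilinear monomial detection and then sieve by exploiting a vertex-wise partition of the variables --- is in the same family as the paper's argument, but the two concrete ingredients that make the bound $2^{m-3n/5}$ work are missing, and one choice you make actually precludes the target runtime.

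First, your polynomial $P(x)=\prod_{e=\{u,v\}}\sum_c x_{u,c}x_{v,c}$ has total degree $2m$, not $m$: each edge contributes two vertex-color variables. Every monomial uses exactly $\deg(v)$ variables from the group $X_v$, so the best a partition-by-vertex sieve could hope for is savings of one per part, i.e.\ $O^*(2^{2m-n})$. But a connected graph has $m\ge n-1$, so $2m-n\ge m-1$; even a \emph{perfect} partition sieve on this polynomial cannot beat $O^*(2^m)$, let alone reach $2^{m-3n/5}$. The paper instead builds a polynomial of degree $m$ in a set of \emph{edge} variables $\{x_e\}$ (via the Ishikawa--Wakayama Pfaffian identity, which enumerates $k$-tuples of matchings with the per-vertex degree constraints already enforced by a matroid), so that the full edge monomial $\prod_e x_e$ is what is being sieved for. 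Your starting polynomial is therefore not a cosmetic variation; it overshoots the degree by a factor of two in a way that no amount of clever sieving can recover.

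Second, the $3n/5$ does not come from amortizing $\sum_v\deg(v)=2m$ or from a color-symmetry preprocessing that fixes a matching; it comes from a \emph{dominating set}. After deleting degree-one vertices (harmless for $\chi'\le\Delta$, Lemma \ref{lem:delete-deg-1}), the residual graph has minimum degree two and hence, for $n\ge 8$, a dominating set $D$ of size $\le 2n/5$ (\Cref{lem:dominating-set-2}), which the paper shows how to find within the time budget (\Cref{lem:dominating-set}). The sieve is then run only over the edges $E'$ from $D$ to $V\setminus D$, partitioned by their $V\setminus D$-endpoint; compatibility of the polynomial with this partition matroid (positive capacities, exactly $\deg_{E'}(v)$ per part) is what makes the odd-support sieve of \cite{determinantalsieving} drop from $2^{|E'|}$ to $2^{|E'|-|V\setminus D|}$, and coefficient extraction over the remaining edges multiplies this by $2^{m-|E'|}$, for a net $2^{m-|V\setminus D|}\le 2^{m-3n/5}$. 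Without the dominating-set mechanism, the per-vertex ``savings'' you appeal to has no source, and your own discussion of irregular graphs correctly flags that the proposal as stated does not close. Finally, ``partition sieving'' in the paper is not a restriction of Williams-style inclusion--exclusion to per-group subsets; it is a reduction to odd-support determinantal sieving against a rank-$(d-p)$ partition matroid, together with a compatibility argument (\Cref{claim:linear}) showing that any monomial whose odd support spans that matroid must in fact be multilinear. That precise mechanism, and the compatibility condition it relies on, is the technical core that your sketch leaves undefined.
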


If the input graph has average degree $d$,
then $m = dn/2$, so
\Cref{thm:edgecolor} 
 equivalently states that \edgecolor{} can be solved in 
$O^*(2^{(1-\eps)m})$ time for $\eps = \frac 6{5d}$.
In comparison, the current fastest algorithm for \vertexcolor{} on graphs with average degree $d$ takes $O^*(2^{(1-\delta)n})$ time, where $\delta = (1/d)^{\Theta(d^3)}$
\cite[Lemma 4.4 and Theorem 5.4]{DBLP:journals/talg/GolovnevKM16}.
Prior to our work, the fastest algorithm for \edgecolor{} in the special case of regular graphs (i.e., graphs where all vertices have the same degree)  took $O^*(2^{m-n/2})$ time \cite[Theorem 6]{BjorklundHKK17narrow}. 
The regularity assumption simplifies the \edgecolor{} problem significantly, as in this case the problem reduces to finding a collection of $(\Delta - 1)$ mutually disjoint perfect matchings, which together consist of only $(m - n/2)$ edges. 
\Cref{thm:edgecolor} improves upon this runtime with an algorithm that succeeds on \emph{all} undirected graphs, not just regular graphs. 

In the case of regular graphs, we obtain improvements beyond \Cref{thm:edgecolor} when the graphs have high degree.
We say a graph is $d$-regular if all its nodes have degree $d$. 
Given a positive integer $k$, let 
    \begin{equation}
    \label{eq:harmonic}
    H_k = \sum_{j=1}^k \frac 1j
    \end{equation}
denote the $k^{\text{th}}$ Harmonic number. 
We prove the following result. 

\begin{restatable}{theorem}{regular}
\label{thm:regular}
For each integer $d\ge 6$, there is a randomized algorithm that solves \edgecolor{} with high probability and one-sided error on $d$-regular graphs in $O^*(2^{m - \alpha_d n})$ time and polynomial space, for $\alpha_d = 1 - H_{d+1}/(d+1)$.
\end{restatable}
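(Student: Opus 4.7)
The plan is to reduce to $1$-factorization and then refine the partition-sieving framework used to prove Theorem~\ref{thm:edgecolor}. Since $G$ is $d$-regular, Vizing's theorem gives $\chi'(G)\in\{d,d+1\}$, and $\chi'(G)=d$ if and only if $E(G)$ decomposes into $d$ pairwise edge-disjoint perfect matchings. With $m=dn/2$, the goal becomes an algorithm that decides the existence of such a $1$-factorization in time $O^*(2^{m-\alpha_d n})$.

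Building on the construction used for Theorem~\ref{thm:edgecolor}, I would define a polynomial $P$ whose multilinear monomials encode ordered tuples $(M_1,\dots,M_d)$ of pairwise disjoint perfect matchings covering $E(G)$, using a Pfaffian-style indicator for each color class---valid here because every class must saturate every vertex, not merely be a matching as in the general-graph setting. The resulting polynomial factors cleanly as a product over the $d$ color classes, each contributing an independent perfect-matching generating function in the residual edge variables.

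To reach the harmonic exponent, I would process the color classes in order $j=1,\dots,d$ and apply partition sieving at each stage. At stage $j$, disjointness with the first $j-1$ matchings forces $M_j$ into a $(d-j+1)$-regular residual subgraph, so the sieve only needs to resolve roughly $n/(d-j+2)$ vertex blocks. Amortizing the per-stage savings against a symmetrization over the $d+1$ candidate color classes guaranteed by Vizing (allowing any one of the matchings to be treated implicitly, in the spirit of the $m-n/2$ trick of \cite{BjorklundHKK17narrow}) yields the harmonic-sum contribution $H_{d+1}/(d+1)$ to the exponent, and thus the claimed savings $\alpha_d n = n - nH_{d+1}/(d+1)$. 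The restriction $d\ge 6$ arises at the threshold where this refined analysis beats both the $3n/5$ savings of Theorem~\ref{thm:edgecolor} and the $n/2$ savings of \cite{BjorklundHKK17narrow}.

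The main obstacle is showing that the $d$ nested sieves compose without reintroducing the factors we wish to cancel: each inclusion--exclusion layer produces sign-weighted sums of partial colorings, and I must argue that contributions from non-proper edge-colorings cancel across all stages. The key technical step should be a symmetrization over the orderings of the color classes, which together with the implicit-matching averaging converts the product of per-stage savings into the compact harmonic expression $H_{d+1}$.
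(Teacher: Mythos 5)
Your proposal misses the actual mechanism behind \Cref{thm:regular}, which is a short corollary of the dominating-set framework already established for \Cref{thm:edgecolor}. The paper does not iterate or nest the sieve over color classes at all. Instead, it invokes \Cref{lem:domset}, which reduces \edgecolor{} to finding a dominating set $D$ and then costs $O^*(2^{m-n+|D|})$, and combines it with the classical Arnautov--Payan bound (\Cref{lem:regular-dom}) stating that every $d$-regular graph has a dominating set of size at most $(H_{d+1}/(d+1))\,n$. That is precisely where the harmonic number enters --- it is a structural fact about dominating sets in regular graphs, not an artifact of amortizing nested inclusion--exclusion layers. The only remaining work is to actually \emph{find} a small enough dominating set, which the paper does by brute force in $O^*(2^n)$ time; the restriction $d\ge 6$ ensures $m=dn/2\ge 3n$ so that this brute-force search is absorbed into the claimed bound $O^*(2^{m-\alpha_d n})$ (and also that the result improves on \Cref{thm:edgecolor}), not, as you suggest, a threshold emerging from a stage-by-stage analysis.

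Beyond taking a different route, your sketch has concrete gaps that would block formalization. The polynomial $P=\Pf BAB^\top$ does \emph{not} factor into a product of independent perfect-matching generating functions: the block-diagonal $A$ does, but conjugation by $B$ entangles the color classes, and indeed such a factorization would forfeit the disjointness constraint you need. The ``$d+1$ candidate color classes guaranteed by Vizing'' is a misreading --- Vizing gives an upper bound of $d+1$ on $\chi'$, not $d+1$ interchangeable color classes, and a $1$-factorization has exactly $d$ classes. Most importantly, no mechanism is offered for why ``processing color classes in order $j=1,\dots,d$'' and sieving over residual $(d-j+1)$-regular subgraphs would yield a per-stage savings of $n/(d-j+2)$, nor how a ``symmetrization over orderings'' would compress the product of such savings into $H_{d+1}/(d+1)$; this is the crux of your argument and it is left entirely heuristic. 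You should abandon the nested-sieve idea here and instead recognize that once \Cref{lem:domset} is in hand, the regular case is just a matter of quoting the dominating-set bound and checking the brute-force search fits the budget.
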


For example for $d=6$, \Cref{thm:regular} shows that we can solve \edgecolor{} in 6-regular graphs in $O^*(2^{m - 0.62n})$ time, slightly faster than the runtime presented in \Cref{thm:edgecolor}.
The savings in the exponent are better for larger $d$, approaching a runtime of $O^*(2^{m-n})$ as the degree $d$ grows
since we have $\alpha_d = 1 - \Theta((\log d)/d)$. 

We also consider a generalization of \edgecolor{} called \listedgecolor{}.
In this problem, we are given 
an undirected graph $G$ with $m$ edges as before, together with a positive integer $k$ and a list of possible colors $L_e\sub \set{1, \dots, k}$
for each edge $e$. 
We are tasked with determining whether $G$ admits a proper edge coloring, which assigns each edge $e$ a color from the list $L_e$.
If we set $k = \Delta$ to be the maximum degree of $G$ and set $L_e = \set{1, \dots, k}$ for each edge $e$, we recover the standard \edgecolor{} problem.
Using more sophisticated arguments, we generalize \Cref{thm:edgecolor} to  the \listedgecolor{} problem.

\begin{restatable}{theorem}{mainwlist}
\label{thm:main}
There is a randomized algorithm that solves {\normalfont\textsc{List Edge Coloring}} with high probability and one-sided error in $O^*(2^{m - 3n/5})$ time and polynomial space.
\end{restatable}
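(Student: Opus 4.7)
My plan is to adapt the algebraic framework used to prove \Cref{thm:edgecolor}: encode list edge colorings as multilinear monomials in a carefully constructed polynomial $P$, and then detect such monomials using partition sieving. For each color $c \in \{1,\dots,k\}$, let $G_c$ denote the subgraph of $G$ spanned by the edges $e$ with $c \in L_e$. Introduce a variable $x_e$ for each edge $e \in E$, and for each color $c$ let
\[
  Q_c(x) \;=\; \sum_{M} \prod_{e \in M} x_e,
\]
where the sum ranges over matchings $M$ of $G_c$. Each factor $Q_c$ can be manipulated symbolically in polynomial space using Pfaffian-based matching polynomial techniques over a suitable extension field. I would define $P(x) = \prod_{c=1}^{k} Q_c(x)$. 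Since every variable appears with degree at most one in each factor, the multilinear monomial $\prod_{e \in E} x_e$ appears in the expansion of $P$ if and only if one can choose a matching $M_c$ from each $Q_c$ so that $M_1,\dots,M_k$ partition $E$, which is exactly a valid list edge coloring of $G$.

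A direct application of the Koutis--Williams / Bj\"orklund--Husfeldt--Kaski--Koivisto group-algebra test detects a multilinear monomial of degree $m$ in $P$ in $O^*(2^m)$ time, only matching the previous bound. To obtain the additional $2^{3n/5}$ speedup, I would invoke the partition sieving machinery developed earlier in the paper for \Cref{thm:edgecolor}. The essential structural input to partition sieving is that each monomial of each factor $Q_c$ is supported on a matching, so the variables contributing to such a monomial pair up vertices into disjoint edges. This forces the monomials of $P$ to be tightly constrained by the vertex set, and it should allow sieving over a family of objects indexed by vertices rather than by edges, trading a $2^m$ brute-force fingerprint for a $2^{m-3n/5}$ fingerprint while preserving one-sided correctness with high probability.

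The main obstacle is to confirm that the partition sieving argument used for uniform color lists in \Cref{thm:edgecolor} still delivers the same $3n/5$ exponent savings when the factors $Q_c$ live on different subgraphs $G_c$ of varying sizes. I expect this to go through because partition sieving only needs the per-color, per-vertex incidence bound (at most one incident edge per vertex in each matching), which holds for every subgraph of $G$; the global degree structure of $G$ is what drives the savings, not any uniformity across colors. The most delicate point is to check that randomization needed to make the sieve work is compatible with an arbitrary color support pattern, so that the fingerprint of $P$ remains nonzero with high probability whenever some color class witnesses a valid list coloring. Combining this adapted sieve with the polynomial-space evaluation of each $Q_c$ via Pfaffians then yields an $O^*(2^{m-3n/5})$-time, polynomial-space randomized algorithm with one-sided error, establishing \Cref{thm:main}.
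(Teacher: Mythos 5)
Your proposal has two genuine gaps, both of which are precisely the points where the paper does nontrivial work beyond the \edgecolor{} case.

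\textbf{The naive product $P=\prod_c Q_c$ is not compatible with the partition matroid.} Partition sieving (\Cref{theorem:faster-sieving}) does not merely require that the \emph{target} monomial $\prod_{e\in E}x_e$ have the right degrees; it requires that \emph{every} monomial of the sieved polynomial have degree exactly $c_v$ on each part $\partial(v)$ (the compatibility condition). Your $P$ enumerates all $k$-tuples of matchings $(M_1,\dots,M_k)$ with $M_c$ a matching in $G_c$, with no constraint on how many of the $M_c$ touch a given vertex. So after extracting the coefficient of $\prod_{e\notin E'}x_e$, the resulting polynomial will contain monomials with wildly varying degrees on each part $\partial(v)$, and the odd-support argument inside partition sieving breaks down (it relies on knowing the exact degree of each monomial on each part to conclude that an odd-support basis forces a multilinear monomial). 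The paper avoids this by not taking a bare product of Pfaffians: it instead uses the Ishikawa--Wakayama formula $\Pf BAB^\top$ with $B$ a representation of a rank-$2m$ matroid $\watroid = \bigoplus_v \matroid_v$, and it is precisely the factor $\det B[\cdot,U]$ that kills all terms where some vertex $v$ is not matched exactly $\deg_G(v)$ times (\Cref{lem:generating-function}, condition 2). That degree constraint is not a convenience; it is what makes \Cref{lem:compatibility} true.

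\textbf{You have no mechanism to handle unit-degree vertices, which is the main new obstacle for \listedgecolor{}.} The $3n/5$ savings comes from a dominating set of size $\le 2n/5$, which is only guaranteed in graphs of minimum degree two (\Cref{lem:dominating-set-2}). For \edgecolor{}, \Cref{lem:delete-deg-1} lets you delete degree-one vertices freely. For \listedgecolor{} this fails: a pendant edge constrains the colors available at its other endpoint, so you cannot just throw it away. The paper spends all of \Cref{subsec:list-degree} on this: it first reduces (via tree pruning, \Cref{lem:prune} and \Cref{lem:unique-connection}, and the polynomial-time algorithm on trees, \Cref{prop:tree}) to the case where the deleted vertices form stars; then for each vertex $v$ of $\Gnew$ it replaces the uniform matroid $\matroid_v$ by an \emph{extension matroid} (\Cref{lem:extension}) built from a transversal matroid, whose bases encode exactly which color sets at $v$ leave the pendant star $G_v$ list-colorable. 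The final proof then splits into two cases, $n_1 \ge n/5$ (handled by \Cref{prop:main}) versus $n_1 < n/5$ (handled via \Cref{lem:dominating-set}). None of this appears in your sketch, and without it you cannot assume minimum degree two and hence cannot reach the $2n/5$ dominating-set bound that gives the $3n/5$ exponent.
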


Prior to our work, no algorithm was known for \listedgecolor{} which simultaneously ran in faster than $O^*(2^m)$ time and used polynomial space. 

\subsection{Our Techniques}

Our algorithms for \edgecolor{} and \listedgecolor{} are algebraic,
and involve reducing these problems to certain \emph{monomial detection} tasks,
by now a common paradigm in graph algorithms.
To achieve the $O^*(2^{m-3n/5})$ runtime and polynomial space bounds for these problems, we combine  and improve techniques in polynomial sieving, matroid constructions, and enumeration using matrices. 

The main technical ingredient in our work builds off the \emph{determinantal sieving} technique introduced in \cite{determinantalsieving}.
Given a homogeneous multivariate polynomial $P(X)$ of degree $k$ and a linear matroid $\matroid$ on $X$ of rank $k$, this technique allows one to test  in $O^*(2^k)$ time whether $P(X)$ contains a multilinear monomial (i.e., a monomial where each variable has degree at most one) that forms a basis in $\matroid$.
We recall the formal definition of matroids in \Cref{sec:prelim}.
For the purpose of this overview, a matroid $\matroid$ is a family of subsets of variables of $P$ satisfying certain properties,
and a basis in $\matroid$ is a set in this family whose size equals the rank $k$.
In our applications, we construct $P$ to enumerate certain structures in the input graph.
Finding a multilinear monomial in $P$ corresponds to identifying a particularly nice structure in the graph (e.g., a solution to the \edgecolor{} problem). 
Determinantal sieving is useful because one can pick $\matroid$ in such a way that identifying a monomial in $P$ that forms a basis in $\matroid$ recovers extra information about the relevant structures in the graph.

We build off determinantal sieving and introduce the \emph{partition sieving} method.
This technique involves a \emph{partition matroid} $\datroid$, which is a simple type of matroid defined using an underlying partition of the variable set of $P$. 
Given a partition matroid $\datroid$ whose partition has $p$ parts, and a polynomial $P$ \emph{compatible} with~$\datroid$ 
(``compatible'' is a technical condition defined in \Cref{sec:polynomial-sieving}---intuitively, it requires that monomials in $P$ have degrees respecting the partition defining $\datroid$), we can  test whether $P$ contains a multilinear monomial forming a basis in $\datroid$ in $O^*(2^{k - p})$ time and polynomial space.
This offers an exponential speed-up over determinantal sieving.
To apply partition sieving, we need to carefully design both the polynomial $P$ and the partition matroid $\datroid$ to be compatible in our technical framework.

We design the polynomial $P$ using the notion of the Pfaffian of a matrix. 
If $A$ is a symbolic skew-symmetric adjacency matrix of $G$,
then its Pfaffian $\Pf A$ is a polynomial that enumerates the perfect matchings in $G$. 
The \edgecolor{} problem may be rephrased as asking whether the edge set of $G$ can be partitioned into a collection of $k$ matchings.
One can design a polynomial that enumerates $k$ matchings by computing a product of $k$ Pfaffians of adjacency matrices.
However, this simple construction does not meet the technical conditions we need to employ partition sieving. 
Instead we utilize the Ishikawa-Wakayama formula \cite{IshikawaW95}, 
a convolution identity for Pfaffians and determinants, 
to design a polynomial that enumerates  matchings which  satisfy certain degree constraints.
Specifically, the formula lets us construct a polynomial $P$ whose monomials correspond to $k$-tuples of matchings in $G$, where each vertex in $G$ appears as an endpoint in exactly $\deg_G(v)$ of the matchings in the tuple.

We design the partition matroid $\datroid$ in terms of a \emph{dominating set} of the input graph.
A subset of vertices $D$ is a dominating set in $G$ if every vertex in $V\setminus D$ is adjacent to a node in $D$. 
Given a dominating set $D$ in $G$,
we show that it is always possible to construct a partition matroid $\datroid$ with $p = |V\setminus D|$ parts, 
compatible with our enumerating polynomial $P$ (in our proofs, $\datroid$ is actually only compatible with a polynomial obtained from $P$ by the inclusion-exclusion principle, but we ignore that distinction in this overview).
We achieve compatability using the degree condition imposed on $P$, mentioned at the end of the previous paragraph. 
Partition sieving then lets us detect a multilinear monomial in $P$, and thus 
solve \edgecolor{}, in $O^*(2^{m-|V\setminus D|})$ time.

\begin{restatable}{lemma}{domset}
\label{lem:domset}
There is a randomized algorithm that,
given a graph $G$ on $n$ vertices and $m$ edges and a dominating set $D$ of $G$,
solves \edgecolor{} on $G$ with high probability and one-sided error in $O^*(2^{m - n +|D|})$ time and polynomial space.
\end{restatable}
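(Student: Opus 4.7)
The plan is to reduce \edgecolor{} on $G$ to a structured monomial detection problem amenable to partition sieving, following the blueprint sketched in the techniques overview. By Vizing's theorem it suffices to decide whether $\chi'(G) = \Delta$, so I focus on building, for a fixed $k \in \set{\Delta,\Delta+1}$, a test for whether $E(G)$ admits a partition into $k$ matchings.

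First, I would construct a polynomial $P$ over formal edge variables $\set{x_e : e \in E}$, together with auxiliary variables introduced by a Pfaffian-determinant convolution, whose monomials enumerate $k$-tuples of matchings. The starting point is, for each color $c \in [k]$, a symbolic skew-symmetric matrix $A_c$ whose $uv$-entry is $\pm x_{uv}$ for each edge $uv \in E$, so that $\Pf(A_c)$ enumerates perfect matchings weighted by the product of edge variables. Applying the Ishikawa-Wakayama identity, I would combine these Pfaffians into a single polynomial $P$ whose monomials correspond to $k$-tuples of (not necessarily perfect) matchings in which every vertex $v$ is covered in exactly $\deg_G(v)$ of the coordinates. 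Since $\sum_v \deg_G(v) = 2m$ and each matching edge contributes two endpoints, every such tuple uses exactly $m$ edge-slots; a monomial is multilinear in the edge variables precisely when the matchings form an edge-partition of $G$, i.e., when they realize a proper $k$-edge coloring.

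Next, I would build the partition matroid $\datroid$ from the dominating set. Enumerating $V\setminus D = \set{v_1,\ldots,v_{n-|D|}}$, I would form one part $B_i$ per outside vertex $v_i$ whose elements are the auxiliary variables (or a suitable subset of incident edge variables) that record which of the $k$ colors cover $v_i$; the partition matroid declares a set independent if it contains at most the prescribed capacity from each $B_i$. The key compatibility condition---that every monomial of $P$ respects the partition---is exactly what the degree constraint supplied by Ishikawa-Wakayama buys, since each monomial touches $v_i$ in precisely $\deg_G(v_i)$ positions. The dominating-set hypothesis enters here in handling the vertices of $D$: because every outside vertex has a neighbor in $D$, the bookkeeping on $D$ can be carried implicitly by the restrictions of the matchings to $V\setminus D$, so $D$ contributes no parts of its own to the partition. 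Partition sieving then detects a multilinear basis monomial in $O^*(2^{m-(n-|D|)}) = O^*(2^{m-n+|D|})$ time and polynomial space, as required.

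The main obstacle will be the joint design of $P$ and $\datroid$: simultaneously arranging the Ishikawa-Wakayama convolution so that (i) multilinear monomials faithfully correspond to proper edge colorings, (ii) the per-vertex degree constraint holds with equality and not merely as an inequality, and (iii) $\datroid$ is compatible with $P$ using only $n-|D|$ parts. Once these three compatibility conditions are in place, correctness and runtime follow from the partition sieving guarantee stated earlier, and a standard polynomial identity testing argument over a large enough field delivers the high-probability, one-sided error bound.
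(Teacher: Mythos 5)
Your high-level blueprint matches the paper's: build a Pfaffian-based polynomial via the Ishikawa--Wakayama identity that enumerates $k$-tuples of matchings with the per-vertex coverage constraint, tie multilinear monomials to proper colorings, and then sieve with a partition matroid indexed by $V\setminus D$ to save $|V\setminus D| = n-|D|$ in the exponent. That is indeed what Lemma \ref{lem:domset} reduces to, via Lemmas \ref{lem:edgecolor-char} and \ref{lem:template}.

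However, there is a concrete gap in how you make $P$ compatible with the partition matroid, and it is exactly at the obstacle you flag as unresolved. The paper's polynomial $P$ is homogeneous of degree $m$ in the edge variables $X$, but the partition matroid $\datroid$ that achieves $n-|D|$ parts lives only on the variables $X' = \set{x_e : e \in E'}$ where $E'$ is the set of edges between $D$ and $V' = V\setminus D$, and has rank $|E'| < m$. The compatibility condition for partition sieving (\Cref{theorem:faster-sieving}) requires the rank of the matroid to equal the degree of the polynomial, so partition sieving cannot be applied to $P$ directly. The missing ingredient is a \emph{coefficient extraction} step (\Cref{lemma:coefficient-extraction}): one first extracts the coefficient of $\prod_{e\in E\setminus E'} x_e$ to obtain a polynomial $Q$ homogeneous of degree $|E'|$ in $X'$. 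Each evaluation of $Q$ costs $O^*(2^{m - |E'|})$ evaluations of $P$, and partition sieving on $Q$ costs $O^*(2^{|E'| - |V'|})$ evaluations of $Q$, giving $O^*(2^{m - |V'|})$ overall. Your proposal skips this reduction and asserts the bound directly, which does not follow from the sieving theorem as stated.

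Two further imprecisions worth fixing. First, the parts of $\datroid$ are $\partial(v) = \set{e\in E' : e\ni v}$ for $v\in V'$ with capacities $c_v = \deg_{E'}(v)$ (not $\deg_G(v)$); your hedged description ``auxiliary variables or a suitable subset of incident edge variables'' and the claim that the degree is $\deg_G(v_i)$ do not match. The degree constraint from Ishikawa--Wakayama gives $\deg_G(v)$ for the full set of edges incident to $v$, and it is precisely the extraction of the $E\setminus E'$ edge variables that cuts this down to $\deg_{E'}(v)$ restricted to $\partial(v)$, which is what Claim \ref{lem:compatibility} verifies. Second, the role of the dominating-set hypothesis is that it guarantees every $v\in V'$ has a neighbor in $D$, hence $\partial(v)\neq\emptyset$ and $c_v \ge 1$; this strict positivity of capacities is a hypothesis of partition sieving, not a bookkeeping convenience. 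Your phrasing (``the bookkeeping on $D$ can be carried implicitly'') gestures at the right direction but does not pin this down.
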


In other words, by finding smaller dominating sets in $G$, we can solve \edgecolor{} faster. 
In polynomial time, it is easy to construct a dominating set of size at most $n/2$ in any connected graph $G$ with $n$ vertices.
Consequently, the partition sieving framework lets us solve \edgecolor{} in $O^*(2^{m-n/2})$ time.
We obtain faster algorithms using the fact that if a graph with $n\ge 8$ nodes has minimum degree two, then it contains a dominating set of size at most $2n/5$ \cite{mccuaig1989domination}. 
We make this result effective, showing how to \emph{find} such a dominating set in $O^*(2^{m-3n/5})$ time and polynomial space. 
When solving \edgecolor{}, it turns out one can assume without loss of generality that the graph has minimum degree two, and so this framework yields a $O^*(2^{m-3n/5})$ time algorithm for the problem. 
The situation is far more complex for \textsc{List Edge Coloring},
where unit-degree vertices cannot be removed freely.
For this problem, we construct a more complicated polynomial $P$, by enforcing additional matroid conditions in the Ishikawa-Wakayama formula.
Intuitively, we identify a subgraph $\Gnew$ of minimum degree two in $G$, and use $P$ to determine whether $\Gnew$ admits a list edge coloring which can be \emph{extended} to all of $G$.

\paragraph*{Organization}
In 
\Cref{sec:prelim} we review 
notation, basic assumptions, and
useful facts about polynomials, matrices, and matroids.
In \Cref{sec:polynomial-sieving} we introduce our partition sieving method for monomial detection. 
In \Cref{sec:template} we present a generic algorithmic template for solving coloring problems using polynomials. 
In \Cref{sec:edge-color} we apply this framework to design our algorithms for \edgecolor{} and prove \Cref{thm:edgecolor,thm:regular}, and 
in \Cref{sec:list-color} we apply this framework to design our algorithm for \listedgecolor{} and prove \Cref{thm:main}.
Although \Cref{thm:main} subsumes \Cref{thm:edgecolor}, we include proofs of both results separately, since the proof of \Cref{thm:main} is significantly more complicated. 
We conclude in \Cref{sec:conclusion} with a summary of our work and some open problems. 
See \Cref{sec:related} for a discussion of additional related work, 
\Cref{sec:construction} for omitted proofs concerning matroid and Pfaffian constructions, and 
\Cref{sec:dom} for our dominating set algorithms.

\section{Preliminaries} 
    \label{sec:prelim}

\paragraph*{General Notation}

Given a positive integer $a$, we let $[a] = \{1, \dotsc, a\}$
denote the set of the first $a$ consecutive positive integers. 
Given a set $S$ and positive integer $k$, we let $\binom{S}{k}$ denote the family of subsets of $S$ of size $k$. 
Given a positive integer $k$,
we let $H_k$ denote the $k^{\text{th}}$ Harmonic number, defined in \Cref{eq:harmonic}.

\paragraph*{Graph Notation and Assumptions}

Throughout, we consider graphs which are undirected.
We let  $n$ and $m$ denote the number of vertices and edges in the input graphs for the problems we study. 
We assume that the input graphs are connected, so that $m\ge n-1$.
This is without loss of generality, since when solving \edgecolor{} and \listedgecolor{} on disconnected graphs, it suffices to solve the problems separately on each connected component. 
Given a graph $G$ and vertex $v$, we let $\deg_G(v)$ denote the number of neighbors of $v$ (i.e., the degree of $v$ in $G$). 
For the \listedgecolor{} problem where each edge is assigned a list of colors from a subset of $[k]$,
we assume that $\deg_G(v)\le k$ for all vertices $v$ in $G$, since if some vertex $v$ has degree greater than $k$, its incident edges must all be assigned distinct colors in a proper edge coloring, and the answer to the \listedgecolor{} problem is trivially no. 

We let $\Pi(G)$ denote the set of perfect matchings of $G$---partitions of the vertex set of $G$ into parts of size two, such that each part consists of two adjacent vertices. 

\paragraph*{Dominating Sets}

A dominating set $D$ in a graph $G$ is a subset of vertices of $G$ such that every vertex in $G$ is either in $D$ or adjacent to a node in $D$. 
We collect some useful results relating to finding small dominating sets in graphs. 

\begin{restatable}[Simple Dominating Set]{lemma}{simpledomset}
    \label{lem:basic-dom}
    There is a polynomial time algorithm which takes in
    a connected graph on $n$ nodes and outputs a dominating set for the graph of size at most $n/2$.
\end{restatable}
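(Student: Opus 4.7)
The plan is to exploit the bipartition of a spanning tree. Since $G$ is connected, I can compute a spanning tree $T$ of $G$ in polynomial time, say by breadth-first search from an arbitrary root. A tree is bipartite, so I can partition its vertex set into two color classes $A$ and $B$ in linear time by alternating colors along the BFS layers, ensuring every edge of $T$ has one endpoint in each class.

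Since $|A| + |B| = n$, at least one of the two classes has size at most $n/2$; take $D$ to be this smaller class. I claim $D$ is a dominating set of $G$. Indeed, any $v \notin D$ lies in the other color class; assuming $n \geq 2$, the vertex $v$ has at least one neighbor in $T$ (because $T$ is connected with at least two vertices), and by bipartiteness this tree-neighbor must lie in $D$. Hence $v$ is adjacent in $T$, and therefore in $G$, to some vertex of $D$. The degenerate case $n = 1$ is trivial, since the lone vertex is its own dominating set.

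The resulting algorithm---compute a spanning tree, 2-color it, output the smaller class---clearly runs in polynomial time. There is essentially no obstacle here: the only point worth articulating is that because $T$ spans $G$ and is connected, every vertex outside one of its color classes must have a tree-neighbor in that class, and that tree-neighbor is in particular a neighbor in $G$. This yields the claimed $n/2$ bound uniformly across all connected graphs.
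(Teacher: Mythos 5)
Your proof is correct, and it takes a genuinely different route from the paper's. The paper proves this bound via Ore's classical argument: greedily shrink $V$ to a \emph{minimal} dominating set $D$ (repeatedly deleting vertices while the dominating property survives), observe that minimality forces every vertex of $D$ to have a neighbor outside $D$, so $V \setminus D$ is also dominating, and return the smaller of $D$ and $V \setminus D$. You instead build a spanning tree, take its bipartition $(A,B)$, and return the smaller side; every vertex in one side has a tree-neighbor in the other because the tree is connected and bipartite. Both arguments give the $n/2$ bound; yours has the small advantage of being more directly constructive (one BFS and a comparison, no iterative deletion loop, though both are plainly polynomial), while the paper's argument is the one that naturally generalizes the statement ``the complement of a minimal dominating set is dominating,'' which has its own uses elsewhere. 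One cosmetic remark: your $n=1$ disposal (``the lone vertex is its own dominating set'') actually produces a set of size $1 > n/2$; strictly speaking the lemma's bound fails for $n=1$, a degenerate case the paper's proof also glosses over and which is irrelevant to the application, but worth stating as an exclusion rather than as handled.
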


\Cref{lem:basic-dom} is due to Ore~\cite{ore1962theory},
and we include a proof in \Cref{sec:dom}.

\begin{lemma}[Dominating Sets in Dense Graphs \cite{blank1973estimate,mccuaig1989domination}] \label{lem:dominating-set-2}
If $G$ has $n\ge 8$ nodes, and every vertex in $G$ has degree at least two,
then $G$ has a dominating set of size at most $2n/5$.
\end{lemma}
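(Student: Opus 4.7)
The plan is to prove this by strong induction on $n$, following the classical approach of McCuaig and Shepherd. First, I would verify the bound by direct inspection for all connected $G$ with $\delta(G) \ge 2$ and $n$ in some small initial range; the hypothesis $n \ge 8$ precisely rules out the exceptional graphs (most notably $C_7$, for which $\gamma(C_7) = 3 > 2 \cdot 7/5$) that would otherwise violate the bound.

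For the inductive step, I would seek a \emph{reduction}: a set $S \subseteq V(G)$ of exactly $5$ vertices together with a pair of vertices $D_S \subseteq V(G)$ of size $2$ that dominates every vertex in $S$, such that the reduced graph $G - S$ still has minimum degree at least $2$ and has at least $8$ vertices. If such a reduction exists, applying the induction hypothesis to $G - S$ yields a dominating set $D'$ of size at most $2(n-5)/5$, and then $D' \cup D_S$ dominates $G$ with
\[
|D' \cup D_S| \le \frac{2(n-5)}{5} + 2 = \frac{2n}{5}.
\]

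To produce such a reduction, I would fix a vertex $v$ of minimum degree in $G$ and examine the structure within distance two from $v$. The cases divide according to whether $v$ lies on a short cycle (triangle, $4$-cycle, or induced $5$-cycle), whether its neighbors themselves have degree~$2$, and how their neighborhoods intersect; in each case a natural local $5$-vertex set $S$ emerges whose $2$-vertex cover $D_S$ is readily identified. For instance, if $v$ lies on an induced $5$-cycle all of whose vertices have degree $2$ in $G$, then $S$ is the cycle itself and $D_S$ consists of any two non-adjacent vertices on it.

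The main obstacle is ensuring that $G - S$ still has minimum degree $\ge 2$ so that the inductive hypothesis applies. Naively deleting five vertices could leave some vertex in $G - S$ with degree $0$ or $1$, and one must show that either $S$ can be chosen to avoid this, or that the deficit can be repaired (for instance, by absorbing offending vertices into $S$ and extending $D_S$ correspondingly) while maintaining the ratio $|D_S|/|S| \le 2/5$. Handling these corner cases---and verifying that no residual configuration forces $G$ to be one of the forbidden graphs on at most $7$ vertices---is the heart of the McCuaig--Shepherd argument.
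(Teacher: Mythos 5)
The paper does not prove this lemma; it is quoted as a known theorem (due to Blank, and independently McCuaig and Shepherd) and used as a black box. Your proposal is a reasonable outline of the inductive strategy behind such bounds, but it stops well short of a proof. The decisive step---showing that every connected graph of minimum degree two on at least eight vertices admits a five-vertex set $S$ dominated by two vertices, with $G-S$ again in the hypothesis class---is exactly the casework you defer, and that casework constitutes the bulk of the original arguments; ``a natural local $5$-vertex set emerges'' is a hope, not a demonstration.

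Your plan also leaves structural holes unaddressed. Removing $S$ may disconnect $G$, so the inductive hypothesis would have to be applied component by component, and you must then argue that no component collapses into one of the small exceptional graphs on at most seven vertices. You need $n - 5 \ge 8$ to invoke the hypothesis on $G-S$, so the base case must be checked by hand for $8 \le n \le 12$, not merely $n \le 7$. And the proposed degree-repair device (``absorb offending vertices into $S$ and extend $D_S$'') is not carried out and could break the invariant $|D_S|/|S| \le 2/5$ that the induction relies on. As written, the proposal argues that a proof of this general shape is plausible; it does not establish the lemma.
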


\begin{lemma}[Dominating Sets in Regular Graphs  \cite{arnautov1974estimation,payan1975nombre}]
    \label{lem:regular-dom}
    Any $d$-regular graph on $n$ vertices has a dominating set of size at most $\grp{H_{d+1}/(d+1)}n$.
\end{lemma}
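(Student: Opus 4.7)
The plan is to combine a simple linear-programming observation with the classical approximation analysis of greedy set cover. First, I would reformulate minimum dominating set on $G$ as a set cover instance whose universe is $V(G)$ and whose collection of sets is $\{N[v] : v \in V(G)\}$, one closed neighborhood per vertex. A subfamily $\{N[v] : v \in S\}$ covers the universe precisely when $S$ dominates $G$, and since $G$ is $d$-regular, every set $N[v]$ has size exactly $d+1$.

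Next, I would produce a small feasible fractional cover to bound the LP relaxation. The uniform assignment $x_v = 1/(d+1)$ satisfies $\sum_{u \in N[v]} x_u = |N[v]|/(d+1) = 1$ for every $v$, so it is a feasible fractional dominating set with total weight $n/(d+1)$. Hence the LP optimum of the set cover instance is at most $n/(d+1)$.

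Finally, I would run the classical greedy set cover algorithm, which iteratively selects the set covering the largest number of currently uncovered elements. A standard dual-fitting argument (Chv\'atal) shows that the greedy output has size at most $H_k$ times the LP optimum, where $k$ is the size of the largest set in the instance, here $k = d+1$. Combining the two bounds yields a dominating set of size at most $H_{d+1} \cdot n/(d+1)$, as desired.

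The main technical step is the greedy analysis: the dual-fitting charging is what pins down the precise harmonic constant $H_{d+1}$, rather than the weaker $1+\ln(d+1)$ factor that falls out of a direct probabilistic rounding of the same fractional solution. The LP feasibility check and the reformulation as set cover are routine.
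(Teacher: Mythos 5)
The paper cites this as a known result of Arnautov and Payan and supplies no proof of its own, so there is nothing to compare your argument against inside the paper. On its own merits, your proof is correct and is a clean route to the stated harmonic bound. The reduction to set cover with the sets $N[v]$ (each of size exactly $d+1$ by $d$-regularity) is exact, the uniform fractional solution $x_v = 1/(d+1)$ certifies LP-opt $\le n/(d+1)$, and Chv\'atal's dual-fitting analysis (equivalently, Lov\'asz's $H_k$ integrality-gap bound for set cover with maximum set size $k$) gives an integral cover of size at most $H_{d+1}\cdot n/(d+1)$. One point worth making explicit for self-containedness: the $H_k$ factor in the greedy/dual-fitting analysis is indeed measured against the LP optimum and indeed uses $k$ equal to the largest set size rather than the universe size, which is exactly what lets you combine it with the fractional bound; you say this but it is the load-bearing step. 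As a side remark, the original Arnautov/Payan arguments are usually presented as a direct counting analysis of the greedy vertex-picking process (tracking how many undominated vertices remain after each choice, yielding a recursion whose solution telescopes to $H_{d+1}$), or as a probabilistic deletion argument giving the slightly weaker $(1+\ln(d+1))/(d+1)$ constant; your LP route is a more modern packaging of the same greedy idea and has the advantage that it isolates the integrality-gap statement, which generalizes immediately to non-regular graphs with minimum degree $d$.
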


\begin{restatable}[Dominating Set Algorithm]{lemma}{domsetalgorithm}
\label{lem:dominating-set}
    Let $G'$ be a graph obtained by starting with $G$, and repeatedly deleting vertices of degree one until no vertices of degree one remain. 
    Suppose at most $n/5$ unit-degree vertices were deleted from $G$ to produce $G'$.
    Then a minimum-size dominating set of $G'$ can be found in $O^*(2^{m - 3n/5})$ time and polynomial space.
\end{restatable}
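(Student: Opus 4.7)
The plan is to first construct $G'$ in polynomial time by iteratively deleting degree-one vertices from $G$. By hypothesis, $n' := |V(G')| \ge 4n/5$, $m' := |E(G')| \ge m - n/5$, and $G'$ has minimum degree at least two. Combined with \Cref{lem:dominating-set-2} (assuming $n' \ge 8$, else the problem is trivial), this implies that any minimum dominating set of $G'$ has size at most $2n'/5$.

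Next, I would exploit the sparse degree structure of $G'$: partition $V(G')$ into $V_{\ge 3}$ (vertices of degree $\ge 3$) and $V_2$ (vertices of degree $2$). A standard degree-sum argument---using $2m' = \sum_v \deg_{G'}(v) \ge 3|V_{\ge 3}| + 2|V_2|$---yields $|V_{\ge 3}| \le 2(m' - n') = 2(m - n)$. The core algorithm enumerates each subset $S \sub V_{\ge 3}$ as a candidate for $D \cap V_{\ge 3}$; for each such $S$, the residual problem on $V_2$ decomposes across ``chains'' of degree-two vertices (paths or cycles in $G' \setminus V_{\ge 3}$), each inheriting boundary conditions from the adjacent $V_{\ge 3}$ vertices in $S$, so the optimal extension of $D$ on each chain can be computed by a one-dimensional dynamic program in polynomial time. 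The overall runtime is $O^*(2^{|V_{\ge 3}|}) \le O^*(2^{2(m-n)})$ in polynomial space, which fits the target $O^*(2^{m - 3n/5})$ whenever $2(m-n) \le m - 3n/5$, i.e., whenever $m \le 7n/5$. For denser inputs the target runtime is comparatively large, and I would fall back on direct brute-force enumeration of subsets $D \sub V(G')$ of size at most $2n'/5$, whose cardinality $\binom{n'}{\le 2n'/5}$ is controlled by a standard entropy estimate and (together with $n' \le n$) fits the target in that regime.

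The main obstacle is coordinating the per-chain dynamic programs: each $v \in V_{\ge 3} \setminus S$ not already dominated by $S$ must be dominated by one of its $V_2$-neighbours, so adjacent chain DPs are not fully independent. I would address this by pre-guessing, for each such undominated $v$, which adjacent chain supplies $v$'s dominator and folding that choice into the chain DP's boundary conditions (charging each guess to the $2^{|V_{\ge 3}|}$ enumeration). A secondary subtlety is verifying that the $V_{\ge 3}$-based enumeration and the brute-force fallback together cover the full density spectrum within the $O^*(2^{m - 3n/5})$ bound, which may require a careful hybrid analysis in the intermediate-density regime where $m$ sits between $7n/5$ and the threshold at which the entropy estimate becomes tight.
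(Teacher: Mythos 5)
Your overall plan mirrors the paper's: construct $G'$, enumerate subsets of the high-degree vertices $V_{\ge 3}$, extend each guess to the degree-two chains by a per-chain procedure, and fall back on a different method when the graph is dense. However, there are two concrete gaps.

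First, your fallback is too weak, and the hybrid analysis you flag as a ``secondary subtlety'' is in fact where the argument fails. Brute-forcing over subsets of $V(G')$ of size at most $2n'/5$ costs about $2^{H(2/5)\,n'}$ with $H(2/5)\approx 0.971$. Writing $n'=n-n_1$, this fits the target $2^{m-3n/5}$ only when $m \ge (0.971)\,n' + 0.6\,n$, which when $n_1$ is small (in particular when $n_1=0$, so $n'=n$) requires roughly $m \ge 1.571\,n$. Meanwhile your $V_{\ge 3}$-enumeration bound $n_3 \le 2(m'-n') = 2(m-n)$ fits the target only for $m \le 7n/5 = 1.4\,n$. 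The regime $1.4\,n < m < 1.571\,n$ (when $n_1$ is small) is covered by neither branch, so the two pieces do not compose. The paper avoids this by replacing the naive enumeration with the polynomial-space $O^*(1.4969^{n'})$ minimum dominating set algorithm of van Rooij and Bodlaender; since $1.4969 < 2^{3/5}$, this runs in $O^*(2^{3n/5})$ and is within budget whenever $m' > 6n/5$, which (together with $n_1 \le n/5$) exactly complements the sparse case.

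Second, your proposed fix for coordinating the chain dynamic programs---pre-guessing, for each vertex $v\in V_{\ge3}\setminus S$ not dominated by $S$, which adjacent chain will dominate it, and ``charging each guess to the $2^{|V_{\ge3}|}$ enumeration''---is not justified. A single such $v$ can be adjacent to up to $\deg_{G'}(v)$ distinct chains, so the total number of guesses is $\prod_v \deg_{G'}(v)$, which is not bounded by $2^{|V_{\ge 3}|}$ and can be super-exponential in $n_3$. The paper resolves this differently: a case analysis on the residues modulo $3$ of the chain lengths and on how many chain endpoints are already dominated by $S$ shows that for almost every chain the optimal choice is forced, and the only chains with a genuine binary choice each interface with at most two still-undominated $V_{\ge3}$ vertices. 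This turns the global coordination problem into a bipartite matching instance (paths on one side, undominated $V_{\ge3}$ vertices on the other, every path of degree at most two), solvable in polynomial time without any additional guessing.
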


\Cref{lem:dominating-set} is proved in \Cref{sec:dom}.

\paragraph*{Finite Field Arithmetic}

Throughout, we work with polynomials and matrices over a field $\mathbb{F}=\FF_{2^\ell}$ of characteristic two, where $\ell = \poly(n)$ is sufficiently large. 
Arithmetic operations over $\FF$ take $\poly(n)$ time. 
All elements $a\in\FF$ satisfy $a^{2^\ell} = a$.
Consequently, 
given any element $a\in\FF$, we can compute its unique square root as  $a^{2^{\ell-1}}$ by repeated squaring in $\poly(n)$ time.

\paragraph*{Polynomials}
Let $P(X)$ be a polynomial over a set of variables $X = \{ x_1, \cdots, x_n \}$.
A monomial $\mathbf{m}$ 
is a product $\mathbf{m} = x_1^{m_1} \cdots x_n^{m_n}$ for some nonnegative integers $m_1, \cdots, m_n$.
A monomial $\mathbf{m}$ is \emph{multilinear} if $m_i \le 1$ for each $i \in [n]$.
We say $\monomial$ appears in the polynomial $P$ if the coefficient of $\monomial$ in $P$ is nonzero.
We define the \emph{support} of $\mathbf{m}$
to be the set of indices
\[\supp(\mathbf{m}) = \{ i \in [n] \mid m_i
> 0 \}\]
of variables which appear with positive degree in $\mathbf{m}$.

Similarly, we define the \emph{odd support} 
of $\mathbf{m}$ to be the set 
\[\osupp(\mathbf{m}) = \{ i \in [n] \mid m_i \equiv 1 \mod 2 \}\]
of indices of variables which appear with odd degree in 
$\mathbf{m}$.
By definition, $\osupp(\monomial) \subseteq \supp(\monomial)$ for all monomials $\monomial$.

The \emph{degree} of a monomial $\monomial = x_1^{m_1}\cdots x_n^{m_n}$ is defined to be $\deg(\monomial) = m_1 + \dots + m_n.$
Given  $S\sub [n]$, we define the degree of $\monomial$ restricted to the variables indexed by $S$ to be 
    \[\sum_{i\in S} m_i.\]
The \emph{total degree} (or just \emph{degree}) of $P$ is defined to be 
\[\deg P = \max_{\monomial} \, \deg(\monomial) \]
where the maximum is taken over all monomials $\monomial$ with nonzero coefficient in $P$.
We say that a polynomial $P$ is \emph{homogeneous} if $\deg(\monomial) = \deg P$ for all monomials $\monomial$ in $P$.

We recall the Lagrange Interpolation formula, which allows one to recover coefficients of a low degree univariate polynomial from a small number of evaluations of that polynomial. 

\begin{proposition}[Lagrange Interpolation]
  \label{lemma:interpolation}
  Let $P(z)$ be a univariate polynomial of degree less than $n$. 
  Suppose that $P(z_i) = p_i$ for distinct $z_1, \cdots, z_n \in \F$.
  Then 
  \[
    P(z)
    = \sum_{i \in [n]} p_i \prod_{j \in [n] \setminus \{ i \}} \frac{z - z_j}{z_i - z_j}.
  \]
  So given $n$ evaluations of $P$ at distinct points,
  we can compute
  all the coefficients of $P(z)$ in polynomial time.
\end{proposition}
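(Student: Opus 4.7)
The plan is to verify the stated formula by a direct substitution argument, then use uniqueness of low-degree polynomial interpolation to conclude equality, and finally note that the formula can be expanded in polynomial time.

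First I would define the right-hand side as $Q(z) = \sum_{i \in [n]} p_i L_i(z)$, where $L_i(z) = \prod_{j \in [n] \setminus \{i\}} (z - z_j)/(z_i - z_j)$ is the $i$th Lagrange basis polynomial. Each $L_i(z)$ is a product of $n-1$ linear factors, so $\deg L_i \le n-1$, and hence $\deg Q \le n-1$ as well. The key calculation is that $L_i(z_k) = \delta_{ik}$ (Kronecker delta): when $z = z_i$, every factor in the product is $(z_i - z_j)/(z_i - z_j) = 1$; when $z = z_k$ with $k \ne i$, the factor indexed by $j = k$ is $(z_k - z_k)/(z_i - z_k) = 0$, killing the product. (Here I use that the $z_i$ are distinct, so the denominators $z_i - z_j$ are nonzero in $\F$.) Consequently $Q(z_k) = \sum_i p_i L_i(z_k) = p_k$ for each $k \in [n]$.

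Next I would invoke uniqueness: the polynomial $P(z) - Q(z)$ has degree less than $n$ and vanishes at the $n$ distinct points $z_1, \dots, z_n$. A nonzero univariate polynomial of degree less than $n$ over a field can have at most $n-1$ roots, so $P(z) - Q(z)$ must be identically zero, giving the claimed identity $P(z) = Q(z)$.

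For the algorithmic claim, I would expand the formula mechanically: each basis polynomial $L_i(z)$ is a product of $n-1$ linear polynomials in $z$, which can be multiplied out into a list of at most $n$ coefficients using $O(n^2)$ field operations; the scalar $1/\prod_{j \ne i}(z_i - z_j)$ requires one inversion in $\F$; summing the $n$ scaled basis polynomials and reading off coefficients then takes $O(n^2)$ additional field operations. Since each field operation in $\F$ runs in $\poly(n)$ time (as noted in the Finite Field Arithmetic paragraph), the total coefficient recovery runs in $\poly(n)$ time. There is no real obstacle here — the only thing to be mildly careful about is confirming that the denominators $z_i - z_j$ are all nonzero in $\F$, which is immediate from the distinctness hypothesis on the $z_i$, and that inversion is available in $\F$, which holds since $\F$ is a field.
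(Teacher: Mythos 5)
Your proof is correct and is the standard argument: the Lagrange basis polynomials satisfy $L_i(z_k) = \delta_{ik}$, the difference $P - Q$ has degree less than $n$ with $n$ distinct roots and hence vanishes, and the coefficients are recovered by expanding the basis polynomials in $O(n^2)$ field operations. The paper states this proposition as a known fact without proof, so there is nothing to compare against; your argument is exactly the justification the paper implicitly relies on, including the correct use of distinctness of the $z_i$ to guarantee invertible denominators in $\F$.
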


We also record the following observation, proven for example in \cite[Theorem 7.2]{Motwani1995}, which shows that to test whether a low degree polynomial $P$ is nonzero, it suffices to check whether a random evaluation of $P$ over a sufficiently large field is nonzero. 

\begin{proposition}[Schwartz-Zippel Lemma]
\label{schwartz-zippel}
    Let $P$ be a nonzero polynomial over a finite field $\FF$ of degree at most $d$.
    If each variable of $P$ is assigned an independent, uniform random value from $\FF$,
    then the corresponding evaluation of $P$ is nonzero with probability $1-d/|\FF|$.
\end{proposition}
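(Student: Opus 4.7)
The plan is to prove the Schwartz-Zippel Lemma by induction on the number of variables $n$ appearing in $P$, following the standard textbook argument. The claim is that for a nonzero polynomial $P$ of total degree at most $d$ over $\FF$, a uniformly random assignment from $\FF^n$ makes $P$ vanish with probability at most $d/|\FF|$.

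For the base case $n=1$, I would invoke the fundamental fact that a nonzero univariate polynomial of degree at most $d$ over a field has at most $d$ roots. Since a uniform random element of $\FF$ hits any fixed set of $d$ elements with probability $d/|\FF|$, the base case is immediate.

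For the inductive step, I would group terms of $P(x_1,\ldots,x_n)$ by the exponent of $x_n$, writing
\[
P(x_1, \ldots, x_n) = \sum_{i=0}^{k} x_n^i \, Q_i(x_1, \ldots, x_{n-1}),
\]
where $k$ is the largest index with $Q_k \not\equiv 0$. Since $P$ has total degree at most $d$, the polynomial $Q_k$ has total degree at most $d-k$. I would then consider two bad events when sampling a uniform $(a_1, \ldots, a_n) \in \FF^n$: (i) $Q_k(a_1, \ldots, a_{n-1}) = 0$, and (ii) $Q_k(a_1, \ldots, a_{n-1}) \neq 0$ but $P(a_1,\ldots,a_n)=0$. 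By the inductive hypothesis applied to $Q_k$, event (i) occurs with probability at most $(d-k)/|\FF|$. Conditional on the complement of (i), the univariate polynomial $P(a_1,\ldots,a_{n-1}, x_n) \in \FF[x_n]$ is nonzero of degree exactly $k$, so by the base case it vanishes at a uniformly random $a_n$ with probability at most $k/|\FF|$. A union bound gives the desired $(d-k)/|\FF| + k/|\FF| = d/|\FF|$ bound.

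There is no real obstacle here, since this is a classical result; the only delicate point is ensuring the conditioning in step (ii) is handled correctly, namely that $a_n$ is drawn independently of $a_1, \ldots, a_{n-1}$ so that the conditional distribution of $a_n$ remains uniform on $\FF$. Once this independence is made explicit, the union bound concludes the induction, yielding the stated probability $1 - d/|\FF|$ of a nonzero evaluation.
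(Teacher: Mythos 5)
Your proof is correct and is the standard inductive argument for the Schwartz--Zippel lemma. The paper itself does not prove this proposition---it cites an external reference (Motwani and Raghavan, Theorem~7.2)---so there is no internal proof to compare against; your decomposition by the highest power of $x_n$, with the two-event union bound (leading coefficient vanishes, or the resulting nonzero univariate polynomial vanishes), is precisely the textbook argument appearing there. One small remark: the paper's statement reads ``is nonzero with probability $1 - d/|\FF|$,'' but the correct claim (and what your proof actually delivers) is that the evaluation is nonzero with probability \emph{at least} $1 - d/|\FF|$; the polynomial may well be nonzero on a larger fraction of $\FF^n$, so equality need not hold.
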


\paragraph*{Matrices}

Given a matrix $A$ and sets of rows $I$ and columns $J$, we let $A[I, J]$ denote the submatrix of $A$ restricted to rows in $I$ and columns in $J$. 
We let $A[I,\cdot]$ denote the submatrix of $A$ restricted to rows in $I$ but using all columns, and $A[\cdot,J]$ denote the submatrix of $A$ restricted to columns in $J$ but using all rows. 
If the rows and column sets of $A$ are identical, we write $A[S]$ as shorthand for $A[S, S]$.
We let $O$ denote the all-zeros matrix, whose dimensions will always be clear from context. 
We let $A^\top$ denote the transpose of $A$.
A matrix $A$ is \emph{symmetric} if $A=A^\top$, and \emph{skew-symmetric} if it is symmetric and has zeros along its main diagonal (we employ this definition because we work over characteristic two).

If $A$ is a square matrix, we let $\det A$ denote the determinant of $A$.

Given a set $V$, 
a perfect matching on $V$ is a partition of $V$ into sets of two elements each.
We let $\Pi(V)$ denote the set of perfect matchings on $V$.
Given a skew-symmetric matrix $A$ with rows and columns indexed by a set $V$, we define the \emph{Pfaffian} of $A$ to be 
\begin{equation}
\label{eq:pfaffian-def}
  \Pf A = \sum_{M\in \Pi(V)} \prod_{\set{u,v}\in M} A[u, v].
\end{equation}
The standard definition for Pfaffians involves a sign term
(see e.g., \cite[Section 7.3.2]{murota1999matrices}),
which we ignore here because we work over a field of characteristic two. 
It is well known that for all skew-symmetric $A$, we have  $\det A = (\Pf A)^2$ (see e.g., \cite[Proposition 7.3.3]{murota1999matrices} for a proof).
Consequently, the Pfaffian of matrix $A$ over the field $\FF_{2^\ell}$ can be computed in $\poly(n,\ell) \le \poly(n)$ time by using the equation $\Pf A = \sqrt{\det A} = (\det A)^{2^{\ell-1}}$.

We record the following useful  facts for computation with Pfaffians and determinants.

\begin{restatable}[Direct Sums]{proposition}{pfaffiansum} \label{lemma:direct-sum}
For any skew-symmetric matrices $A_1$ and $A_2$, we have 
  \begin{align*}
    \Pf \begin{pmatrix}
       A_1 & O \\ O & A_2 
    \end{pmatrix} = \grp{\Pf A_1 }\grp{\Pf A_2}
  \end{align*}
over a field of characteristic two.
\end{restatable}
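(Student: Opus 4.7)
The plan is to prove the identity by direct expansion of the definition of the Pfaffian given in \Cref{eq:pfaffian-def}. Let $V_1$ and $V_2$ index the rows and columns of $A_1$ and $A_2$ respectively, and let $V = V_1 \sqcup V_2$ index the rows and columns of the block matrix $A = \begin{pmatrix} A_1 & O \\ O & A_2 \end{pmatrix}$. By definition,
\[
\Pf A = \sum_{M \in \Pi(V)} \prod_{\{u,v\} \in M} A[u,v].
\]

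First I would observe that the off-diagonal blocks of $A$ are zero, so $A[u,v] = 0$ whenever $u \in V_1$ and $v \in V_2$ (or vice versa). Consequently, any perfect matching $M \in \Pi(V)$ that contains a pair $\{u,v\}$ with one endpoint in $V_1$ and the other in $V_2$ contributes zero to the sum. The only matchings producing nonzero terms are those that split cleanly as $M = M_1 \cup M_2$ with $M_1 \in \Pi(V_1)$ and $M_2 \in \Pi(V_2)$; in particular, such $M$ exist only when $|V_1|$ and $|V_2|$ are both even, which is also precisely the situation in which $\Pf A_1$ and $\Pf A_2$ are nonzero by convention.

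Next I would reindex the sum according to this partition and use that $A[u,v] = A_1[u,v]$ for $\{u,v\} \subseteq V_1$ and $A[u,v] = A_2[u,v]$ for $\{u,v\} \subseteq V_2$, obtaining
\[
\Pf A = \sum_{M_1 \in \Pi(V_1)} \sum_{M_2 \in \Pi(V_2)} \left( \prod_{\{u,v\} \in M_1} A_1[u,v] \right) \left( \prod_{\{u,v\} \in M_2} A_2[u,v] \right) = (\Pf A_1)(\Pf A_2),
\]
where the last equality factors the double sum as a product of two independent sums, each of which is a Pfaffian by definition.

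There is essentially no obstacle here: the argument is a direct unpacking of \Cref{eq:pfaffian-def}. The only subtlety one might worry about in the general Pfaffian identity is the sign term arising from the ordering of matching pairs, but as noted in the excerpt we are working in characteristic two, so signs are irrelevant and the simpler unsigned definition suffices. Hence the identity follows immediately.
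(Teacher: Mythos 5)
Your proof is correct and matches the paper's own argument essentially verbatim: both expand $\Pf A$ via \Cref{eq:pfaffian-def}, note that matchings crossing the block diagonal contribute zero, and factor the remaining double sum into $(\Pf A_1)(\Pf A_2)$.
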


\Cref{lemma:direct-sum} is proved in 
\Cref{sec:construction}.

\begin{restatable}[Ishikawa-Wakayama Formula]{proposition}{iwformula}
  \label{lemma:ishikawa-wakayama}
  For a skew-symmetric $n \times n$-matrix $A$ and a $k \times n$-matrix $B$ with $k \le n$, we have
  \begin{align*}
    \Pf B A B^\top = \sum_{S\in \binom{[n]}{k}} \det B[\cdot, S] \Pf A[S].
  \end{align*}
\end{restatable}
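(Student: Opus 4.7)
The plan is to deduce the Pfaffian identity from a determinant identity via the Cauchy--Binet formula, relying on characteristic two both for the relation $\det N = (\Pf N)^2$ for skew-symmetric $N$ and for a cancellation of cross terms. Let $M := B A B^\top$.

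I would first check that $M$ is skew-symmetric. Since $A^\top = A$ in characteristic two, $M^\top = B A^\top B^\top = M$; and for the diagonal, $M[i,i] = \sum_{u,v} B[i,u]\, A[u,v]\, B[i,v]$ vanishes because pairing $(u,v)$ with $(v,u)$ yields $A[u,v] + A[v,u] = 2 A[u,v] = 0$. Hence $\det M = (\Pf M)^2$. Next, writing $M = B \cdot (A B^\top)$ and applying Cauchy--Binet twice---once to this outer product, and once more to the $k \times k$ product $A[S,\cdot]\, B^\top$ that then appears---I would obtain
\[
  \det M = \sum_{S, T \in \binom{[n]}{k}} \det B[\cdot, S] \, \det A[S, T] \, \det B[\cdot, T].
\]

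The main obstacle is showing that all terms with $S \neq T$ cancel. Using $A^\top = A$, one has $A[S,T] = A[T,S]^\top$ and hence $\det A[S,T] = \det A[T,S]$, so the contributions indexed by $(S,T)$ and $(T,S)$ are identical; for $S \neq T$ they pair up to $2 \cdot (\cdots) = 0$ in characteristic two. Only diagonal $S = T$ terms survive, yielding
\[
  \det M = \sum_{S \in \binom{[n]}{k}} (\det B[\cdot, S])^2 \, (\Pf A[S])^2 = \left( \sum_{S \in \binom{[n]}{k}} \det B[\cdot, S] \, \Pf A[S] \right)^{\!2},
\]
where the last equality is the Freshman's Dream in characteristic two.

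Finally, combining with $\det M = (\Pf M)^2$ gives equality of two squares in the polynomial ring in the entries of $A$ and $B$ over $\FF_{2^\ell}$. Since the Frobenius $x \mapsto x^2$ is injective on any integral domain of characteristic two, in particular on this polynomial ring, I extract unique square roots to conclude $\Pf M = \sum_S \det B[\cdot, S] \, \Pf A[S]$, as desired.
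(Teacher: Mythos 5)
Your proof is correct. Note that the paper does not actually give a proof of this proposition---it cites the original paper of Ishikawa and Wakayama and a recent alternate proof---so there is no in-paper argument to compare against directly. Your derivation is a clean, self-contained argument tailored to characteristic two: reduce to determinants via $\det N = (\Pf N)^2$ for skew-symmetric $N$; expand $\det(BAB^\top)$ by two applications of Cauchy--Binet to obtain $\sum_{S,T}\det B[\cdot,S]\det A[S,T]\det B[\cdot,T]$; kill the off-diagonal terms by pairing $(S,T)$ with $(T,S)$ using $\det A[S,T]=\det A[T,S]$ (from $A^\top=A$) together with $2=0$; recombine the diagonal terms with the Freshman's Dream; and extract the unique square root because the Frobenius $x\mapsto x^2$ is injective on the polynomial ring (an integral domain) over $\FF_{2^\ell}$ in the entries of $A$ and $B$. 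Each of these steps checks out, including the skew-symmetry of $BAB^\top$ (both $M^\top=M$ and the vanishing of diagonal entries use $A^\top=A$, the zero diagonal of $A$, and $2=0$). The one thing worth flagging is that this argument is intrinsically a characteristic-two argument: the cancellation of cross terms, the Freshman's Dream, and the square-root extraction all rely on $2=0$, whereas the classical Ishikawa--Wakayama minor summation formula holds over arbitrary commutative rings once signs are tracked. Since the paper works over $\FF_{2^\ell}$ throughout and defines the Pfaffian without signs precisely for that reason, your proof is exactly at the right level of generality for the application.
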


\Cref{lemma:ishikawa-wakayama} is due to \cite{IshikawaW95} (see  \cite[Section 4.3] {koana2024faster} for a recent alternate proof).

\paragraph*{Matroids}

A \emph{matroid} is a pair $\matroid = (X, \cI)$, consisting of a \emph{ground set} $X$ and a collection $\cI$ of subsets of $X$ called  \emph{independent sets}, satisfying the following axioms:
\begin{enumerate}
    \item $\emptyset \in \cI$.
    \item If $B \in \cI$ and $A \subset B$, then $A \in \cI$.
    \item For any $A, B \in \cI$ with $|A| < |B|$, there exists an element $b \in B \setminus A$ such that $A\cup\set{b} \in \cI$.
\end{enumerate}

A maximal independent set in a matroid $\matroid$ is called a \emph{basis}.
It is well known that all bases of $\matroid$ have the same size $r$, which is referred to as the \emph{rank} of $\mcal{M}$. 
We say $S\sub V$ \emph{spans} $\mcal{M}$ if $S$ is a superset of a basis of $\mcal{M}$.
If there exists a matrix $A$ with column set $X$ 
such that for every $S\sub X$, 
the set $S$ is independent in $\matroid$ if and only if $A[\cdot, S]$ has full rank, then we say that $\matroid$ is a \emph{linear} matroid \emph{represented} by $A$. 
Provided we work over a large enough field of size $\poly(|X|)$, 
it is known that if $\matroid$ is linear,
then it can be represented by a matrix whose number of rows equals the rank of $\matroid$ (see e.g., \cite{LokshtanovMPS18TALG} or \cite[Section 3.7]{Marx09-matroid}).

    Given matroids $\matroid_1, \dots, \matroid_k$ over disjoint ground sets $X_1, \dots, X_k$ respectively,
    we define the direct sum
        \[\matroid = \bigoplus_{i=1}^k \matroid_i\]
    of these matroids
    to be the ground set $X = X_1\sqcup \dots \sqcup X_k$ together with the family $\cI$ of subsets $S$ of $X$ with the property that $S\cap X_i$ is independent in $\matroid_i$ for all $i\in [k]$.
    It is well known that $\matroid$ is a matroid as well, whose rank is equal to the sum of the ranks of the $\matroid_i$.

    The following result lets us represent direct sums of linear matroids \cite[Section 3.4]{Marx09-matroid}.

\begin{proposition}[Matroid Sum]
\label{prop:matroid-sum}
    Given linear matroids $\matroid_1, \dots, \matroid_k$ represented by matrices $A_1, \dots, A_k$ respectively,
    their direct sum is represented by the block-diagonal matrix 
        \[
  A = \begin{pmatrix}
    A_1 & O & \dots & O \\
    O & A_2 & \dots & O \\
    \vdots & \vdots & \ddots & \vdots \\
    O & O & \dots & A_k
  \end{pmatrix}.
        \]
\end{proposition}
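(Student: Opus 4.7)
The plan is to verify the matroid representation directly from the definitions: I need to show that for every $S \subseteq X = X_1 \sqcup \dots \sqcup X_k$, the subset $S$ is independent in $\bigoplus_i \matroid_i$ if and only if the submatrix $A[\cdot, S]$ has full column rank. By the definition of direct sum recalled earlier, $S$ is independent in $\bigoplus_i \matroid_i$ exactly when $S_i := S \cap X_i$ is independent in $\matroid_i$ for every $i$, which by the representation hypothesis is equivalent to $A_i[\cdot, S_i]$ having full column rank for every $i$. So the task reduces to translating this condition across the block-diagonal structure.

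The key observation is that the columns of $A$ are partitioned according to $X_1, \dots, X_k$, and a column indexed by an element of $X_i$ is supported only on the rows belonging to the $A_i$-block. Consequently, after regrouping columns by which $X_i$ they come from, the submatrix $A[\cdot, S]$ is itself block-diagonal, with diagonal blocks $A_1[\cdot, S_1], \dots, A_k[\cdot, S_k]$ (and zero rows elsewhere, if any $A_i$ has more rows than needed). The rank of a block-diagonal matrix equals the sum of the ranks of its diagonal blocks, so
\[
\mathrm{rank}(A[\cdot, S]) = \sum_{i=1}^{k} \mathrm{rank}(A_i[\cdot, S_i]) \le \sum_{i=1}^k |S_i| = |S|,
\]
with equality if and only if each summand $\mathrm{rank}(A_i[\cdot, S_i])$ attains its maximum $|S_i|$, i.e., if and only if every $A_i[\cdot, S_i]$ has full column rank.

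Combining the two equivalences gives exactly the characterization of independence in the direct sum, so $A$ represents $\bigoplus_i \matroid_i$. I would also note briefly that the rank of this matroid equals $\sum_i \mathrm{rank}(\matroid_i)$, consistent with the claim made just before the proposition. There is essentially no obstacle here: the only thing one has to be slightly careful about is permuting rows and columns of $A$ so that the block structure is visible, which does not affect rank, and observing that columns indexed by $X_i$ have zero entries in the row-bands belonging to $A_j$ for $j \neq i$, which is immediate from the block-diagonal definition of $A$.
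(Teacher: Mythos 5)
Your proof is correct. Note that the paper does not actually prove \Cref{prop:matroid-sum} — it only cites the result from Marx's survey — so there is no in-paper argument to compare against. Your block-diagonal rank argument is the standard one: columns indexed by $X_i$ are supported only on the $A_i$ row band, so $A[\cdot, S]$ decomposes (after column regrouping) into the blocks $A_i[\cdot, S_i]$, rank is additive across blocks, and $\sum_i \mathrm{rank}(A_i[\cdot,S_i]) = |S|$ forces each summand to equal $|S_i|$. This is precisely the argument one would find in the cited reference, and it is complete as written.
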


In our \edgecolor{} algorithm, we make use of two simple types of matroids: uniform and partition matroids.
Given a ground set $X$ and a nonnegative integer $r$, 
the \emph{uniform matroid} over $X$ of rank $r$ has as its independent sets all subsets of $X$ of size at most $r$.

\begin{proposition}[Uniform Matroid Representation {\cite[Section 3.5]{Marx09-matroid}}]
\label{prop:uniform}
    Given a set $X$, a nonnegative integer $r$, and a field $\FF$ with $|\FF| > |X|$,
    we can construct an $r\times |X|$ matrix representing the uniform matroid over $X$ of rank $r$ in $\poly(|X|)$ time. 
\end{proposition}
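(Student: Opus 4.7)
The plan is to explicitly exhibit an $r \times |X|$ matrix whose columns realize the uniform matroid, using the classical Vandermonde construction. Let $n = |X|$ and enumerate the elements of $X$ as $x_1, \dots, x_n$. Since $|\FF| > n$, we may select $n$ pairwise distinct field elements $\alpha_1, \dots, \alpha_n \in \FF$; this selection takes $\poly(n)$ time.

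Next I would define the matrix $A$ by setting its $(i,j)$ entry to $\alpha_j^{i-1}$ for $i \in [r]$ and $j \in [n]$, so that the $j$-th column of $A$ is $(1, \alpha_j, \alpha_j^2, \dots, \alpha_j^{r-1})^\top$. Writing down $A$ clearly takes $\poly(n)$ field operations, each of which takes $\poly(n)$ time by the preliminaries, so the overall construction time is $\poly(n) = \poly(|X|)$ as required.

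It remains to verify that $A$ represents the uniform matroid of rank $r$ on $X$. For the forward direction, suppose $S \sub X$ has $|S| \le r$. Extend $S$ arbitrarily to a set $T$ with $|T| = r$; then $A[\cdot, T]$ is a square Vandermonde matrix in the $\alpha_j$ for $x_j \in T$, whose determinant equals $\prod_{j < j'} (\alpha_{j'} - \alpha_j)$ (indices restricted to $T$), a nonzero product because the $\alpha_j$ were chosen distinct. Hence the columns of $A[\cdot, T]$ are linearly independent, and so are the columns of $A[\cdot, S] \sub A[\cdot, T]$, which gives $A[\cdot, S]$ full column rank. For the reverse direction, if $|S| > r$ then $A[\cdot, S]$ has more columns than rows and so cannot have full column rank. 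Thus $S$ is independent in the matroid represented by $A$ if and only if $|S| \le r$, which is exactly the defining property of the uniform matroid of rank $r$ on $X$.

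There is no real obstacle here; the only subtle requirement is having enough distinct field elements to populate the top row of the Vandermonde matrix, which is precisely what the hypothesis $|\FF| > |X|$ provides. All other steps are routine linear algebra and polynomial-time arithmetic in $\FF$.
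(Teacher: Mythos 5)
Your proof is correct and follows the standard Vandermonde construction, which is precisely the argument in the cited source (Marx, Section~3.5) that the paper invokes without reproving. The only thing worth noting is that the paper works over a field of characteristic two, but the Vandermonde determinant $\prod_{j<j'}(\alpha_{j'}-\alpha_j)$ is nonzero over any field as long as the $\alpha_j$ are distinct, so your argument goes through unchanged.
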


Given a set $X$ partitioned $X = X_1\sqcup \dots \sqcup X_k$ into parts $X_i$, and a list of nonnegative integer capacities $c_1, \dots, c_k$, 
the \emph{partition matroid} $\matroid$ over the ground set $X$ for this partition and list of capacities has as its independent sets all subsets $S\sub X$ satisfying
    $|S\cap X_i| \le c_i$
for all $i\in [k]$.
Equivalently, $\matroid$ is the direct sum of uniform matroids of rank $c_i$ over $X_i$ for all $i\in [k]$.
By definition, $\matroid$ has rank $r = (c_1 + \dots + c_k)$.
By \Cref{prop:matroid-sum,prop:uniform} we see that we can efficiently construct representations of partition matroids
\cite[Proposition 3.5]{Marx09-matroid}.

\begin{proposition}[Partition Matroid Representation]
\label{prop:partition}
    Given a partition matroid $\matroid$ over $X$ of rank $r$, and a field $\FF$ with $|\FF| > |X|$,
    we can construct an $r\times |X|$ matrix representing $\matroid$ in $\poly(|X|)$ time. 
\end{proposition}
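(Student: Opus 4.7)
The plan is to derive this result by directly combining \Cref{prop:uniform} and \Cref{prop:matroid-sum}, which are already stated and available.

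First I would unpack the definition: a partition matroid $\matroid$ with underlying partition $X = X_1 \sqcup \dots \sqcup X_k$ and capacities $c_1, \dots, c_k$ is, by the equivalent description given just before the statement, the direct sum
\[
\matroid = \bigoplus_{i=1}^{k} \matroid_i,
\]
where $\matroid_i$ is the uniform matroid over $X_i$ of rank $c_i$. The rank of $\matroid$ is then $r = c_1 + \dots + c_k$.

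Next I would construct a representation of each summand. Since $|\FF| > |X| \ge |X_i|$ for every $i$, \Cref{prop:uniform} applies to each part and lets us build, in $\poly(|X_i|) \le \poly(|X|)$ time, a matrix $A_i$ of dimensions $c_i \times |X_i|$ representing $\matroid_i$. Because there are at most $|X|$ parts and each construction is polynomial, the total cost of this step is $\poly(|X|)$.

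Finally, I would glue the $A_i$ together using \Cref{prop:matroid-sum}: the block-diagonal matrix
\[
A = \begin{pmatrix}
A_1 & O & \dots & O \\
O & A_2 & \dots & O \\
\vdots & \vdots & \ddots & \vdots \\
O & O & \dots & A_k
\end{pmatrix}
\]
represents $\matroid$. Its number of rows is $\sum_{i} c_i = r$ and its number of columns is $\sum_{i} |X_i| = |X|$, so $A$ has the claimed $r \times |X|$ shape. Assembling $A$ from the $A_i$ and zero blocks is clearly done in $\poly(|X|)$ time, giving the desired overall bound.

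There is essentially no obstacle here; the only thing to double-check is that the field-size hypothesis carries over to each sub-invocation, which it does immediately from $|X_i| \le |X| < |\FF|$. The proposition is thus a short composition of the two preceding results.
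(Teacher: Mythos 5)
Your proof is correct and mirrors exactly what the paper does: the sentence immediately preceding \Cref{prop:partition} notes that a partition matroid is the direct sum of uniform matroids over the parts, and then cites \Cref{prop:matroid-sum,prop:uniform} as the justification. Nothing to add.
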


\section{Polynomial Sieving} \label{sec:polynomial-sieving}

In this section, we develop faster algorithms for a certain monomial detection problem.
This is our main technical result, which allows us to obtain faster algorithms for \edgecolor{} and \textsc{List Edge Coloring}.
We recall the following well-known sieving technique:

\begin{proposition}[Inclusion-Exclusion]
  \label{lemma:inclusion-exclusion}
  Let $P(X)$ be a polynomial over a field of characteristic two, with variable set $X = \{ x_1, \cdots, x_n \}$.
    Fix $T \subseteq X$.
    Let $Q(X)$ be the polynomial consisting of all monomials (with corresponding coefficients) in $P$ that are divisible by $\prod_{x \in T} x$.
  Then 
  \[Q = \sum_{S \subseteq T} P_{S}\] where 
  $P_{S}$ is the polynomial obtained from $P$ by setting $x_i = 0$ for all $i\in S$.
\end{proposition}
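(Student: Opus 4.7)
The plan is to verify the identity monomial-by-monomial, exploiting the fact that we work over characteristic two so that all parity counts reduce to a simple case analysis.

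First, I would fix an arbitrary monomial $\monomial$ appearing in $P$ with coefficient $c_\monomial\in\FF$, and determine precisely when $\monomial$ survives in $P_S$. Since $P_S$ is obtained by zeroing out every variable indexed by $S$, the monomial $\monomial$ survives in $P_S$ (still with coefficient $c_\monomial$) if and only if none of the variables $x_i$ with $i\in S$ appears in $\monomial$, i.e., $S\cap \supp(\monomial)=\emptyset$. Otherwise $\monomial$ contributes zero to $P_S$.

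Next, I would collect the contributions of $\monomial$ across all $S\subseteq T$. The coefficient of $\monomial$ in $\sum_{S\subseteq T}P_S$ equals
\[
c_\monomial\cdot\bigl|\{S\subseteq T : S\cap\supp(\monomial)=\emptyset\}\bigr|
= c_\monomial\cdot 2^{|T\setminus \supp(\monomial)|},
\]
since such subsets $S$ are exactly the subsets of $T\setminus\supp(\monomial)$. Here is where characteristic two does all the work: $2^k\equiv 0\pmod 2$ whenever $k\ge 1$, while $2^0=1$. Hence the coefficient of $\monomial$ in $\sum_{S\subseteq T}P_S$ equals $c_\monomial$ if $T\setminus\supp(\monomial)=\emptyset$ (equivalently, $T\subseteq \supp(\monomial)$, i.e., $\monomial$ is divisible by $\prod_{x\in T}x$), and equals $0$ otherwise.

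Finally, summing over all monomials $\monomial$ of $P$ shows that $\sum_{S\subseteq T}P_S$ is exactly the sum of those monomials of $P$ (with their original coefficients) that are divisible by $\prod_{x\in T}x$, which is $Q$ by definition. There is no real obstacle here; the proof is a bookkeeping computation, and the only step deserving care is keeping the notational distinction between variables, their indices, and their supports straight so that the sum over $S\subseteq T\setminus\supp(\monomial)$ is accounted for correctly.
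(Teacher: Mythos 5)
Your argument is correct and is the standard one; the paper itself does not give a proof but cites Wahlstr\"om's Lemma~2, whose proof is precisely this monomial-by-monomial counting argument using the vanishing of $2^{|T\setminus\supp(\monomial)|}$ in characteristic two. No gap; the only thing to watch, which you already flag, is the mild notational conflation in the statement between elements of $T\subseteq X$ (variables) and elements of $\supp(\monomial)$ (indices).
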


See \cite[Lemma 2]{Wahlstrom13STACS} for a proof of \Cref{lemma:inclusion-exclusion}.

Combining \Cref{lemma:interpolation,lemma:inclusion-exclusion}, we obtain the following result.

\begin{lemma}[Coefficient Extraction]
\label{lemma:coefficient-extraction}
  Let $P(X)$ be a polynomial over a field of characteristic two, with variable set $X = \{ x_1, \cdots, x_n \}$.
  For $T \subseteq X$, let $Q(X \setminus T)$ be the coefficient of $\prod_{x \in T} x$ in $P(X)$,
  viewed as a polynomial over the variable set $X\setminus T$.
  Then we can evaluate $Q(X \setminus T)$ at a point as a linear combination of $2^{|T|}\poly(n)$ evaluations of $P(X)$.
\end{lemma}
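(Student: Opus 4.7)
The plan is to combine the two building blocks just established, \Cref{lemma:interpolation} and \Cref{lemma:inclusion-exclusion}. Fix a target evaluation point $y \in \F^{|X \setminus T|}$ at which we wish to compute $Q(y)$. The first step is to substitute $y$ for the variables in $X \setminus T$, reducing the task to extracting, from a polynomial $\tilde P$ in the variables $T$ alone, the scalar coefficient of the multilinear monomial $\prod_{x \in T} x$, which equals $Q(y)$.

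A direct application of \Cref{lemma:inclusion-exclusion} writes the sum of all monomials of $\tilde P$ divisible by $\prod_{x \in T} x$ as an alternating sum over $2^{|T|}$ evaluations of $\tilde P$. This sum still contains unwanted contributions from monomials in which some $x \in T$ appears with degree at least two. To filter these out, I would introduce a fresh formal variable $z$ and substitute $x \mapsto z \cdot x$ for every $x \in T$ inside $\tilde P$. The coefficient of $z^{|T|}$ in the resulting polynomial is exactly the $T$-homogeneous component of total degree $|T|$; within that component, any monomial divisible by $\prod_{x \in T} x$ is forced to be a scalar multiple of $\prod_{x \in T} x$ itself. Evaluating the sieved sum at $x = 1$ for all $x \in T$ then recovers $Q(y)$ as a scalar.

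To assemble everything into evaluations of $P$, for each subset $S \subseteq T$ and each of $\deg(P)+1$ distinct points $z_j \in \F$, I would evaluate $P$ at the point where the $X \setminus T$ variables take values $y$, the variables in $S$ take value $0$, and the variables in $T \setminus S$ take value $z_j$. \Cref{lemma:interpolation} turns the $\deg(P)+1$ evaluations (obtained by varying $z_j$ for fixed $S$) into the coefficient of $z^{|T|}$ of a univariate polynomial of degree at most $\deg(P) = \poly(n)$, and the inclusion-exclusion alternating sum over $S$ combines these coefficients into $Q(y)$. This writes $Q(y)$ as a linear combination of $2^{|T|} \poly(n)$ evaluations of $P$, with coefficients computable in $\poly(n)$ time.

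The main thing to verify is that these two linear operations---Lagrange interpolation in $z$ and the inclusion-exclusion sum over $S$---can be interleaved without distortion; since both act linearly on $P$-evaluations, this is automatic. The only real content is the two structural observations above: the $z^{|T|}$-coefficient isolates the degree-$|T|$ homogeneous component in the $T$-variables, and inside that component, divisibility by $\prod_{x \in T} x$ collapses to being a scalar multiple of $\prod_{x \in T} x$. This is the step I expect to be the main, though mild, obstacle to state cleanly.
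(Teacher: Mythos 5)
Your proposal is correct and follows essentially the same route as the paper's proof: substitute $x \mapsto xz$ for $x \in T$, extract the $z^{|T|}$-coefficient by Lagrange interpolation to isolate the $T$-homogeneous part of degree $|T|$, and apply inclusion-exclusion over subsets of $T$ to isolate divisibility by $\prod_{x\in T}x$, observing that within the degree-$|T|$ component divisibility forces multilinearity. The only superficial difference is the order in which you narrate the two linear sieves (inclusion-exclusion then interpolation, versus the paper's interpolation then inclusion-exclusion), which, as you note, is immaterial since both act linearly on evaluations of $P$.
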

\begin{proof}
    Fix a subset $T\sub X$.
    Let $F$ be the polynomial with variable set $X\cup\set{z}$ obtained by
    substituting each variable $x\in T$ with $xz$ in $P$. 
    Let $G(X)$ be the coefficient of $z^{|T|}$ in $F$.
    By definition of $F$, $G$ is the polynomial obtained by taking the monomials in $P$ whose degree restricted to $T$ equals $|T|$.
    By \Cref{lemma:interpolation}, we can compute $G(X)$ at any point as a linear combination of $\poly(n)$ evaluations of $P$.

    By definition, $Q(X\setminus T)$ is the polynomial obtained by taking monomials of $G(X)$ that are divisible by $\prod_{x\in T} x$, and then setting all variables in $T$ equal to $1$.
    So by \Cref{lemma:inclusion-exclusion} we can evaluate $Q(X\setminus T)$ as the sum of $2^{|T|}$ evaluations of $G$.
    Then by the conclusion of the previous paragraph, we get that we can evaluate $Q(X\setminus T)$ at any point as a linear combination of $2^{|T|}\poly(n)$ evaluations of $P$ as claimed. 
\end{proof}

The inclusion-exclusion principle allows us to sieve for the multilinear term in a polynomial that is the product of all its variables.
This was extended to the parameterized setting, where the goal is to sieve for multilinear terms of degree $k$ in $O^*(2^k)$ time \cite{Bjorklund14detsum,BjorklundHKK17narrow,Williams09IPL}, and then
further extended to multilinear monomial detection in the matroid setting \cite{determinantalsieving}.

\begin{lemma}[Basis Sieving {\cite[Theorem 1.1]{determinantalsieving}}] \label{lemma:basis-sieving}
  Let $P(X)$
  be a homogeneous polynomial of degree $k$ over a field $\F$ of characteristic~2, 
  and let $\mcal{M}=(X,\cI)$ be a matroid on $X$ of rank~$k$, with a representation over $\F$ given. 
  Then there  is a randomized algorithm
  running in $O^*(2^k)$ time and polynomial space, which determines with high probability if $P(X)$ contains a multilinear monomial $\mathbf{m}$ such that $\supp(\monomial)$ is a basis of $\mcal{M}$.
\end{lemma}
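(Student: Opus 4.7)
The plan is to build a polynomial $\tilde P(X, y_1, \dots, y_k)$ in which the matroid representation is baked in, so that the coefficient of the full product $y_1 y_2 \cdots y_k$ is nonzero (as a polynomial in $X$) precisely when $P$ contains a multilinear monomial whose support forms a basis of $\mcal{M}$. Let $A$ be the given $k \times |X|$ representation of $\mcal{M}$ and write $a_i$ for its $i$-th column. Form $\tilde P$ from $P$ by the substitution
\[
x_i \mapsto x_i \cdot \langle a_i, y \rangle = x_i \cdot \sum_{j=1}^k A[j,i]\, y_j.
\]
Since $P$ is homogeneous of degree $k$, each monomial of $P$ maps to a polynomial of total degree exactly $k$ in $y$, so extracting the $y_1 \cdots y_k$ coefficient of $\tilde P$ yields a polynomial purely in $X$.

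Next I would argue that only multilinear monomials of $P$ can contribute to this coefficient, exploiting the characteristic-two hypothesis. If a monomial $\monomial = \prod_i x_i^{m_i}$ of $P$ has some exponent $m_i \ge 2$, its image contains the factor $\langle a_i, y\rangle^{m_i} = \langle a_i, y\rangle^{m_i \bmod 2} \cdot \bigl(\sum_j A[j,i]^2\, y_j^2\bigr)^{\lfloor m_i/2 \rfloor}$, every term of which contains a squared $y$-variable, so the full expansion yields no $y_1 \cdots y_k$ monomial. For a multilinear monomial $\prod_{i \in S} x_i$ with $|S| = k$, the $y$-factor $\prod_{i \in S} \langle a_i, y \rangle$ contributes the permanent of the $k \times k$ submatrix $A[\cdot, S]$ to the $y_1 \cdots y_k$ coefficient, and in characteristic two this permanent coincides with $\det A[\cdot, S]$, which is nonzero exactly when $S$ is a basis of $\mcal{M}$. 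Crucially, distinct supports $S$ produce distinct $x$-monomials, so contributions from different bases cannot cancel: the extracted coefficient is a nonzero polynomial in $X$ if and only if $P$ contains a multilinear monomial whose support is a basis.

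Having reduced the detection problem to testing whether this coefficient polynomial is nonzero, I would finish by combining \Cref{lemma:coefficient-extraction} with \Cref{schwartz-zippel}. Sample an independent uniform value in a sufficiently large extension of $\F$ for each variable in $X$; by Schwartz-Zippel, nonvanishing is detected with constant probability. To evaluate the coefficient of $y_1 \cdots y_k$ at this random $X$-point, apply the coefficient extraction lemma with $T = \{y_1, \dots, y_k\}$, which expresses the desired evaluation as a linear combination of $2^k \poly(n)$ evaluations of $\tilde P$. Each such evaluation reduces to one evaluation of $P$ at a point computed from $A$ and $y$, and the $2^k$ summands can be accumulated one at a time, yielding overall $O^*(2^k)$ time and polynomial space. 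Standard repetition boosts the success probability to high.

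The main obstacle is the characteristic-two cancellation argument: verifying both that non-multilinear monomials contribute zero to the $y_1 \cdots y_k$ coefficient and that different bases do not cancel among themselves. Both hinge on the Frobenius identity $(u+v)^2 = u^2+v^2$, which simultaneously wipes out non-multilinear contributions through squared $y$-variables and equates the permanent with the determinant, letting the matroid representation do its job. A secondary bookkeeping point is ensuring that $\tilde P$ can be evaluated in $\poly(n)$ time given oracle or circuit access to $P$, which follows immediately from the form of the substitution.
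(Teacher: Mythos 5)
Your proof is correct and reconstructs the determinantal sieving argument from the cited source \cite{determinantalsieving}; the paper itself does not prove this lemma but simply invokes it as \cite[Theorem 1.1]{determinantalsieving}. The substitution $x_i \mapsto x_i\langle a_i, y\rangle$, the characteristic-two observation that squared $y$-variables kill non-multilinear contributions while permanent equals determinant on the survivors, and the combination of \Cref{lemma:coefficient-extraction} with \Cref{schwartz-zippel} to extract the $y_1\cdots y_k$ coefficient in $O^*(2^k)$ time and polynomial space is exactly the mechanism of that work.
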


Moreover, \cite{determinantalsieving} introduced a variant of sieving that takes the odd support into account.
This tool is key to our exponential speed-up for the \edgecolor{} problem.

\begin{lemma}[Odd Sieving {\cite[Theorem 1.2]{determinantalsieving}}] \label{lemma:odd-sieving}
  Let $P(X)$
  be a polynomial of degree $d$ over a field $\F$ of characteristic~2, 
  and let $\mcal{M}=(X,\cI)$ be a matroid on $X$ of rank $k$, represented by a $k\times n$ matrix over $\FF$. 
  Then using $O^*(d2^k)$ evaluations of $P(X)$ and polynomial space, we can
  determine if $P(X)$ contains a monomial $\monomial$ such that $\osupp(\monomial)$ spans $\mcal{M}$.
\end{lemma}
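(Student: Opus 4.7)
The plan is to reduce detecting a monomial with spanning odd support to extracting the coefficient of a specific ``marker'' monomial in an auxiliary polynomial obtained from $P$ via a carefully chosen algebraic substitution, and then extract that coefficient with inclusion-exclusion plus interpolation.

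Concretely, I would introduce $k$ fresh variables $y_1,\dots,y_k$, and for each $i\in[n]$ substitute $x_i \leftarrow x_i\cdot\tau_i(y)$, where $\tau_i(y)$ is built from the $i$-th column $a_i$ of the $k\times n$ representation matrix $A$ of $\matroid$ (for example, a random affine combination of $1$ and $\langle y,a_i\rangle$). The goal is that, in characteristic two, the identity $(u+v)^2=u^2+v^2$ forces any even power $x_i^{2t}$ to contribute only terms of even $y$-degree after substitution, while an odd power $x_i^{2t+1}$ contributes a surviving linear-in-$y$ factor $\langle y,a_i\rangle$ times a square. Consequently, when we expand a monomial $\monomial=\prod_i x_i^{m_i}$ of $P$ and collect the coefficient of $\prod_{j\in[k]} y_j$ in the $y$-variables, only the indices in $\osupp(\monomial)$ can supply the odd degree in $y$, and the coefficient becomes (up to known scalars) a symmetrized $k\times k$ minor of $A[\cdot,\osupp(\monomial)]$. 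In characteristic two this minor is nonzero exactly when $\osupp(\monomial)$ contains a basis of $\matroid$, i.e., spans $\matroid$.

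To extract the coefficient of $\prod_{j} y_j$ in the substituted polynomial, I would apply \Cref{lemma:coefficient-extraction} with $T=\{y_1,\dots,y_k\}$. The substitution raises the $y$-degree to at most $\deg P=d$ per $y_j$, so the internal Lagrange interpolation step of \Cref{lemma:coefficient-extraction} costs $\poly(d)$ evaluations per $y_j$, and the inclusion-exclusion over $T$ contributes the $2^k$ factor, for a total of $O^*(d\cdot 2^k)$ evaluations of $P$. Polynomial space suffices because the sieve and interpolation only need to maintain running sums and $\poly(n,k,d)$ field elements at a time, while each evaluation of $P$ is treated as a black-box query. A final Schwartz-Zippel argument (\Cref{schwartz-zippel}) over a sufficiently large $\FF_{2^\ell}$ guarantees high-probability one-sided detection of any surviving monomial.

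The main obstacle is choosing $\tau_i$ so that (a) monomials whose odd support fails to span $\matroid$ contribute exactly zero to the target coefficient, while (b) at least one spanning-odd-support monomial contributes nonzero; and (c) the $y$-degree stays $O(d)$ so the evaluation budget is $O^*(d\cdot 2^k)$ rather than exponential in $k$ as well. Points (a) and (b) rely on the Frobenius identity in characteristic two together with the standard connection between permanents and determinants over $\FF_{2^\ell}$, and can be verified by expanding the substituted polynomial monomial-by-monomial and identifying the coefficient of $\prod_j y_j$ with a determinant of a submatrix of $A$ indexed by $\osupp(\monomial)$. Point (c) fixes $\tau_i$ to be of degree one in the $y$-variables, which is the natural choice given that $\matroid$ is presented by a linear representation. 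I would model the exact form of $\tau_i$ and the cancellation analysis on the determinantal-sieving framework of the cited paper.
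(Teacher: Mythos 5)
The paper does not actually prove this lemma: it is cited verbatim as Theorem~1.2 of Eiben, Koana, and Wahlstr\"om's determinantal-sieving paper, so there is no internal proof to compare your sketch against. At a high level your reconstruction is consistent with the cited argument: substitute each $x_i$ by $x_i$ times a degree-one form in fresh $y$-variables built from the $i$th column of the representation matrix, use $(u+v)^2=u^2+v^2$ over characteristic two so that even powers of $x_i$ contribute only $y_j^2$-terms, sieve out the $y_1\cdots y_k$-coefficient by interpolation in an auxiliary degree parameter and $2^k$-term inclusion-exclusion, and finish with Schwartz--Zippel. That accounting does give $O^*(d\,2^k)$ evaluations of $P$ and polynomial space.

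That said, the sketch leaves several load-bearing details open that the cited proof has to handle carefully. First, ``a random affine combination of $1$ and $\langle y,a_i\rangle$'' underspecifies $\tau_i$, and the particular normalization matters for the cancellation analysis. Second, when $|\osupp(\monomial)|>k$ the $y_1\cdots y_k$-coefficient is not a single $k\times k$ minor but a sum of $\binom{|\osupp(\monomial)|}{k}$ minors of $A[\cdot,\osupp(\monomial)]$ weighted by products of the constant parts of the remaining $\tau_i$'s; one must argue that this weighted sum is nonzero with high probability precisely when $\osupp(\monomial)$ contains a basis, which is where the fresh randomness in $\tau_i$ is used. Third, you need to rule out cancellation between \emph{distinct} monomials of $P$ that both have spanning odd support, which is why the $X$-variables themselves must be set to random field elements before the sieve (not just the $y$-variables), and this needs to be threaded through the Schwartz--Zippel step. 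None of this contradicts anything in the paper, which simply imports the lemma; but a self-contained proof would need to pin these points down in the style of the source.
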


Let $P(X)$ be a homogeneous polynomial of degree $d$ in variables $X = \{ x_1, \dots, x_n \}$.
Consider a partition $X = X_1\sqcup \dots \sqcup X_p$ into parts $X_i$.
Let $c_1, \dots, c_p\ge 0$ be integers with 
    \[c_1 + \dots + c_p = d.\]
Let $\mcal{M}$ be the partition matroid defined by the partition of $X$ into the parts $X_i$ and the capacities $c_i$ for $i\in [p]$.
We say $P(X)$ is \emph{compatible} with $\mcal{M}$ if for every $i\in [p]$ and every monomial $\monomial$ in $P$, the degree of $\monomial$ restricted to $X_i$ equals $c_i\ge 1$.

The following result gives an improvement over \cref{lemma:basis-sieving} for 
polynomials that are compatible with partition matroids.

\begin{theorem}[Partition Sieving]
\label{theorem:faster-sieving}
  Let $P(X)$
  be a polynomial of degree $d$ over a field $\F$ of characteristic~2 with $|\F| > d$,
  and let $\mcal{M}=(X,\cI)$ be a partition matroid on $X$ of rank $d$ with $p$ partition classes.
  If $P$ is compatible with $\mcal{M}$,
  then there is a randomized algorithm
  which runs in $O^*(2^{d - p})$ time and polynomial space that determines with high probability if
  $P(X)$ contains a monomial $\monomial$ such that $\supp(\monomial)$ is a basis of $\mcal{M}$.
\end{theorem}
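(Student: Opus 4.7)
The plan is to adapt the basis-sieving substitution of \Cref{lemma:basis-sieving} to the block-diagonal structure of a partition matroid, and then exploit the compatibility hypothesis to save one $y$-variable (hence one factor of $2$) per part in the final coefficient extraction. First, using \Cref{prop:partition}, I would represent $\matroid$ by a block-diagonal matrix $A = \mathrm{diag}(A_1, \dots, A_p)$, where $A_i$ is a $c_i \times |X_i|$ representation of the uniform matroid of rank $c_i$ on $X_i$; a Vandermonde choice of $A_i$ is convenient. I would then introduce fresh variables $y_{i, j}$ for $i \in [p]$ and $j \in [c_i]$, and form $P'(X, y)$ by substituting each $x \in X_i$ in $P$ with $x \cdot \sum_{j=1}^{c_i} (A_i)_{j,x}\, y_{i,j}$.

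Every monomial $\monomial$ of $P$ has degree exactly $c_i$ in $X_i$ by compatibility, so after substitution its $y$-contribution factors as $\prod_i R_i^\monomial(y_i)$, where each $R_i^\monomial$ is a product of $c_i$ linear forms in $y_i = (y_{i,1}, \dots, y_{i,c_i})$. The coefficient of $\prod_j y_{i,j}$ in $R_i^\monomial$ is $\mathrm{perm}(A_i[\cdot, S_i(\monomial)])$, where $S_i(\monomial)$ is the multiset of $X_i$-variables of $\monomial$. Over characteristic $2$, a permanent with two equal columns vanishes (permutations pair up into identical contributions), while a Vandermonde submatrix with distinct columns has a nonzero permanent, equal to its determinant. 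Hence the polynomial $Q(X)$, defined as the coefficient of $\prod_{i,j} y_{i,j}$ in $P'(X, y)$ viewed as a polynomial in $y$, equals the sum, over monomials $\monomial$ of $P$ whose support is a basis of $\matroid$, of $c_\monomial\, \monomial$ times the nonzero weight $\prod_i \mathrm{perm}(A_i[\cdot, \supp(\monomial) \cap X_i])$. In particular, $Q \not\equiv 0$ iff $P$ contains such a multilinear basis monomial, and a random evaluation at $X \in \F^n$ detects this with high probability via Schwartz-Zippel over a sufficiently large field.

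The savings now come from the following observation: $P'(X, y)$ is \emph{multihomogeneous}, of degree exactly $c_i$ in each block $y_i$. Fixing $y_{i, c_i} = 1$ for every $i$ and then extracting the coefficient of $\prod_{i} \prod_{j=1}^{c_i-1} y_{i,j}$ from the resulting polynomial still recovers $Q(X)$, because every contributing term must, by homogeneity in $y_i$, absorb exactly one factor of $y_{i, c_i}$. Applying \Cref{lemma:coefficient-extraction} with $T$ of size $\sum_i (c_i - 1) = d - p$ then expresses an evaluation of $Q$ at a random $X$ as a linear combination of $2^{d - p}\poly(n)$ evaluations of the substituted polynomial, each computable in polynomial time and space, which matches the claimed running time.

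The main obstacle in this plan is justifying the ``set $y_{i, c_i} = 1$'' step: it is precisely the compatibility hypothesis (multihomogeneity of $P'$ in each $y_i$-block) that makes this substitution lossless, since without it, dropping a variable in the inclusion-exclusion would conflate basis-size supports with smaller supports and destroy the clean identification of $Q$ with a sum over bases. The remaining ingredients---the vanishing of a permanent with repeated columns in characteristic $2$, the nonvanishing of Vandermonde permanents, and the Schwartz-Zippel bound for identity testing---are standard and should go through without difficulty.
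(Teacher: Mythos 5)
Your proof is correct, and it takes a genuinely different route from the one in the paper. The paper proves the theorem by reduction to the odd-sieving lemma of \cite{determinantalsieving}: it builds a second partition matroid $\matroid'$ with the same parts but capacities $c_i - 1$ (hence rank $d - p$), invokes \Cref{lemma:odd-sieving} as a black box on $(P, \matroid')$, and then argues via an explicit parity/degree-counting claim that a monomial whose odd support spans $\matroid'$ is exactly a multilinear monomial whose support is a basis of $\matroid$. Your argument instead unfolds the sieving substitution directly: you pick a Vandermonde representation $A_i$ of each block's uniform matroid, substitute $x \mapsto x \cdot \sum_j (A_i)_{j,x}\, y_{i,j}$, and observe that compatibility makes $P'$ multihomogeneous of degree $c_i$ in each $y$-block; this multihomogeneity is precisely what lets you pin one $y$-variable per block to $1$ without losing information, shrinking the inclusion--exclusion set $T$ from size $d$ to $d-p$ and giving the $O^*(2^{d-p})$ bound directly from \Cref{lemma:coefficient-extraction} plus Schwartz--Zippel. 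What each buys: your version is self-contained (no appeal to the odd-sieving black box, only the elementary facts that permanent equals determinant over characteristic $2$, that a repeated column kills it, and that a Vandermonde minor with distinct nodes is nonzero), and it makes the source of the $2^p$ saving transparent; the paper's version is shorter given the existing odd-sieving machinery and would generalize more easily to the genuinely non-multilinear ``odd support'' setting. One small technical point to watch: choosing distinct Vandermonde nodes inside a part $X_i$ needs $|\F| > |X_i|$, which can exceed the stated hypothesis $|\F| > d$; this is harmless (pass to an extension field of characteristic $2$, as the paper implicitly does via \Cref{prop:partition}), but worth stating explicitly if you formalize the argument.
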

\begin{proof}
Let $X = X_1\sqcup \dots \sqcup X_p$ be the partition and $c_1, \dots, c_p$
be the capacities defining the partition matroid $\matroid$.
    By assumption, $c_i\ge 1$ for all $i\in [p]$.
    Let $\matroid'$ be the partition matroid over $X$ defined by the same partition as $\matroid$, but with capacities $(c_i-1)$ for $i\in [p]$.
    From this construction, $\matroid'$ has rank $(d-p)$.

  Our algorithm to determine whether $P(X)$ contains a monomial whose 
  % odd 
  support spans $\matroid$ works as follows.
  By \Cref{prop:partition}, a linear representation of $\matroid$ over $\F$ can be computed in polynomial time.
  We run the odd sieving algorithm of \Cref{lemma:odd-sieving} with $P$ and $\mcal{M}'$ 
  to determine
  in $O^*(2^{d - p})$ time
  whether $P$ contains a monomial whose odd support spans $\matroid'$.
  We return YES if such a monomial exists, and return NO otherwise.
  We now prove that this algorithm is correct.

    First, suppose that $P(X)$ has a monomial $\monomial$ whose support $\supp(\monomial)$ is a basis of $\matroid$.
    By the compatibility condition, $\monomial$ is multilinear, which implies that $\supp(\monomial) = \osupp(\monomial)$.
    Since $\osupp(\monomial)$ is a basis of $\matroid$, and thus spans $\matroid$, it follows that $\osupp(\monomial)$ spans $\matroid'$.

  Conversely, suppose $P(X)$ has a monomial $\monomial$
  such that $\osupp(\monomial)$ spans $\mcal{M}'$.
  We claim that $\supp(\monomial)$ spans $\mcal{M}$. 
Indeed, since $\osupp(\monomial)$ spans $\matroid'$,
the set $\osupp(\monomial)$
contains at least $(c_i - 1)$ variables of $X_i$ for each $i\in [p]$.
Hence its superset $\supp(\monomial)$  contains at least $(c_i-1)$ variables of each part $X_i$ as well.

\begin{claim}
\label{claim:linear}
    The set $\supp(\monomial)$ contains exactly $c_i$ variables of each part $X_i$.
\end{claim}
\begin{claimproof}
    Suppose to the contrary that $\supp(\monomial)$
    does not contain $c_j$ variables of $X_j$ for some index $j\in [p]$.
    Then $\monomial$ contains exactly $(c_{j}-1)$ variables of $X_j$.
    Since $P$ is compatible with $\matroid$, the monomial $\monomial$ has degree exactly $c_j$ when restricted to $X_j$.
    The only way this is possible is if $c_j \ge 2$,
    and $\monomial$ has exactly $(c_j-2)$ variables in $X_j$ of degree one and a single variable in $X_j$ with degree two. 
    The variable of degree two does not show up in the odd support of $\monomial$,
    which implies that $\osupp(\monomial)$ has at most $(c_j-2)$ elements,
    which contradicts the assumption that $\osupp(\monomial)$ spans $\matroid{}'$.
\end{claimproof}

By \Cref{claim:linear}, the set $\supp(\monomial)$
has exactly $c_i$ variables from $X_i$
for each $i\in [p]$.
Since $P$ is compatible with $\matroid$,
this implies that $\monomial$  has degree one in every variable in $X$. 
Hence $\supp(\monomial)=\osupp(\monomial)$ spans $\matroid$, as claimed.

This shows that the algorithm is correct, and proves the desired result. 
\end{proof}

\section{Coloring Algorithm Template}
\label{sec:template}

Recall that in the \textsc{List Edge Coloring}
problem we are given a graph $G = (V, E)$, an integer $k \ge 1$, and lists $L_e \sub [k]$ of colors for every edge $e \in E$, and are tasked with determining
if $G$ contains an assignment of colors $c(e)\in L_e$ to each edge such that no two edges incident to the same vertex are assigned the same color.
We call such an assignment a proper edge coloring of $G$ with respect to the lists $L_e$, or just a proper list edge coloring if the $L_e$ are clear from context. 

For each $i \in [k]$, let $G_i = (V, E_i)$ be the subgraph of $G$, where we define $E_i\sub E$ to be the set of all edges $e$ in the graph which have $i \in L_e$ available as a color.
Let $\mcal{F}$
denote the collection of $k$-tuples $(M_1, \dots, M_k)$  of matchings such that 
\begin{enumerate}
    \item for all $i\in [k]$, $M_i$ is a matching in $G_i$, and 
    \item 
    for all vertices $v\in V$, $v$ appears as the endpoint of exactly $\deg_G(v)$ of the $M_i$ matchings.
\end{enumerate} 

Observe that $G$ admits a proper list edge coloring if and only if $\mathcal{F}$ contains a tuple of edge-disjoint matchings.
Indeed, suppose $G$ admits a proper list edge coloring. 
For each $i \in [k]$, the set of edges assigned color $i$ must be a matching $M_i\sub E_i$ in $G$.
Moreover, every edge is assigned a color, so each vertex $v$ appears as an  endpoint in $\deg_G(v)$ of these matchings,
hence the matchings form an edge-disjoint tuple in $\mathcal{F}$.
Conversely, given an edge-disjoint tuple $(M_1, \dots, M_k)\in\mathcal{F}$, assigning the edges in $M_i$ color $i$ gives a color to every edge in $G$ by condition 2 above, and thus yields a proper list edge coloring by condition 1. 

The basic idea of our algorithmic template is to construct a polynomial $P$ which enumerates a certain subfamily $\mathcal{C}\sub\mathcal{F}$.
We then argue that $P$ contains a multilinear monomial satisfying certain matroid constraints if and only if $\mathcal{C}$ contains a tuple of edge-disjoint matchings.
Using similar reasoning to the previous paragraph, this then lets us solve \listedgecolor{} by running monomial detection algorithms on $P$.

\subsection{Enumerating Matchings}
\label{ssec:polynomial}

In this section, we design a polynomial $P$ whose monomials enumerate tuples of matchings satisfying special conditions. 
To construct $P$, we first define certain polynomial matrices.

For each edge $e \in E$ we introduce a variable $x_e$, and for each
 pair $(e,i)\in E\times [k]$
we introduce a variable  $y_{ei}$.
Let $X$ be the set of $x_e$ variables and $Y$ be the set of $y_{ei}$ variables. 

For each $i\in [k]$, define $A_i$ to be the matrix  with rows and columns indexed by $V$ with
\begin{align*}
  A_i[u,w] = A_i[w, u] = \begin{cases}
    x_{e} y_{ei} &  \text{ if } e = \set{u,w}\in E_i, \\
    0 & \text{ otherwise}. \\
  \end{cases}
\end{align*}

For each $i \in [k]$, define $V_i = \{ v_i \mid v \in V \}$ to be a copy of $V$.
Let 
    \[W = \bigsqcup_{i=1}^k V_i\]
be the union of these copies. 
Let $A$ be the symmetric matrix with rows and columns indexed by $W$, defined by taking
\begin{equation}
\label{eq:A-def}
  A = \begin{pmatrix}
    A_1 & O & \dots & O \\
    O & A_2 & \dots & O \\
    \vdots & \vdots & \ddots & \vdots \\
    O & O & \dots & A_k
  \end{pmatrix}
\end{equation}
and identifying $V_i$ as the row and column set indexing $A_i$.

Note that $A$ is $(kn)\times (kn)$, and over a field of characteristic two is skew-symmetric.

For each vertex $v\in V$, let $S_v = \set{v_i\mid i\in [k]}$
be the set of $k$ copies of $v$. 

For each vertex $v$,
let $\matroid_v$ be a linear matroid of rank $\deg_G(v)$ over $S_v$.
These matroids will be specified later on, in our algorithms for \edgecolor{} and \listedgecolor{}.
Let $\mcal{W}$ be the direct sum 
    \begin{equation}
    \label{eq:W-matroid}
    \watroid = \bigoplus_{v\in V} \matroid_v
    \end{equation}
of these matroids. 

By \Cref{lemma:direct-sum} the matroid $\mcal{W}$ is linear and has rank equal to the sum of the degrees of vertices in $G$, which is $2m$.
Let $B$ be a $2m\times |W|$ matrix  representing $\mcal{W}$.

Let $\cal{C}$ be the collection of $k$-tuples of matchings $(M_1, \dots, M_k)$ in $G$ such that 
\begin{enumerate}
    \item all edges in $M_i$ have lists containing the color $i$, 
    \item every vertex $v$ is incident to exactly $\deg_G(v)$ edges across the matchings $M_i$, and 
    \item for each vertex $v$, the set of copies $v_i$ taken over all $i$ such that $v$ appears as an endpoint of an edge in $M_i$ forms an independent set in $\matroid_v$.
\end{enumerate}

\noindent We define the polynomial
\begin{equation}
\label{eq:matching-polynomial}
P(X,Y) = \Pf BAB^\top.
\end{equation}
The next result shows that $P$ enumerates tuples of matchings from $\cal{C}$.

\begin{lemma}[Enumerating Matchings]
\label{lem:generating-function}
We have 
  \begin{align*}
    \Pf B A B^\top = \sum_{(M_1,\dots,M_k)\in\mcal{C}} c_{M_1,\dots,M_k} \left( \prod_{i \in [k]} \prod_{e \in M_i} y_{ei} \right) \left( \prod_{e \in M_1 \cup \dots \cup M_k} x_e \right),
  \end{align*}
  where each $c_{M_1,\dots,M_k} \ne 0$ is a constant  depending only on $M_1,\dots, M_k$.
\end{lemma}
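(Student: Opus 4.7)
The plan is to expand $\Pf BAB^\top$ by combining three identities: the Ishikawa--Wakayama formula (\Cref{lemma:ishikawa-wakayama}), the direct-sum formula for Pfaffians (\Cref{lemma:direct-sum}), and the definition of the Pfaffian in \Cref{eq:pfaffian-def}. First I would apply Ishikawa--Wakayama with the $2m \times kn$ matrix $B$ and skew-symmetric $A$ to write
\begin{align*}
\Pf BAB^\top = \sum_{S \in \binom{W}{2m}} \det B[\cdot, S]\, \Pf A[S].
\end{align*}
Since $A$ is block diagonal with blocks $A_1,\ldots,A_k$ indexed by $V_1,\ldots,V_k$, iterating \Cref{lemma:direct-sum} factors $\Pf A[S] = \prod_{i=1}^k \Pf A_i[S \cap V_i]$. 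Unpacking each factor via the definition of the Pfaffian and of $A_i$ gives
\begin{align*}
\Pf A_i[T] = \sum_{M} \prod_{e \in M} x_e y_{ei},
\end{align*}
where the sum ranges over perfect matchings $M$ of $T$ (viewed as a subset of $V$ via the identification $V_i \cong V$) using only edges of $G_i$, since all other matchings on $T$ contribute $0$.

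Next I would interchange the outer sum over $S$ with the inner sums over matchings, reparameterizing by tuples $(M_1,\ldots,M_k)$ where each $M_i$ is a matching in $G_i$. Each such tuple determines a unique subset $S(M) = \bigsqcup_{v \in V} \{v_i : v \in V(M_i)\} \subseteq W$, since this is the only choice of $S$ for which the factored product can produce a term corresponding to that tuple. The expansion therefore becomes
\begin{align*}
\Pf BAB^\top = \sum_{(M_1,\ldots,M_k)} \det B[\cdot, S(M)] \prod_{i \in [k]} \prod_{e \in M_i} x_e y_{ei},
\end{align*}
where the outer sum ranges over all tuples satisfying condition~1 of $\mathcal{C}$ (each $M_i$ a matching in $G_i$).

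To finish, I would identify the nonvanishing condition $\det B[\cdot, S(M)] \ne 0$ with the remaining defining properties of $\mathcal{C}$. Because $B$ represents $\watroid = \bigoplus_v \matroid_v$ of rank $2m = \sum_v \deg_G(v)$, the determinant is nonzero iff $S(M)$ is a basis of $\watroid$, which happens iff for every $v \in V$ the set $S(M) \cap S_v = \{v_i : v \in V(M_i)\}$ is a basis of $\matroid_v$. As $\matroid_v$ has rank $\deg_G(v)$, this cleanly decomposes into condition~2 (the count $|\{i : v \in V(M_i)\}|$ equals $\deg_G(v)$) together with condition~3 (the resulting set of copies is independent in $\matroid_v$). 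Taking $c_{M_1,\ldots,M_k} := \det B[\cdot, S(M)]$ then yields the claimed identity. The main subtlety is to rule out cancellation between tuples, which I would handle by observing that the variables $y_{ei}$ are indexed by distinct pairs $(e,i)$, so the $y$-part $\prod_i \prod_{e \in M_i} y_{ei}$ uniquely recovers the tuple $(M_1,\ldots,M_k)$, ensuring that no two distinct tuples contribute to the same monomial.
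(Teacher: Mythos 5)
Your proposal is correct and follows essentially the same route as the paper: Ishikawa--Wakayama, then the block-diagonal Pfaffian factorization, then expanding each block via the Pfaffian definition, and finally reading off conditions~1--3 of $\mathcal{C}$ from the nonvanishing of the $A_i$ entries and $\det B[\cdot, S(M)]$. The only cosmetic difference is that the paper derives condition~3 (independence) directly and then obtains condition~2 by a counting argument, while you get both simultaneously from the observation that $S(M)\cap S_v$ must be a basis of $\matroid_v$; these are equivalent.
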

\begin{proof}
  Let $U \subseteq W$ be a subset of $W$ of size $2m$.
  By \Cref{lemma:ishikawa-wakayama}, we have
  \begin{equation}
  \label{eq:first-iw}
    \Pf B A B^\top
    = \sum_{U \in \binom{W}{2m}} \det B[\cdot, U] \Pf A[U].
  \end{equation}

    By \cref{eq:A-def} and \Cref{lemma:direct-sum}, for every subset $U\sub W$ of size $2m$ we have 
    \[ \Pf A[U]
    = \prod_{i \in [k]} \Pf A_i[U \cap V_i].\]

    Substituting the above equation into \Cref{eq:first-iw} yields
        \[
    \Pf B A B^\top  = \sum_{U \in \binom{W}{2m}} \det B[\cdot, U] \prod_{i \in [k]} \Pf A_i[U \cap V_i].
        \]

  Recall that given a graph $H$,  we let $\Pi(H)$
    denote the set of perfect matchings of $H$.
  By expanding out the definition of $\Pf A_i[U \cap V_i]$ in the above equation, we see that 
  \begin{align*} 
    \Pf B A B^\top = \sum_{U \in \binom{W}{2m}} \det B[\cdot, U] \prod_{i \in [k]} \sum_{M_i\in \Pi(G[U_i])} \prod_{e \in M_i} A_i[e],
  \end{align*}
  where $U_i \subseteq V$ is defined as a copy of $U \cap V_i$, i.e., $U_i = \{ v \mid v_i \in U \cap V_i \}$.
  By interchanging the order of the product and the summation, we see that the above is equivalent to summing over all collections $(M_1, \dots, M_k)$, where $M_i$ is a matching in $G[U_i]$ for each $i \in [k]$:
    \begin{align*} 
    \Pf B A B^{\top} = \sum_{U \in \binom{W}{2m}} \det B[\cdot, U] \sum_{(M_1, \dots, M_k)} \prod_{i \in [k]} \prod_{e \in M_i} A_i[e].
  \end{align*}

    Let $\vec{M} = (M_1, \dots, M_k)$ be a tuple such that the summand corresponding to $\vec{M}$ in the above equation appears with nonzeo coefficient in $\Pf BAB^\top$.
    Since $A_i[e]$ is nonzero if and only if $i\in L_e$, the tuple $\vec{M}$ meets condition 1 in the definition of $\cal{C}$.
  By the definition of $B$, we have $\det B[\cdot, U] \ne 0$ if and only if $U\cap S_v$ is independent in $\matroid_v$ for every vertex $v \in V$.
  Thus $\vec{M}$ meets  condition 3 the definition of $\cal{C}$.
Since $\matroid_v$ has rank $\deg_G(v)$ for each vertex $v$, $U$ has size $2m$, and the sum of degrees of vertices in $G$ is $2m$,
we see that
the only way for $U\cap S_v$ to be independent in $\matroid_v$ for all  $v$ is if in fact $|U\cap S_v| = \deg_G(v)$ always.  
  So $\vec{M}$ satisfies condition 2 from the definition of $\cal{C}$.
  
  Thus, the nonzero terms in the above sum correspond precisely to tuples $\vec{M}\in \cal{C}$,
  which proves the desired result.
\end{proof}

\subsection{Partition Sieving With Dominating Sets}
\label{ssec:sieving}

In this section, we describe a generic way of going from a dominating set in $G$ to a certain partition matroid $\datroid$.
After defining this $\datroid$, we will apply \Cref{theorem:faster-sieving} to $P$ and $\datroid$ to achieve a fast algorithm for detecting a multilinear monomial in $P$.

Recall that a \emph{dominating set} of a graph $G$ is a subset of vertices $D$ such that every vertex in $G$ is either in $D$ or adjacent to a vertex in $D$.

Fix a dominating set $D\sub V$ of the input graph $G$.
Let $\notD = V\setminus D$.
Let $E'$ be the set of edges in $G$ with one endpoint in $D$ and one endpoint in $\notD$,
and $X' = \set{x_e\mid e\in E'}$ be the corresponding set of variables.
For each  $v\in V'$, let 
    \[\partial(v) = \set{e\in E' \mid e\ni v}\]
be the set of edges incident to $v$ which connect $v$ to a vertex in $D$. 
By definition, we have a partition
    \[E' = \bigsqcup_{v\in V'} \partial(v).\]
Let $\datroid$ be the partition matroid over $E'$ with respect to the above partition,
and with capacities $c_v = \deg_{E'}(v)$ for each $v\in V'$,
where $E'$ is viewed as a subgraph of $G$. 
Since $D$ is a dominating set, every vertex $v\in V'$ is adjacent to some node in $D$, and thus the capacities $c_v\ge 1$ are all positive. 

We now use this partition matroid to sieve over the polynomial $P$, defined in \Cref{eq:matching-polynomial}.

\begin{lemma}[Accelerated Multilinear Monomial Detection]
    \label{lem:template}
    There is a randomized algorithm that determines with high probability whether $P(X,Y)$
    has a monomial divisible by $\prod_{e\in E} x_e$, running in $O^*(2^{m-|\notD|})$ time and polynomial space. 
\end{lemma}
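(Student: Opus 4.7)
The plan is to reduce the detection problem to a multilinear monomial detection task that can be tackled by combining the coefficient extraction of \Cref{lemma:coefficient-extraction} with the partition matroid $\datroid$. First I would use \Cref{lemma:coefficient-extraction} with $T = \set{x_e : e \in E\setminus E'}$ to extract the polynomial $Q(X', Y) := [\prod_{e\in E\setminus E'}x_e]\, P$. By \Cref{lem:generating-function}, the nonzero monomials of $Q$ correspond bijectively to tuples $\vec{M}\in\cC$ satisfying $E\setminus E' \sub \bigcup_i M_i$, and one evaluation of $Q$ costs $O^*(2^{|E\setminus E'|})$ evaluations of $P$.

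The remaining task is to detect a monomial of $Q$ whose $X'$-support equals $E'$: such a monomial corresponds to a tuple with $\bigcup_i M_i = E$, and a short incidence-counting argument shows any such tuple in $\cC$ is automatically edge-disjoint (each vertex $v$ contributes $\deg_G(v)$ incidences spread across the $\deg_G(v)$ distinct edges at $v$, so each edge is used exactly once). A direct application of \Cref{theorem:faster-sieving} to $Q$ with $\datroid$ (rank $|E'|$, $|V'|$ parts) would give $O^*(2^{|E'|-|V'|})$ evaluations of $Q$ and the claimed $O^*(2^{m-|V'|})$ overall runtime, but the hypothesis requires $Q$ to be \emph{compatible} with $\datroid$, i.e.\ every monomial must have $X'$-degree exactly $c_v$ on each part $\partial(v)$. \textbf{This compatibility is the main obstacle:} a tuple $\vec{M}\in\cC$ covering $E\setminus E'$ may place some edge of $E\setminus E'$ with an endpoint in $V'$ into several matchings, using incidences that would otherwise go to $\partial(v)$ edges and pushing the $\partial(v)$-degree of the corresponding monomial strictly below $c_v$.

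To bypass this obstacle while keeping the target runtime, I would exploit that $Q$ is multilinear in $X'$: for a multilinear monomial, the per-part condition ``$\deg_{\partial(v)}\monomial = c_v$ for every $v\in V'$'' is equivalent to the single global condition ``$\sum_{v\in V'}\deg_{\partial(v)}\monomial = |E'|$''. Substituting $x_e \mapsto x_e z$ for every $e \in E'$ in $Q$ yields a polynomial $Q'(X', Y, z)$ of $z$-degree at most $|E'|$, and by Lagrange interpolation (\Cref{lemma:interpolation}) the coefficient $S := [z^{|E'|}]\,Q'$ can be evaluated at any point using only $|E'|+1$ evaluations of $Q'$ (hence of $Q$). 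Multilinearity then guarantees that the nonzero monomials of $S$ are exactly those of $Q$ with $X'$-support equal to $E'$, i.e.\ tuples $\vec M\in\cC$ with $\bigcup_i M_i = E$.

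Finally I would test whether $S\ne 0$ via Schwartz--Zippel (\Cref{schwartz-zippel}) over a field of characteristic two of size $\poly(n)$, repeating $O(\log n)$ times to drive the error probability down. Each evaluation of $S$ costs $(|E'|+1)\cdot 2^{|E\setminus E'|}\poly(n) = O^*(2^{|E\setminus E'|})$ evaluations of $P$, and each $P$-evaluation is a single Pfaffian computable in $\poly(n)$ time. Because $D$ is a dominating set, every $v\in V'$ has $c_v \ge 1$, so $|E'| \ge |V'|$ and $|E\setminus E'| \le m - |V'|$, giving the claimed $O^*(2^{m-|V'|})$ runtime in polynomial space.
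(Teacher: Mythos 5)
Your proposal starts in the right place with the coefficient extraction, but then takes a wrong turn. The compatibility you flag as the ``main obstacle'' is in fact automatic, and the reason is exactly the extraction step you already performed. A monomial $\monomial$ of $Q$ arises from $\monomial' = \monomial\cdot\prod_{e\in E\setminus E'}x_e$, a monomial of $P$ that \emph{by definition of coefficient extraction} has degree exactly one in each $x_e$ with $e\in E\setminus E'$. By \Cref{lem:generating-function}, $\monomial'$ has degree exactly $\deg_G(v)$ on $\set{x_e : e\ni v}$ for every $v$. Subtracting the forced unit contribution of each $e\ni v$ with $e\notin E'$ gives
$\deg_{\partial(v)}(\monomial) = \deg_G(v) - |\set{e\ni v : e\notin E'}| = \deg_{E'}(v) = c_v$
exactly, so the scenario you worry about --- an edge of $E\setminus E'$ appearing in several matchings and stealing incidences from $\partial(v)$ --- cannot occur among monomials of $Q$. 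Thus $Q$ is compatible with $\datroid$, and the paper's proof simply verifies this and applies \Cref{theorem:faster-sieving} with $\datroid$ (rank $|E'|$, $p=|V'|$ parts), paying $O^*(2^{|E'|-|V'|})$ evaluations of $Q$ and hence $O^*(2^{m-|V'|})$ evaluations of $P$ overall.

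The workaround you then invent does not do what you need. Your key claim that ``$Q$ is multilinear in $X'$'' is unjustified and false: the $x_e$-degree in a monomial of $P$ equals the number of matchings $M_i$ containing $e$, which can exceed one for $e\in E'$ even after fixing unit degrees on $E\setminus E'$ (the missing coverage is absorbed by some other edge of $\partial(v)$ being used twice). Moreover, $Q$ is already \emph{homogeneous} of degree $|E'|$ in $X'$ (because $P$ is homogeneous of degree $m$ in $X$ and you removed degree $m-|E'|$), so after substituting $x_e\mapsto x_e z$ the polynomial $Q'$ has the single $z$-degree $|E'|$ and your ``$[z^{|E'|}]$'' extraction returns $Q$ itself unchanged. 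It does not isolate monomials with support $E'$, it only isolates degree $|E'|$, which is everything. Consequently your algorithm tests whether $Q\ne 0$, which is strictly weaker than testing whether $Q$ has a monomial divisible by $\prod_{e\in E'}x_e$, and would produce false positives. A telltale sign that something is wrong is that $\datroid$ vanishes from your final algorithm entirely, whereas the whole content of the lemma is that the parity-based odd-support sieving inside \Cref{theorem:faster-sieving} is what eliminates exactly the non-multilinear monomials of $Q$ that you never ruled out.
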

\begin{proof}
    Let $Q(X',Y)$ be the coefficient of 
        \[\prod_{e\in E\setminus E'} x_e\]
    in $P(X,Y)$.
    By definition, $P(X,Y)$ contains a monomial divisible by $\prod_{e\in E} x_e$
    if and only if $Q(X',Y)$ contains a monomial divisible by $\prod_{e\in E'} x_e$.

Let $R(Y)$ be the coefficient of $\prod_{e\in E'} x_e$  in $Q$, viewed as a polynomial with variable set~$Y$. 
Note that the total degree of $R(Y)$ is $\poly(n)$.
Take a uniform random assignment over $\FF$ to the variables in $Y$. 
By \Cref{schwartz-zippel}, if $R$ is nonzero, then it remains nonzero with high probability after this evaluation, provided we take $|\FF| = \poly(n)$ to be sufficiently large. 
For the rest of this proof, we work with polynomials $P, Q, R$ after this random evaluation, so that $P$ is a polynomial in $X$, $Q$ is a polynomial in $X'$,
and $R$ is a field element. 

By the discussion in the first paragraph of this proof,
it suffices to check whether $Q(X')$ contains the monomial $\prod_{e\in E'} x_e$ in $O^*(2^{m-|V'|})$ time and polynomial space.

    We observe the following helpful property of $Q$. 

    \begin{claim}
    \label{lem:compatibility}
        The polynomial $Q(X')$ is compatible with the matroid $\datroid$.
    \end{claim}
    \begin{claimproof}
Take an arbitrary monomial $\monomial$ in $Q$.
By definition, there is a corresponding monomial 
\[\monomial' = \monomial\cdot \prod_{e \in E \setminus E'} x_e\]
in $P$.
By \Cref{lem:generating-function}, 
for every $v \in V$, $\monomial'$ has degree exactly $\deg_G(v)$ when restricted to the variables $x_e$ corresponding to edges $e$ containing $v$.
Consequently, for all $v \in V'$, $\monomial$ has degree exactly 
\[\deg_G(v) - |\{ e\not\in E' \mid e\ni v \}| = \deg_{E'}(v)\] 
when restricted to  variables $x_e$ with $e \in \partial(v)$.    
Since this holds for arbitrary monomials $\monomial$ in $Q$, the polynomial $Q$ is compatible with the partition matroid $\datroid$ as claimed. 
    \end{claimproof}

By \Cref{lem:generating-function}, $P$, viewed as a polynomial in $X$, is homogeneous of degree $|E| = m$. 
Hence $Q$, viewed as a polynomial in $X'$,
is homogeneous of degree $d = |E'|$. 
By \Cref{lem:compatibility}, $Q$ is compatible with $\datroid$.
Since $\datroid$ is a partition matroid with positive capacities and $p = |V'|$ parts in its underlying partition, by 
 \Cref{theorem:faster-sieving}
 we can determine whether $Q$ contains a monomial $\monomial$ in the $X'$ variables such that $\supp(\monomial)$ is a basis of $\datroid$ using
    \begin{equation}
    \label{eq:Q-eval-bound}
    O^*(d2^{d-p}) \le O^*(2^{|E'|-|V'|})
    \end{equation}
evaluations of $Q$.
Since $E'$ is the unique basis of $\datroid$,
this means we can test whether $Q(X')$ contains the monomial $\prod_{e\in E'} x_e$ using $O^*(2^{|E'| - |V'|})$ evaluations of $Q(X')$.

    By \Cref{lemma:coefficient-extraction}, $Q(X')$ can be evaluated at any point by computing a linear combination of $O^*(2^{m - |E'|})$ evaluations of $P(X)$.
    Combining this observation with \Cref{eq:Q-eval-bound}, we get that we can determine whether $Q(X')$ contains the monomial $\prod_{e\in E'}x_e$
    by computing a linear combination of 
    \[O^*(2^{m-|E'|} \cdot 2^{|E'|-|V'|}) \le O^*(2^{m - |V'|})\]
    evaluations of $P(X,Y)$.
    Each evaluation of $P(X,Y) = \Pf BAB^\top$ takes polynomial time, so the desired result follows. 
\end{proof}

\section{Edge Coloring Algorithm}
\label{sec:edge-color}

In this section, we present our algorithm for \edgecolor{} and prove \Cref{thm:edgecolor,thm:regular}.
Recall that in this problem, we are given an input graph $G$ with $n$ vertices, $m$ edges, and maximum degree $\Delta$.
To solve the problem, it suffices to determine whether $G$ admits a proper edge coloring using at most $\Delta$ colors (i.e., whether $\chi'(G)\le \Delta$).

\subsection{Removing Low Degree Vertices}

\begin{lemma}[Unit-Degree Deletion]
 \label{lem:delete-deg-1}
Let $G$ be a graph with a vertex $v$ of degree one. 
Let $G'$ be the graph obtained by deleting $v$ from $G$. 
Then $\chi'(G')\le \Delta$ if and only if $\chi'(G)\le \Delta$.
\end{lemma}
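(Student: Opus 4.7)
The plan is to handle the two implications separately, both by very short direct arguments, since the lemma is essentially a standard greedy-extension fact about edge colorings with a palette of size $\Delta$.

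For the forward direction ($\chi'(G) \le \Delta \Rightarrow \chi'(G') \le \Delta$), I would simply observe that $G'$ is a subgraph of $G$ (we deleted a single vertex and its unique incident edge). Any proper edge coloring of $G$ therefore restricts to a proper edge coloring of $G'$ using no additional colors, so $\chi'(G') \le \chi'(G) \le \Delta$.

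For the reverse direction ($\chi'(G') \le \Delta \Rightarrow \chi'(G) \le \Delta$), I would extend a given proper edge coloring $c'$ of $G'$ using palette $[\Delta]$ to a proper edge coloring of $G$. Let $u$ be the unique neighbor of $v$ in $G$, and let $e = \{v,u\}$ be the single edge incident to $v$. In $G'$, the vertex $u$ has degree $\deg_G(u) - 1 \le \Delta - 1$, so the edges incident to $u$ in $G'$ use at most $\Delta - 1$ distinct colors under $c'$. Since $|[\Delta]| = \Delta$, at least one color $\gamma \in [\Delta]$ is absent from the edges incident to $u$. Assigning color $\gamma$ to $e$ and keeping $c'$ on all other edges yields a proper edge coloring of $G$ using at most $\Delta$ colors, because the only newly-colored edge touches only the vertices $v$ (which had no other edges) and $u$ (where $\gamma$ was free).

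There is no real obstacle here; the argument is a pigeonhole extension. The only small thing to be careful about is that the palette has size $\Delta$ while the partial coloring on edges incident to $u$ uses at most $\Delta - 1$ colors, which is precisely what makes a free color available. No Vizing-type fan arguments are needed because the deleted vertex has degree one, giving exactly one slack unit at the neighbor $u$.
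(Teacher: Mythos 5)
Your proof is correct and follows essentially the same argument as the paper: the forward direction by restriction, and the reverse direction by noting that $u$'s degree in $G'$ is at most $\Delta - 1$, so a free color exists for the edge $\{v,u\}$ by pigeonhole.
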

\begin{proof}
    If $G$ admits a proper edge coloring with $\Delta$ colors, then this coloring is also a proper edge coloring for $G'$ with $\Delta$ colors.

    Conversely, suppose $G'$ admits a proper edge coloring with $\Delta$ colors.
    Since $\deg_G(v) = 1$, there is a unique vertex $w$ in $G$ that $v$ is adjacent to.
    Then 
    \[\deg_{G'}(w) \le \deg_G(w) - 1 \le \Delta-1\]
    because $G$ has maximum degree $\Delta$.
    So vertex $w$ has edges using at most $(\Delta-1)$ distinct colors incident to it. 
    Taking the coloring of $G'$ and  assigning edge $\set{v,w}$ a color not used by the edges incident to $w$ in $G'$ recovers a proper edge coloring of $G$ with at most $\Delta$ colors.    
\end{proof}

\subsection{Instantiating the Template}

Recall the definitions from \Cref{sec:template}.
We view the \edgecolor{} problem as a special case of \listedgecolor{},
where $k=\Delta$ and every edge $e$ is assigned the list $L_e = [\Delta]$.
For each vertex $v$,
we set $\matroid_v$ to be the uniform matroid over $S_v$ of rank $\deg_G(v)$.
Then by \Cref{eq:W-matroid}, $\watroid$ is the partition matroid over $W$ defined by the partition
    \[W = \bigsqcup_{v\in V} S_v\]
and the capacities $c_v = \deg_G(v)$ for parts $S_v$.
By \Cref{prop:partition}, we can construct the matrix $B$ representing $\watroid$ in polynomial time.
In this case, condition 3 is identical to condition 2 in the definition of the collection $\cal{C}$.

\begin{lemma}[Characterization by Polynomials]
\label{lem:edgecolor-char}
    We have $\chi'(G)\le \Delta$ if and only if the polynomial $P(X,Y)$
    has a monomial divisible by $\prod_{e\in E} x_e$.
\end{lemma}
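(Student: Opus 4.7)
The plan is to derive this characterization directly from Lemma~\ref{lem:generating-function}, which expresses $P(X,Y)$ as a sum over tuples in the family $\cal{C}$. First, I will specialize the definition of $\cal{C}$ to the edge-coloring setting. Since every list equals $[\Delta]$, condition~1 in the definition of $\cal{C}$ is vacuous. Since each $\matroid_v$ is the uniform matroid on $S_v$ of rank $\deg_G(v)$, every subset of $S_v$ of size at most $\deg_G(v)$ is independent, so condition~3 collapses into condition~2. Thus in this setting, $\cal{C}$ is simply the set of $k$-tuples of matchings $(M_1,\dots,M_k)$ in $G$ in which each vertex $v$ is covered by exactly $\deg_G(v)$ of the matchings.

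Next, I will read off from Lemma~\ref{lem:generating-function} that every monomial appearing in $P$ has an $x$-part of the form $\prod_{e\in M_1\cup\dots\cup M_k} x_e$. Therefore a monomial divisible by $\prod_{e\in E} x_e$ exists if and only if some tuple $(M_1,\dots,M_k)\in\cal{C}$ satisfies $M_1\cup\dots\cup M_k=E$. A key sub-step will be to argue that there is no cancellation: the $y$-part $\prod_i\prod_{e\in M_i} y_{ei}$ of each term uniquely determines the tuple (since $y_{ei}$ records both the edge and the matching index), so distinct tuples contribute distinct monomials, and since the coefficients $c_{M_1,\dots,M_k}$ are nonzero, no two terms can cancel.

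It then remains to identify the tuples $(M_1,\dots,M_k)\in\cal{C}$ with $\bigcup_i M_i=E$ as exactly the proper $\Delta$-edge colorings of $G$. In one direction, a proper $\Delta$-edge coloring yields matchings $M_1,\dots,M_k$ that partition $E$ (hence cover $E$), and each vertex $v$ is incident to $\deg_G(v)$ edges of distinct colors, so $v$ lies in precisely $\deg_G(v)$ matchings. In the converse direction, I will use a double-counting argument: the condition that every $v$ is covered by $\deg_G(v)$ matchings forces $\sum_i |M_i| = \tfrac{1}{2}\sum_v \deg_G(v)= m$, and combined with $\bigcup_i M_i = E$ this forces the matchings to be pairwise edge-disjoint. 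Assigning color $i$ to the edges of $M_i$ then yields a proper edge coloring using $\Delta$ colors.

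The main (mild) obstacle is the no-cancellation argument and the counting step in the converse direction; neither is hard, but both require carefully unpacking the definition of $\cal{C}$ in light of the uniform-matroid choice of $\matroid_v$ so that conditions~2 and~3 coincide. Once those are in place the equivalence follows immediately from Lemma~\ref{lem:generating-function}.
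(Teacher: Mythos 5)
Your proposal is correct and follows essentially the same route as the paper: both specialize the family $\mathcal{C}$ by observing that condition 3 collapses into condition 2 for uniform matroids, invoke Lemma~\ref{lem:generating-function}, and match tuples covering $E$ with proper $\Delta$-edge-colorings. The extra care you take with no-cancellation (via the $y$-variables) and the double-counting step showing $\sum_i |M_i| = m$ are sound and are exactly the details the paper compresses into the remark that $P$ is homogeneous of degree $m$ in $X$.
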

\begin{proof}
Suppose $P(X,Y) = \Pf BAB^\top$ contains a monomial divisible by $\prod_{e\in E} x_e$.
Since $P$ is homogeneous of degree $m$ in the $X$ variables, 
by \Cref{lem:generating-function} this means
there exists a $\Delta$-tuple of \emph{edge-disjoint} matchings $(M_1, \dots, M_{\Delta})\in \cal{C}$, such that every edge of $G$ appears in some matching $M_i$. 
Consequently, assigning the edges in $M_i$ color $i$ yields a proper edge coloring of $G$ using at most $\Delta$ colors.

Conversely, suppose $G$ has a proper edge coloring with $\Delta$ colors. 
Without loss of generality, let the set of colors used be $[\Delta]$.
For each $i\in [\Delta]$,
let $M_i$ be the set of edges assigned color $i$. 
Since this is a proper edge coloring, each $M_i$ is a matching in $G$. 
Since every edge is assigned a unique color, each vertex $v$ has exactly $\deg_G(v)$ edges across the matchings $M_i$.
Hence $(M_1, \dots, M_{\Delta})\in\cal{C}$.
Then by \Cref{lem:generating-function}, $P$ contains a monomial divisible by $\prod_{e\in E} x_e$.
\end{proof}

We can now prove \Cref{lem:domset} as a simple application of \Cref{lem:edgecolor-char}.

\domset*
\begin{proof}
    Let $G$ have vertex set $V$ and edge set $E$, and let $V' = V\setminus D$.
By \Cref{lem:edgecolor-char}, we have $\chi'(G)\le \Delta$ if and only if $P(X,Y)$ has a monomial divisible by $\prod_{e\in E} x_e$.
By \Cref{lem:template}, we can determine whether such a monomial exists in 
    \[O^*(2^{m - |V'|}) \le O^*(2^{m-n+|D|})\]
time and polynomial space, as desired. 
\end{proof}

Having established \Cref{lem:domset}, we are ready to present out \edgecolor{} algorithms.

\main*
\begin{proof}
To solve \edgecolor{},
it suffices to determine whether $\chi'(G)\le \Delta$.

If $G$ has a vertex of degree one, delete it.
Keep performing such deletions, until we are left with a graph containing no vertices of degree one.
By repeated application of \Cref{lem:delete-deg-1}, the resulting graph admits a proper edge coloring using $\Delta$ colors if and only if $\chi'(G)\le \Delta$.
If the resulting graph has constant size, we can determine this trivially.
Each such deletion decreases the number of edges and vertices in the graph by one, and thus also decreases the value of $(m-3n/5)$.
Consequently, to show the desired result, we may assume without loss of generality that $G$ has minimum degree two, and at least eight vertices. 

Since $G$ has minimum degree two and $n\ge 8$ nodes, by \Cref{lem:dominating-set-2}
the graph has a dominating set of size at most $2n/5$. 
By \Cref{lem:dominating-set}, we can then find a dominating set $D$ in $G$ with $|D|\le 2n/5$  in $O^*(2^{m-3n/5})$ time and polynomial space. 
Hence by \Cref{lem:domset} we can solve \edgecolor{} on $G$ in $O^*(2^{m-3n/5})$ time and polynomial space, as desired. 
\end{proof}

\regular*
\begin{proof}
Let $G$ be the input graph. 
    By exhaustive search over all subsets of vertices,
    we can find a minimum-size dominating set $D$ of $G$ in $O^*(2^n)$ time and polynomial space.
    Being $d$-regular, $G$ has $m=dn/2 \ge 3n$ edges, so we find $D$ in $O^*(2^{m-n})\le O^*(2^{m-\alpha_d n})$ time.
    
    Since $G$ is $d$-regular, by \Cref{lem:regular-dom} we must have $|D|\le n(H_{d+1}/(d+1))$.
    Then by \Cref{lem:domset} we can solve \edgecolor{} on $G$ in 
    \[O^*(2^{m-n+|D|}) \le O^*(2^{m-\alpha_d n})\] 
    time as claimed. 
\end{proof}

\section{List Edge Coloring Algorithm}
\label{sec:list-color}

In this section, we present our algorithm for \listedgecolor{} and prove \Cref{thm:main}.
Recall that in this problem, we are given a graph $G$ with $n$ vertices and $m$ edges, and for each edge $e$ have a  list $L_e\sub [k]$ of admissible colors for $e$. 
We are tasked with determining whether $G$ admits a proper edge coloring  where edges are assigned colors from their lists.

\subsection{Removing Low Degree Vertices}
\label{subsec:list-degree}

For \textsc{List Edge Coloring}, obtaining a statement analogous to \cref{lem:delete-deg-1} appears challenging.
Instead, we  handle the edges incident to unit-degree vertices \emph{algebraically}.
To allow for this, we first make a simplifying assumption about the graph:
specifically, we will argue that we can assume that the vertices deleted during the exhaustive removal of unit-degree vertices form stars rather than trees.

Let $\Gnew$ be the graph obtained by starting with $G$, and then exhaustively deleting all unit-degree vertices along with their incident edges.
If $\Gnew$ is empty, then $G$ must have been a tree.
In this case, we can solve the \listedgecolor{} problem in polynomial time on $G$. 

\begin{proposition}
\label{prop:tree}
    \listedgecolor{} can be solved in polynomial time over trees.
\end{proposition}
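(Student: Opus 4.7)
The plan is to solve \listedgecolor{} on a tree via a standard bottom-up dynamic program, rooting the input tree $G$ at an arbitrary vertex $r$. For each non-root vertex $v$, let $e_v$ denote the edge connecting $v$ to its parent, and for each color $c \in L_{e_v}$ define $T[v,c]$ to be true if and only if the subtree rooted at $v$ together with the edge $e_v$ admits a proper list edge coloring in which $e_v$ receives color $c$. The answer to the overall problem can then be read off once all values $T[u,\cdot]$ are known for the children $u$ of $r$.

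Processing the vertices in post-order, the base case is immediate: when $v$ is a leaf, the subtree rooted at $v$ contains only the edge $e_v$, so $T[v,c]$ is true for every $c \in L_{e_v}$. For the inductive step, suppose $v$ has children $u_1, \ldots, u_s$ joined to $v$ by edges $f_1, \ldots, f_s$, and all values $T[u_i, \cdot]$ are already computed. Then $T[v,c]$ is true if and only if there exist colors $c_1, \ldots, c_s$ with $c_i \in L_{f_i}$, $c_i \neq c$, $T[u_i, c_i]$ true for each $i$, and the $c_i$ pairwise distinct. This is precisely a bipartite matching question: build a bipartite graph with $\set{f_1, \ldots, f_s}$ on one side and $[k] \setminus \set{c}$ on the other, placing an edge from $f_i$ to $c'$ whenever $c' \in L_{f_i}$ and $T[u_i, c']$ is true, and then set $T[v,c]$ to true iff this bipartite graph admits a matching saturating the $f_i$ side. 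For the root $r$, we solve the analogous matching problem without excluding any color, and return YES iff a saturating matching exists.

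Correctness follows by induction on the height of $v$, since the colorings of the subtrees rooted at distinct children of $v$ interact only through the requirement that the edges $f_i$ receive pairwise distinct colors different from $c$, which is exactly what the matching formulation enforces. For the running time, there are at most $O(nk)$ states $(v,c)$, and each requires solving a single polynomial-sized bipartite matching instance, which takes polynomial time; hence the total runtime is polynomial in the input size. There is no real obstacle here: this is essentially a textbook DP, included only for completeness, and the tree structure trivially sidesteps the difficulties that make \listedgecolor{} hard on general graphs.
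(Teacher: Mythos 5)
Your proof is correct and self-contained, but it takes a different route from the paper: the paper does not prove \Cref{prop:tree} directly, instead observing (following Marx) that it is a consequence of a theorem of Marcotte and Seymour on extending edge colorings in bipartite multigraphs. You give an explicit bottom-up dynamic program on the rooted tree, where the state $T[v,c]$ records whether the subtree under $v$ can be list-colored with the parent edge $e_v$ forced to receive color $c$, and the transition at $v$ reduces to a bipartite matching between the child edges $f_1,\dots,f_s$ and the colors in $[k]\setminus\{c\}$, with $f_i$ adjacent to $c'$ exactly when $c'\in L_{f_i}$ and $T[u_i,c']$ holds. The recurrence is sound because the only coupling between the subtrees of distinct children is that $e_v,f_1,\dots,f_s$ must receive pairwise distinct colors, which is precisely what a matching saturating the $f_i$ side enforces; the base case, the unconstrained matching at the root, and the $O(nk)$ states each costing one polynomial-size matching all check out. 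Your argument is more elementary and makes the proposition genuinely self-contained, at the cost of some length; the paper's citation is terser but leans on a nontrivial min-max result for bipartite multigraphs rather than exposing the underlying tree structure directly.
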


It is observed in \cite[Section 1]{DBLP:journals/ipl/Marx04} that
\Cref{prop:tree} follows from \cite{DBLP:journals/jgt/MarcotteS90}.

Suppose $G$ is not a tree.
Then $\Gnew$ is nonempty, and thus has minimum degree two.

Let $V_1$ denote the set of vertices deleted from $G$.
Then, $G[V_1]$ must be a forest.  
Let $\mathcal{T}$ be the collection of connected components of $G[V_1]$.
For each vertex $v$ in $\Gnew$, let $\mathcal{T}_v$ be the set of trees $T_v \in \mathcal{T}$ such that $v$ is adjacent to a vertex of $T$ by an edge in $G$. 
Since $G$ is connected, every tree in $\cal{T}$ belongs to $\mathcal{T}_v$ for some vertex $v$ in $\Gnew$.
The next result shows that in fact each tree in $\cal{T}$ belongs to a \emph{unique} set $\mathcal{T}_v$.

\begin{lemma}[Unique Connection]
\label{lem:unique-connection}
    For distinct vertices $v\neq w$ in $\Gnew$, we have $\mathcal{T}_v\cap\mathcal{T}_w=\emptyset$.
\end{lemma}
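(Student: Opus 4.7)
My plan is to argue by contradiction: if some tree $T \in \mathcal{T}$ lies in both $\mathcal{T}_v$ and $\mathcal{T}_w$ for distinct $v, w \in V(\Gnew)$, I will derive a contradiction with the defining property of the deletion process producing $\Gnew$, namely that a vertex is removed from the graph only when its current degree is exactly one.

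First I would build a concrete witness path in $G$. Since $T \in \mathcal{T}_v$ there is an edge in $G$ from $v$ to some $t_v \in V(T)$, and analogously an edge from $w$ to some $t_w \in V(T)$. Because $T$ is connected there is a path in $T$ from $t_v$ to $t_w$; prepending $v$ and appending $w$ yields a path
\[
P \colon v = u_0,\ u_1,\ \dots,\ u_{k-1},\ u_k = w
\]
in $G$ whose internal vertices $u_1, \dots, u_{k-1}$ all lie in $V(T) \subseteq V_1$. Since $v, w \in V(\Gnew)$ while $t_v, t_w \in V_1$, the vertices $t_v$ and $t_w$ are genuinely internal, so $P$ has at least one internal vertex and $k \geq 2$.

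Next I would inspect the deletion sequence that produced $\Gnew$ and focus on the internal vertex of $P$ that is removed earliest; call it $u_i$ with $1 \leq i \leq k-1$. Its two path-neighbors $u_{i-1}$ and $u_{i+1}$ are both still present at the moment of deletion: if $u_{i-1}$ equals $v$ (the case $i=1$) or $u_{i+1}$ equals $w$ (the case $i=k-1$), then these vertices are in $V(\Gnew)$ and are never deleted, while in the remaining cases $u_{i-1}, u_{i+1}$ are internal vertices of $P$ that have not yet been deleted by minimality of $i$. Since $P$ is a path, $u_{i-1} \neq u_{i+1}$, so $u_i$ has at least two distinct neighbors still alive when it is deleted, contradicting the fact that the deletion process only removes degree-one vertices.

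I do not expect a serious obstacle here; the only mild subtlety is that the deletion process allows some freedom in the order of deletions, so I will phrase the argument so that it applies to an arbitrary valid deletion sequence realising $\Gnew$. Concretely, the choice of "earliest" in the preceding paragraph only uses the total order induced by any fixed valid deletion sequence, and the contradiction obtained is independent of that choice, which completes the proof.
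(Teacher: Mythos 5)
Your proof is correct and follows essentially the same argument as the paper: build the path from $v$ to $w$ through the tree $T$, take the first internal vertex deleted during the process, and observe that both its path-neighbors are still present, so it had degree at least two and could not have been removed. Your version is slightly more explicit about why the two neighbors are distinct and still alive, but the reasoning is the same.
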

\begin{proof}
    Suppose to the contrary that there exists a tree $T\in \mathcal{T}_v\cap\mathcal{T}_w$.
    Let $e_v$ and $e_w$ be the edges connecting $T$ to vertices $v$ and $w$ respectively.
    Then there is a simple path $P$ from $v$ to $w$ in $G$,
    which begins with $e_v$, then uses edges in $T$, and ends in $w$. 
    Since $T$ only contains vertices from $V_1$, 
    every vertex in $T$ was deleted at some point for having unit degree.
    Let $u$ be the first vertex in $P$ that was deleted in this way.
    Then at the moment $u$ would be deleted, the path $P$ must still exist.
    But $u\in V_1$ is an internal node of $P$, and so has degree at least two in any graph containing $P$.
    This contradicts the fact that $u$ would be deleted for having unit degree.
    So our initial assumption was false, and $\mathcal{T}_v\cap\mathcal{T}_w=\emptyset$ as claimed. 
\end{proof}

Fix a vertex $v$ and tree $T_v \in \mathcal{T}_v$.
Suppose that $T_v$ is connected to $v$ by an edge $e = \{ v, w \} \in E$. 
Let $(T_v + e)$ denote the graph formed by adding edge $e$ to $T_v$.
Since $w$ was deleted at a time when it had degree one in the initial graph, it must be the case that $v$ has degree one in $(T_v+e)$, and thus $(T_v + e)$ is a tree with leaf $v$. 

For each color $i \in L_{e}$, we solve \textsc{List Edge Coloring}  on the tree $(T_v+e)$ where $L_e$ is replaced with $\{ i \}$ but all other edges in $T_v$ retain their lists from the input. 
By \Cref{prop:tree}, this takes polynomial time. 
After solving these instances,
we can identify the set $L_e'$ of all colors $i \in [k]$ such that $(T_v + e)$ admits a proper edge coloring with respect to the input lists in which $e$ receives color $i$.

\begin{lemma}[Pruning Trees]
\label{lem:prune}
Let $(G-T_v+e)$ be the graph obtained by deleting $T_v$ from $G$, and then adding back in vertex $w$ and edge $e = \set{v,w}$.
Then $G$ admits a proper edge coloring with respect to the lists $L_e$ if and only if $(G-T_v+e)$ admits a proper list edge coloring where edge $e$ has list $L_e'$, and all other edges retain their original lists. 
\end{lemma}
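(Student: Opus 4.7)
The plan is to prove both directions by transporting colorings between the full graph $G$ and the two subgraphs $(T_v + e)$ and $(G - T_v + e)$ whose union is $G$. The key structural observation is \Cref{lem:unique-connection} (together with the assumption that $e$ is the unique edge connecting $T_v$ to $v$), which implies that the only edge of $G$ joining a vertex of $T_v$ to a vertex outside $T_v$ is $e$ itself. In particular, the edge sets of $(T_v + e)$ and $(G - T_v + e)$ partition $E(G)$ up to the single shared edge $e$, and every vertex of $G$ other than $v$ has all its incident edges lying entirely in one of the two subgraphs.

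For the forward direction, assume $G$ admits a proper list edge coloring $c$ with respect to the $L_f$. Restricting $c$ to $(T_v + e)$ yields a proper list edge coloring of $(T_v + e)$ in which $e$ receives color $c(e)$; by the definition of $L_e'$, this forces $c(e) \in L_e'$. Restricting $c$ to $(G - T_v + e)$ then gives a proper list edge coloring of that subgraph with respect to the original lists on every edge, where $e$ additionally satisfies $c(e) \in L_e'$. This is exactly a proper list edge coloring of $(G - T_v + e)$ where $e$ uses the list $L_e'$.

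For the converse, suppose $(G - T_v + e)$ admits a proper list edge coloring $c'$ with $e$ restricted to $L_e'$. Since $c'(e) \in L_e'$, the definition of $L_e'$ provides a proper list edge coloring $c''$ of $(T_v + e)$, with respect to the original lists, such that $c''(e) = c'(e)$. Define a coloring $c$ of $G$ by setting $c(f) = c'(f)$ for every edge $f$ of $(G - T_v + e)$ and $c(f) = c''(f)$ for every edge $f$ inside $T_v$; the two rules agree on $e$. Every edge receives a color from its original list. The main (and only nontrivial) thing to check is that no two edges sharing an endpoint receive the same color. Thanks to \Cref{lem:unique-connection}, for every vertex $u \neq v$ in $G$, all edges incident to $u$ lie entirely in one of the two subgraphs, so properness is inherited from $c''$ or $c'$; for $u = v$, every edge incident to $v$ lies in $(G - T_v + e)$ and is governed by $c'$.

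The main obstacle, and the reason \Cref{lem:unique-connection} is invoked, is verifying properness at the ``interface'' vertex $w$ and ensuring no adjacency conflict arises when gluing $c'$ and $c''$. If some vertex $u \neq v$ of $\Gnew$ were also adjacent to a vertex of $T_v$, then $u$ would have edges in both subgraphs and the two colorings might clash there; the unique-connection lemma rules this out, so the gluing step is safe. Once this is established, the two directions combine to give the desired equivalence.
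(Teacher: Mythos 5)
Your proof is correct and follows essentially the same approach as the paper's: restrict a coloring of $G$ to the two overlapping pieces $(T_v + e)$ and $(G - T_v + e)$ for the forward direction, and glue compatible colorings of the two pieces (which agree on the shared edge $e$) for the converse, invoking \Cref{lem:unique-connection} together with the observation that $e$ is the only edge leaving $T_v$ to conclude that the glued coloring is proper. Your version spells out the vertex-by-vertex properness check slightly more explicitly than the paper, but the structure of the argument is identical.
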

\begin{proof}
    Suppose $G$ has a proper edge coloring for the lists $L_e$.
    This coloring restricts to a proper edge coloring of the subgraph  $(T_v+e)$ with the same lists.
    Then by definition of the colors in $L_e'$, edge $e$ must be assigned a color from $L_e'$ in this coloring.
    So the original coloring restricts to a proper list edge coloring of the subgraph $(G-T_v+e)$ where the list $L_e$ for $e$ is replaced with $L_e'$, as claimed. 

    Conversely, suppose that $(G-T_v+e)$ has a proper list edge coloring where edge $e$ has list $L_e'$, and all other edges retain their lists from the input. Let $i\in L_e'$ be the color assigned to edge $e$.
    By definition of $L'_e$, 
    we know that $(T_v+e)$ admits a proper list edge coloring with the input lists where $e$ is assigned color $i$. 
    We combine this coloring with the previous coloring to obtain a list edge coloring of $G$ (both colorings assign edge $e$ the same color, and  the graphs have disjoint edge sets otherwise, so this is well-defined).     
    Since $e$ is the unique edge connecting $T_v$ to vertex $v$, and since by \Cref{lem:unique-connection} the graph $G$ does not contain edges connecting $T_v$ to any vertex besides $v$ in $G'$, 
    this combination yields a proper list edge coloring for $G$ with the desired properties.
\end{proof}

By iterating over all vertices $v$ in $\Gnew$, all trees $T_v\in\mathcal{T}_v$, and all colors in $[k]$,
the discussion in the paragraph before \Cref{lem:prune}
shows that we can determine the list of admissible colors $L_e'\sub L_e$ for each edge $e$ connecting some tree in $\mathcal{T}$ to a vertex in $\Gnew$ using \Cref{prop:tree} in polynomial time. 
If for some edge $e$ we have $L_e'=\emptyset$,
we get by definition of $L_e'$ that the instance does not admit a proper list edge coloring. 

So suppose that $L'_e\neq\emptyset$ for all edges $e$ connecting trees in $\mathcal{T}$ to vertices in $\Gnew$.
By repeated application of \Cref{lem:prune}, we can delete each tree $T\in\mathcal{T}$ from $G$, add back in the unique edge $e$ which connected $T$ to $\Gnew$, and replace the list of $L_e$ of $e$ with $L_e'$,
without changing whether the graph admits a proper list edge coloring.
This transformation ultimately replaces each tree in $\mathcal{T}$ with a single edge exiting a vertex in $\Gnew$.

Using the procedure from the previous paragraph, we get that without loss of generality, we may assume that $V_1$ consists solely of unit-degree vertices in $G$ (so that $\mathcal{T}$ consists of isolated nodes). 
For each vertex $v$ in $\Gnew$, we let $G_v$ denote the star graph formed by $v$ and its unit-degree neighbors in $V_1$.
We let $\Vnew$ and $\Enew$ be the vertex and edge sets of $\Gnew$, and
$\nnew$ and $\mnew$ denote the number of vertices and edges in $\Gnew$.
We can check in polynomial time that $G_v$ has a proper list edge coloring for all vertices $v$.
If $G_v$ does not have a proper list edge coloring for some $v$, we can immedately return NO in the \listedgecolor{} problem.
So in the sequel, we assume each $G_v$ admits a proper list edge coloring.

\subsection{Instantiating the Template}

We apply the matroid, matrix, and polynomial constructions from \Cref{sec:template}
to the graph $\Gnew$ described in \Cref{subsec:list-degree}, instead of the original input graph $G$. 

We make use of the matroid guaranteed by the following lemma. 

\begin{restatable}[Extension Matroid]{lemma}{extend}
\label{lem:extension}
  For each vertex $v$ in $\Gnew$, there is a linear matroid with ground set $S_v$, whose bases  $B \subseteq S_v$ are the sets of size $\deg_{\Gnew}(v)$ such that  $G_v$ admits a proper edge coloring where each edge receives a color from $L_e \setminus \{ i \mid v_i \in B \}$.
  Moreover, we can construct a $\deg_{\Gnew}(v)\times k$ matrix representing this matroid in randomized polynomial time with one-sided error.
  We call this matroid the \emph{extension matroid} of $\Gnew$ at $v$.
\end{restatable}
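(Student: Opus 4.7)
The plan is to recognize the extension matroid as a rank-truncation of the dual of a transversal matroid. Let $e_1,\dots,e_t$ be the edges of $G_v$ (where $t = \deg_G(v) - \deg_{\Gnew}(v)$ is the number of unit-degree leaves attached to $v$), and identify $S_v$ with $[k]$ via $v_i \leftrightarrow i$. Let $\cT$ be the transversal matroid on $[k]$ whose independent sets are the partial transversals of the list system $(L_{e_1},\dots,L_{e_t})$. Since Section~5.1 assumes (without loss of generality) that $G_v$ admits a proper list edge coloring, $\cT$ has rank exactly $t$.

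The key observation is that, for a set $B \sub S_v$ with $I = \set{i \mid v_i \in B}$, the star $G_v$ admits a proper edge coloring with each $e_j$ receiving a color from $L_{e_j}\setminus I$ if and only if $L_{e_1}\setminus I, \dots, L_{e_t}\setminus I$ admit a system of distinct representatives. By the standard transversal-matroid characterization, this is equivalent to $[k]\setminus I$ containing a basis of $\cT$, which in turn is equivalent to $I$ being independent in the dual matroid $\cT^*$. Using the preliminary assumption $\deg_G(v)\le k$, the dual has rank $k-t = k - \deg_G(v) + \deg_{\Gnew}(v) \ge \deg_{\Gnew}(v)$, so the bases of the rank-$\deg_{\Gnew}(v)$ truncation of $\cT^*$ are exactly the sets $B$ described in the statement. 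This identifies the extension matroid as a linear matroid.

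To build the $\deg_{\Gnew}(v)\times k$ representation, I would compose three standard randomized gadgets, each implementable over a sufficiently large extension of $\FF$: (i) the classical matrix representation of $\cT$ as a $t\times k$ matrix whose $(j,i)$ entry is an independent uniformly random element of $\FF$ when $i\in L_{e_j}$ and $0$ otherwise; (ii) the standard dualization that converts a representation of $\cT$ into a $(k-t)\times k$ representation of $\cT^*$ (take a basis of the kernel of the first matrix); and (iii) truncation to rank $\deg_{\Gnew}(v)$ by left-multiplying by a random $\deg_{\Gnew}(v)\times(k-t)$ matrix over $\FF$. Chaining these produces the desired $\deg_{\Gnew}(v)\times k$ matrix in polynomial time. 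The degenerate case $t=0$ (vertex $v$ has no attached leaves) is consistent: $\cT$ has rank zero, so the truncated dual is the uniform matroid of rank $\deg_{\Gnew}(v)$ on $S_v$, matching the construction used in the plain \edgecolor{} algorithm.

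The main obstacle is arguing that all three randomized steps simultaneously realize the correct matroid structure with high probability; this is precisely where the one-sided error in the lemma comes from. Correctness of each step reduces to showing that a specific polynomial in the random entries is not identically zero on matroid bases, after which the Schwartz-Zippel lemma (\Cref{schwartz-zippel}) over an $\FF$ of size $\poly(n)$ gives error probability $1/\poly(n)$; a union bound over the $O(n)$ vertices and the bounded number of gadgets then controls the total failure probability. The field-size bookkeeping and the verification that composing (ii) and (iii) yields a valid representation of the truncation are routine and do not require new ideas.
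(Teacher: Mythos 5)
Your proof is correct and follows essentially the same route as the paper: both identify the extension matroid as the $\deg_{\Gnew}(v)$-truncation of the dual of the transversal matroid of the list system on $G_v$, and both build the representation by chaining the standard transversal-representation, dualization, and truncation constructions. The only difference is cosmetic bookkeeping (you inline the rank arithmetic and the $t=0$ sanity check, while the paper cites propositions for each step), so no gap here.
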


\Cref{lem:extension} is proved in \Cref{sec:construction}.

For each vertex $v$,
we set $\matroid_v$ to be the extension matroid over $S_v$ 
defined in \Cref{lem:extension}.
By \Cref{eq:W-matroid}, $\watroid$ is the direct sum of the matroids. 
By \Cref{prop:matroid-sum,lem:extension}, we can construct the matrix $B$ representing $\watroid$ in polynomial time.
In this case, condition 3 of collection $\cal{C}$ states that for a tuple $(M_1, \dots, M_k)$ to belong to $\cal{C}$, it must be the case that 
% \begin{itemize} 
% \item 
for every vertex $v$ in $\Gnew$,
$G_v$ admits a proper list edge coloring for the input lists,
where no edge in $G_v$ is assigned any color $i\in [k]$ such that $M_i$ contains an edge with $v$ as an endpoint. 
% \end{itemize}

Intuitively, this condition ensures we only enumerate over tuples of matchings $(M_1, \dots, M_k)$ with the property that the list edge coloring on $\Gnew$ induced by assigning the edges in $M_i$ color $i$ can be \emph{extended} to a proper list edge coloring of the full graph $G$.

\begin{lemma}[List Coloring by Polynomials]
\label{lem:listcolor-char}
    The graph $G$ admits a proper list edge coloring for  lists $L_e$ if and only if the polynomial $P(X,Y)$
    has a monomial divisible by $\prod_{e\in \Enew} x_e$.
\end{lemma}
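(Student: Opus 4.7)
The plan is to prove both directions by using \Cref{lem:generating-function} to relate monomials of $P$ to tuples in $\mathcal{C}$, and then using the defining property of the extension matroid (\Cref{lem:extension}) to translate the matroid condition on $\mathcal{C}$ into an extendability condition on the list edge colorings of the star graphs $G_v$. By the reductions from \Cref{subsec:list-degree}, I may assume $V_1$ consists only of unit-degree vertices, so each $G_v$ is a star, the $G_v$'s are edge-disjoint, and together with $\Gnew$ they cover all of $G$.

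For the forward direction, I would start from a proper list edge coloring $c:E\to[k]$ of $G$, and for each color $i\in[k]$ define $M_i\subseteq\Enew$ to be the edges of $\Gnew$ that receive color $i$ under $c$. Each $M_i$ is then a matching, only uses edges of $\Enew$ whose list contains $i$ (condition~1), and since $c$ is a proper coloring each vertex $v\in\Vnew$ appears in exactly $\deg_{\Gnew}(v)$ of the $M_i$'s (condition~2). For condition~3, I let $B_v=\{v_i \mid v \text{ is incident to some edge of }M_i\}$; since $c$ restricted to $G_v$ is a proper list edge coloring avoiding exactly the colors in $\{i:v_i\in B_v\}$ on the edges of $G_v$, $B_v$ is a basis of the extension matroid, and hence is independent. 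Thus $(M_1,\dots,M_k)\in\mathcal{C}$, the matchings are edge-disjoint, and their union is all of $\Enew$. Then by \Cref{lem:generating-function} the polynomial $P(X,Y)$ contains the corresponding monomial, which is divisible by $\prod_{e\in\Enew}x_e$.

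For the reverse direction, I would start from a monomial of $P$ divisible by $\prod_{e\in\Enew}x_e$. Since $P$ is homogeneous of degree $\mnew$ in the variables $X$ restricted to $\Enew$, \Cref{lem:generating-function} gives a tuple $(M_1,\dots,M_k)\in\mathcal{C}$ of pairwise edge-disjoint matchings whose union is $\Enew$. Define a coloring on $\Enew$ by assigning each edge of $M_i$ the color $i$; conditions~1 and the edge-disjointness immediately make this a proper list edge coloring of $\Gnew$. For each $v\in\Vnew$, let $B_v=\{v_i \mid v\text{ incident to an edge of }M_i\}$; condition~2 gives $|B_v|=\deg_{\Gnew}(v)$, and condition~3 gives that $B_v$ is independent, hence a basis of the extension matroid at $v$. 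By the definition of the extension matroid (\Cref{lem:extension}), $G_v$ admits a proper list edge coloring with respect to the reduced lists $L_e\setminus\{i:v_i\in B_v\}$. I would then combine this coloring of $G_v$ with the coloring of $\Gnew$ defined above: at the vertex $v$, the colors used by edges of $G_v$ are disjoint from those used by edges of $\Gnew$ at $v$ by construction, and the other endpoints of edges in $G_v$ lie in $V_1$ and have degree one, so no conflicts arise there. Since distinct stars $G_v,G_w$ are edge-disjoint and share no non-$v$, non-$w$ vertex, these combinations across all $v\in\Vnew$ together with the coloring of $\Gnew$ assemble into a proper list edge coloring of $G$.

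The main technical care goes into verifying that combining the $G_v$-colorings with the $\Gnew$-coloring is indeed conflict-free; this is exactly what the extension matroid was designed to enforce, and it works only because the preprocessing in \Cref{subsec:list-degree} guarantees that each $G_v$ is a star of unit-degree neighbors, so conflicts can only occur at the central vertex $v$, and there the matroid condition ensures the color classes are disjoint.
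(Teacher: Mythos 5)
Your proof is correct and follows essentially the same route as the paper's: both directions translate monomials of $P$ to tuples in $\mathcal{C}$ via the enumerating-matchings lemma, then read condition 3 through the defining property of the extension matroid to attach the star colorings at each $v\in\Vnew$. You spell out the bookkeeping of the combination step (edge-disjointness and vertex-disjointness of the stars, why conflicts can only occur at the centers) a bit more explicitly than the paper does, but the argument is the same.
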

\begin{proof}
Suppose $P(X,Y) = \Pf BAB^\top$ contains a monomial divisible by $\prod_{e\in \Enew} x_e$.
Since $P$ is homogeneous of degree $\mnew$ in the $X$ variables, 
by \Cref{lem:generating-function}
there exists a $k$-tuple of \emph{edge-disjoint} matchings $\vec{M} = (M_1, \dots, M_{k})\in \cal{C}$, such that every edge of $\Gnew$ appears in some matching $M_i$. 
Consequently, assigning the edges in $M_i$ color $i$ yields a proper list edge coloring $c$ of $\Gnew$ for the lists $L_e$.
Since $\vec{M}\in\mathcal{C}$,
by condition 3 from the definition of $\mathcal{C}$,
we know that for every vertex $v$ in $\Gnew$, the graph $G_v$ admits a proper list edge coloring using none of the colors assigned by $c$ to the edges incident to $v$ in $\Gnew$.
Then by combining coloring $c$ for $\Gnew$ with the aforementioned proper list edge colorings for $G_v$ over all vertices $v$, we obtain a proper list edge coloring of the full graph $G$, as desired. 

Conversely, suppose $G$ has a proper list edge coloring for lists $L_e$.
For each $i\in [k]$,
let $M_i$ be the set of edges in $\Gnew$ assigned color $i$. 
Since this is a proper edge coloring, each $M_i$ is a matching in $G$. 
Since every edge is assigned a unique color, each vertex $v$ in $\Gnew$ is incident to exactly $\deg_{\Gnew}(v)$ edges across the matchings $M_i$.
Moreover, each edge $e$ in $G_v$ must be assigned a color from $L_e$ which 
 is not used by any other edges incident to $v$.
In particular the color assigned to $e$ cannot be $i\in [k]$ if $M_i$ contains an edge incident to $v$.
This implies that the set of copies $v_i$ taken over all $i\in [k]$ such that $M_i$ has an edge incident to $v$ forms an independent set in $\matroid_v$.
Thus $(M_1, \dots, M_{k})\in\cal{C}$.
Then by \Cref{lem:generating-function}, $P$ contains a monomial divisible by $\prod_{e\in \Enew} x_e$ as claimed. 
\end{proof}

\begin{lemma}[Algorithm for  Sparse Graphs]
\label{prop:main}
    There is a randomized algorithm that solves {\normalfont\textsc{List Edge Coloring}} with high probability and one-sided error on graphs with $n_1$ unit-degree vertices in $O^*(2^{m - (n + n_1)/2})$ time and polynomial space.
\end{lemma}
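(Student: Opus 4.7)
My plan is to invoke the coloring template of \Cref{sec:template} on $\Gnew$, combined with the simple $n/2$ dominating-set bound from \Cref{lem:basic-dom}.

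First, I would carry out the preprocessing of \Cref{subsec:list-degree} in polynomial time: for each vertex $v \in \Vnew$, each tree $T_v \in \mathcal{T}_v$, and each color $i \in [k]$, invoke \Cref{prop:tree} on $T_v + e$ to compute the pruned list $L_e'$ and then apply \Cref{lem:prune}. If some $L_e'$ is empty, or some star $G_v$ admits no proper list edge coloring (checkable by \Cref{prop:tree}), return NO. Otherwise, we are reduced to the equivalent instance in which $V_1$ consists solely of unit-degree vertices of $G$, so that $\nnew = n - n_1$ and $\mnew = m - n_1$. If $\Gnew$ is empty, then $G$ was a tree and \Cref{prop:tree} solves the instance directly. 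A direct calculation shows that the quantity $m - (n + n_1)/2$ is non-increasing under this preprocessing: each simplification of a tree $T \in \mathcal{T}$ of size $t \ge 2$ decreases $n$ and $m$ each by $t-1$ while dropping $n_1$ by at most $t-1$, so the net change in $m - (n+n_1)/2$ is at most $0$. Hence it suffices to establish the runtime bound on the preprocessed graph.

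Next, since $G$ is connected and pendant deletions preserve connectivity, $\Gnew$ is connected with minimum degree at least $2$. By \Cref{lem:basic-dom} I find a dominating set $D$ of $\Gnew$ of size at most $\nnew/2$ in polynomial time. I then construct the polynomial $P(X,Y)$ from \Cref{ssec:polynomial} using the extension matroid of \Cref{lem:extension} as $\matroid_v$ for each $v \in \Vnew$; this matroid is representable in randomized polynomial time, accounting for the one-sided error of the final algorithm.

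Finally, I apply \Cref{lem:template} to $\Gnew$ and $D$ to determine in $O^*(2^{\mnew - |\notD|})$ time and polynomial space whether $P(X,Y)$ contains a monomial divisible by $\prod_{e \in \Enew} x_e$. By \Cref{lem:listcolor-char}, this is equivalent to $G$ admitting a proper list edge coloring. Using $|\notD| = \nnew - |D| \ge \nnew/2$, the runtime simplifies to
\[
O^*\!\left(2^{\mnew - \nnew/2}\right) \;=\; O^*\!\left(2^{(m - n_1) - (n - n_1)/2}\right) \;=\; O^*\!\left(2^{m - (n + n_1)/2}\right),
\]
as required. The only subtle point is verifying that the preprocessing is runtime-compatible with this bound; once the monotonicity of $m - (n + n_1)/2$ is established, the rest is a direct chaining of previously proved lemmas.
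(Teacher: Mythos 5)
Your proof follows the same route as the paper: apply the pruning of \Cref{subsec:list-degree}, find a dominating set of $\Gnew$ of size at most $\nnew/2$ via \Cref{lem:basic-dom}, instantiate the template with extension matroids, and chain \Cref{lem:listcolor-char} and \Cref{lem:template} to get $O^*(2^{\mnew - \nnew/2}) = O^*(2^{m - (n+n_1)/2})$. Your explicit verification that $m - (n+n_1)/2$ is non-increasing under the tree-pruning step (each tree of size $t$ drops $n$ and $m$ by $t-1$ while $n_1$ decreases by at most $t-2 \le t-1$) is a useful detail that the paper leaves implicit in its ``without loss of generality'' reduction, so this is a correct and slightly more careful rendition of the same argument.
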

\begin{proof}
    Let $G$ be the input graph with $n$ vertices, $m$ edges, and $n_1$ unit-degree nodes.
    Run the algorithm from \Cref{lem:basic-dom} on $\Gnew$ to obtain a dominating set $D$ of $\Gnew$ that has size at most $\nnew/2$.
    Let $V' = \Vnew\setminus D$.
    Since $|D|\le \nnew/2$ we have $|V'| \ge \nnew/2 = (n-n_1)/2$.

    By \Cref{lem:listcolor-char}, $G$ has a proper list edge coloring if and only if $P(X,Y)$ has a monomial divisible by $\prod_{e\in\Enew} x_e$.
    By \Cref{lem:template}, we can determine whether $P$ has such a monomial in $O^*(2^{\mnew - |V'|})$ time and polynomial space. 
    Since 
        \[\mnew = m - |V_1| = m - n_1\]
    we see that the algorithm solves \listedgecolor{} in time at most 
        \[O^*(2^{(m - n_1) - (n - n_1)/2}) \le O^*(2^{m-(n+n_1)/2})\]
    as claimed. 
\end{proof}

\mainwlist*
\begin{proof}
  Let $n_1$ be the number of nodes in $G$ with degree one. 

    If $n_1 \ge n/5$, then by \Cref{prop:main} we can solve \listedgecolor{} in 
        \[O^*(2^{m - (n+n_1)/2}) \le O^*(2^{m-3n/5})\]
    time and polynomial space. 

    Otherwise, $n_1\le n/5$.
    In this case, by \Cref{lem:dominating-set} we can find a minimum-size dominating set $D$ of $\Gnew$ in $O^*(2^{m-3n/5})$ time and polynomial space. 
    Note that here we are using the assumption, justified in the final paragraph of \Cref{subsec:list-degree}, that the vertices deleted from $G$ to produce $\Gnew$ are precisely the vertices that have degree one in $G$. 
    Since $\Gnew$ has minimum degree two,
    by \Cref{lem:dominating-set-2} the set $D$ has at most $2n/5$ vertices.
    Let $V' = V\setminus D$.

    By \Cref{lem:listcolor-char}, $G$ has a proper list edge coloring if and only if $P(X,Y)$ has a monomial divisible by $\prod_{e\in\Enew} x_e$.
    By \Cref{lem:template}, we can determine whether $P$ has such a monomial in $O^*(2^{\mnew - |V'|})$ time and polynomial space. 
    Since $\mnew \le m$, this algorithm runs in 
        \[O^*(2^{\mnew - |V'|}) \le O^*(2^{m-3n/5})\]
    time and polynomial space as desired. 
\end{proof}

\section{Conclusion}
\label{sec:conclusion}

In this paper, we presented the first algorithms for \textsc{Edge Coloring} and \textsc{List Edge Coloring} which run in faster than $O^*(2^m)$ time while using only polynomial space. 
Our algorithm is based off a \emph{partition sieving} procedure that works over polynomials of degree $d$ and
partition matroids with $p$ parts.
We showed how to implement this sieving routine in
$O^*(2^{d - p})$ time when the polynomial is in some sense compatible with the matroid, 
an improvement over the $O^*(2^d)$ runtime that arose in previous techniques.

Overall, we solve \edgecolor{} in faster than $O^*(2^m)$ time by 
\begin{description}
    \item[Step 1] 
        designing a polynomial $P$ enumerating matchings meeting certain degree constraints, 
    \item[Step 2]
        finding a dominating set $D$ of size at most $2n/5$ in the input graph, and
    \item[Step 3]
         applying partition sieving with a partition matroid on $p=|D|$ parts. 
\end{description}

Implementing \textsf{\textbf{\textcolor{lipicsGray}{step 2}}}
above was possible for \edgecolor{} because we could assume that the input graph had minimum degree two without loss of generality, and such graphs on $n\ge 8$ nodes always have dominating sets of size at most $2n/5$.
For \listedgecolor{} we cannot make this assumption,
but nonetheless implement a variant of \textsf{\textbf{\textcolor{lipicsGray}{step 2}}} by identifying a subgraph with minimum degree two and using extension matroids to handle unit-degree vertices for free. 
For all sufficiently large $n$, it is also known that graphs on $n$ nodes with minimum degree three have
dominating sets of size at most $3n/8$ \cite{DBLP:journals/cpc/Reed96}.
If degree-two vertices could somehow be ``freely removed'' from the input graph just like unit-degree vertices can, one could use this bound on the dominating set to solve coloring problems faster. 
More general connections between the minimum degree of a graph and the minimum size of its dominating set have also been studied \cite{DBLP:journals/dmgt/Henning22}.

Understanding the precise time complexity of \edgecolor{} and its variants remains an interesting open research direction.
Is \textsc{Edge Coloring} solvable in $O^*(1.9999^m)$ time, or at least in $O^*(2^{m - n})$ time?
Establishing any nontrivial lower bounds for \edgecolor{} would also be very interesting. 
The reductions from  \cite{Holyer81a} imply that  solving \edgecolor{} in graphs with maximum degree $\Delta=3$ requires $2^{\Omega(n)}$ time, assuming the \textsf{Exponential Time Hypothesis (ETH)}, a standard hypothesis from fine-grained complexity.
Does \textsf{ETH} similarly imply a $2^{\Omega(m)}$ or even a $2^{\Omega(n\log n)}$ time lower bound for \edgecolor{} in dense graphs (this question was previously raised in \cite[Open problem 4.25]{DBLP:journals/dagstuhl-reports/LewensteinPW16})?

\bibliographystyle{plain}
\bibliography{main}

\appendix

\section{Additional Related Work}
\label{sec:related}

\paragraph*{Coloring With Few Colors}
Beyond the general coloring problems discussed previously,
there is a large body of work on algorithms for the parameterized version of \vertexcolor{}, the $k$-\coloring{} problem.
In this problem, we are given an undirected graph $G$ on $n$ vertices and a positive integer $k$, and are tasked with determining if $G$ contains a proper vertex coloring which uses at most $k$ colors.
For $k\le 2$ this problem can be solved in polynomial time, but for $k\ge 3$ this problem is \NP-hard \cite{Lovasz1973}.

Since the current fastest algorithm for \vertexcolor{} takes $O^*(2^n)$ time, 
much attention has been spent designing routines which can solve $k$-\coloring{} in faster than $O^*(2^n)$ time for small values of $k$.
It is known that 3-\coloring{} can be solved in $O^*(1.3217^n)$ time \cite{Meijer2023},
4-\coloring{} can be solved in $O^*(1.7159^n)$ time \cite{WuGuJiangShaoXu2024}, 
and for $k\in\set{5,6}$ the $k$-\coloring{} problem can be solved in $O^*(2^{(1-\eps)n})$ time for an extremely small constant $\eps > 0$ \cite{DBLP:conf/icalp/Zamir21}.
For $k\ge 7$, it remains open whether $k$-\coloring{} can be solved in exponentially faster than the current $O^*(2^n)$ time upper bound  for \vertexcolor{}. 

Similarly, it is known that \textsc{Edge Coloring} can be solved in faster than $O^*(2^m)$ time for graphs with small maximum degree.
For example,  \textsc{Edge Coloring} can be solved in $O^*(1.2179^m)$ time and polynomial space when $\Delta = 3$ \cite{DBLP:journals/tcs/Kowalik09}, and 
for each positive integer $\Delta\le 6$ there exists a constant $\eps_{\Delta} > 0$ such that \edgecolor{} can be solved over graphs of maximum degree $\Delta$ in $O^*(2^{(1 - \varepsilon)m})$ time  and exponential space \cite{DBLP:journals/mst/CouturierGKLP13}. 

\paragraph*{Graph Classes}

There has also been significant research on solving coloring problems faster on structured families of graphs. 
As mentioned earlier, \vertexcolor{} can be solved over graphs with maximum degree $\Delta$ in $O^*(2^{(1-\eps_\Delta)n})$ time for $\eps_\Delta = (1/2)^{\Theta(\Delta)}$ \cite{BjorklundHusfeldtKaskiKoivisto2009} and
over graphs of average degree $d$ in $O^*(2^{(1-\eps_d)n})$ time for $\eps_d = (1/d)^{\Theta(d^3)}$ \cite{DBLP:journals/talg/GolovnevKM16}, while
\edgecolor{} can be solved over $d$-regular graphs  in $O^*(2^{(1-(1/d))m})$ time \cite[Theorem 6]{BjorklundHKK17narrow}.
It is also known that in graphs where the maximum degree $\Delta \le cd$ is bounded above by a constant multiple $c$ of the average degree $d$, \vertexcolor{} can be solved in $O^*(2^{(1-\eps_c)n})$ time where $\eps_c > 0$ is a constant depending only on $c$, which rapidly approaches zero as $c$ increases \cite{DBLP:conf/stoc/Zamir23}. 
Finally, assuming the asymptotic rank conjecture (a hypothesis concerning the rank of certain tensors), \vertexcolor{} can be solved on general graphs in  $O^*(1.9999^n)$ time \cite{DBLP:journals/corr/abs-2404-04987}.
The algorithms discussed 
in this paragraph all require exponential space. 
Solving \vertexcolor{} in $O^*(2^n)$ time and polynomial space remains a major open question---the current fastest polynomial space algorithm for \vertexcolor{} takes $O^*(2.2356^n)$ time \cite{DBLP:journals/algorithmica/GaspersL23}.

\paragraph*{Lower Bounds}
Assuming the \textsf{Exponential Time Hypothesis (ETH)},
a popular conjecture in complexity, the generalization of \listedgecolor{} where the lists of colors can be arbitrary requires $2^{\Omega(n^2)}$ time to solve in general \cite{KowalikSocala2018}. 
It is a major open problem to show a similar lower bound for the \edgecolor{} problem.
It is known that one cannot show a $2^{\Omega(n^2)}$ lower bound for \edgecolor{} or \listedgecolor{} by a reduction from the \textsf{Strong Exponential Time Hypothesis (SETH)}, another standard conjecture in complexity that is stronger than \textsf{ETH}, without in fact falsifying \textsf{SETH}---this gives a barrier to some techniques for showing hardness of these problems \cite{KM2024}. 

\section{Additional Matroid and Pfaffian Constructions}
\label{sec:construction}

We first prove \Cref{lemma:direct-sum}, which asserts that Pfaffians play well with direct sums.

\pfaffiansum*
\begin{proof}
    Let $V_1$ and $V_2$ be disjoint sets indexing the rows and columns of $A_1$ and $A_2$ respectively.

    Let
        \[A = \begin{pmatrix}
       A_1 & O \\ O & A_2 
    \end{pmatrix}\]
be the matrix from the lemma statement.
    
By construction, the set $V_1\sqcup V_2$ indexes the rows and columns of $A$.

    Then expanding out the definition of the Pfaffian from \cref{eq:pfaffian-def}, we get that 
        \[\Pf A = \sum_{M\in \Pi(V_1\cup V_2)} \prod_{\set{u,v}\in M} A[u,v].\]

    Note that $A[u,v]$ is nonzero if and only if either $u,v\in V_1$ or $u,v\in V_2$.
    Consequently, the term corresponding to the matching
    $M\in \Pi(V_1\cup V_2)$ 
    from the sum in the right-hand side above is nonzero if and only if $M$ can be written as the disjoint union $M = M_1\sqcup M_2$ of matchings $M_1 \in \Pi(V_1)$ and $M_2\in \Pi(V_2)$.

    Thus, the above equation simplifies to 
    \[\Pf A = \sum_{\substack{M_1 = \Pi(V_1) \\ M_2 = \Pi(V_2)}} \grp{\prod_{\set{u,v}\in M_1} A[u,v] \cdot \prod_{\set{u,v}\in M_2} A[u,v]}.\]

    Given $u,v\in V_1$, by definition we have $A[u,v] = A_1[u,v]$.
    Similarly, given $u,v\in V_2$, we have $A[u,v] = A_2[u,v]$.
    Substituting this into the above equation we have 
            \[\Pf A = \sum_{\substack{M_1 = \Pi(V_1) \\ M_2 = \Pi(V_2)}} \grp{\prod_{\set{u,v}\in M_1} A_1[u,v] \cdot \prod_{\set{u,v}\in M_2} A_2[u,v]}
            =
            (\Pf A_1)(\Pf A_2)\]
    where the final equality follows by multiplying out the definitions of $\Pf A_1$ and $\Pf A_2$ obtained from  \cref{eq:pfaffian-def}.
\end{proof}

We now introduce more matroid concepts and constructions,
not covered in \Cref{sec:prelim}.

Given a bipartite graph $H$ with vertex parts $X$ and $U$, 
we define the \emph{transversal matroid} on $X$ with respect to $H$ 
to be the matroid with ground set $X$, where a set $S\sub X$ is independent if and only if $H$ contains a matching $M$ where every vertex in $S$ is incident to an edge in $M$ (i.e., the matching saturates $S$).
It is well known that a transversal matroid is indeed a matroid \cite[Section 1.5]{OxleyBook2}.
The following result lets us represent these matroids.

\begin{proposition}[Transversal Matroid Representation {\cite[Proposition 3.11]{Marx09-matroid}}]
\label{prop:transversal}
    There is a randomized polynomial time algorithm which takes in a bipartite graph $H$ with vertex parts $X$ and $U$ and with high probability and one-sided error returns a matrix $A$ representing the transversal matroid on $X$ with respect to $H$.
\end{proposition}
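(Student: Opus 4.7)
The plan is to use the standard polynomial representation of transversal matroids and then evaluate it at a random point over a sufficiently large field. First I would introduce, for each edge $\set{x,u}\in E(H)$ with $x\in X$ and $u\in U$, a fresh indeterminate $z_{x,u}$, and form the $|U|\times |X|$ polynomial matrix $B$ whose $(u,x)$-entry is $z_{x,u}$ if $\set{x,u}\in E(H)$ and $0$ otherwise.

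The key combinatorial claim is that $B$ represents the transversal matroid over the field of rational functions. For any $S\sub X$ and $T\sub U$ with $|S|=|T|=k$, the Leibniz expansion writes $\det B[T,S]$, up to sign, as the sum of monomials $\prod_{x\in S} z_{x,\sigma(x)}$ over bijections $\sigma\colon S\to T$ whose graph is contained in $E(H)$. These monomials are pairwise distinct as formal products of independent variables, so $\det B[T,S]$ is a nonzero polynomial if and only if $H$ admits a matching from $S$ onto $T$. Consequently the columns of $B[\cdot,S]$ are linearly independent over $\F(z)$ if and only if $H$ admits a matching saturating $S$, which matches the defining condition for independence in the transversal matroid on $X$.

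To turn $B$ into an explicit matrix over $\F$, I would substitute each $z_{x,u}$ by an independent uniform element of $\F$ and return the resulting matrix $A$. Let $F(z)$ denote the product, taken over all pairs $(S,T)$ with $|S|=|T|$ and $\det B[T,S]\not\equiv 0$, of these determinants. Since there are at most $2^{|X|+|U|}$ such factors and each has degree at most $\min(|X|,|U|)$, $F$ is a nonzero polynomial of degree $2^{O(|X|+|U|)}$. Working over $\F=\FF_{2^\ell}$ with $\ell=\poly(|X|+|U|)$ taken large enough, \Cref{schwartz-zippel} guarantees that $F$ evaluates to a nonzero value with high probability. When this occurs, every generically nonzero square minor of $B$ is preserved, while generically zero minors trivially remain zero, so the rank of every column submatrix of $A$ coincides with its generic rank. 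This is exactly the condition for $A$ to represent the transversal matroid.

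The main subtlety is arranging the union bound so that a single random evaluation preserves the rank function on \emph{every} subset $S\sub X$ simultaneously, rather than just on a distinguished family. Packaging all relevant minors into the single polynomial $F$ and working over an exponentially large field handles this cleanly, and all arithmetic is carried out in $\poly(\ell)=\poly(|X|+|U|)$ time as prescribed in the Preliminaries, so the overall algorithm is randomized polynomial time with one-sided error.
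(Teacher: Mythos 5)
Your argument is correct and is essentially the standard proof that Marx gives for \cite[Proposition~3.11]{Marx09-matroid}, which the paper cites without reproving: form the symbolic bipartite adjacency (Edmonds-style) matrix, note that its minors detect exactly the matchable subsets, and then preserve all generically nonzero minors simultaneously by a Schwartz–Zippel union bound over a field of size $2^{\poly(|X|+|U|)}$. The only cosmetic remark is that you technically need only one nonvanishing $|S|\times|S|$ minor per independent $S$ rather than all minors, but insisting on all of them is harmless and keeps the union bound clean.
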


Given a matroid $\matroid = (X,\cI)$,
we define the \emph{dual} matroid $\matroid^*$ by taking the ground set $X$ together with the collection of all $S\sub X$ with the property that $S\cap B=\emptyset$ is disjoint from some basis $B$ of $\matroid$.
The dual of a matroid is always a matroid \cite[Section 3.3]{OxleyBook2}.

\begin{proposition}[Dual Representation]
\label{prop:dual}
    Given a matrix $A$ representing a matroid $\matroid$,
    we can compute a matrix representing the dual of $\matroid$ in polynomial time. 
\end{proposition}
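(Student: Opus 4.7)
The plan is to use the classical algebraic construction of the dual representation from matroid theory. Let $A$ be the given $r \times n$ matrix representing $\matroid$; we may assume without loss of generality that $A$ has full row rank equal to the rank $r$ of $\matroid$, since otherwise redundant rows can be identified and dropped via Gaussian elimination in polynomial time.

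First, I would apply Gaussian elimination, together with a suitable permutation $\pi$ of the columns of $A$, to bring $A$ into the reduced form $[I_r \mid B]$, where $I_r$ is the $r \times r$ identity matrix and $B$ is some $r \times (n-r)$ matrix over $\FF$. Both the row operations and the column permutation can be computed in polynomial time.

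Second, I would define the candidate dual representation as the $(n-r) \times n$ matrix $A^* = [B^\top \mid I_{n-r}]$, in the same permuted column ordering, and then undo $\pi$ to obtain a matrix whose columns are indexed by the same ground set $X$ as $A$. The entire construction is clearly polynomial time in the input size, so only correctness remains.

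The main step, which I expect to be the heart of the proof, is verifying that for every $S \sub X$ of size $n-r$, the columns of $A^*$ indexed by $S$ are linearly independent if and only if the columns of $A$ indexed by $X \setminus S$ are linearly independent. Equivalently, I would show that the $(n-r) \times (n-r)$ minor of $A^*$ on columns $S$ equals (up to sign, which vanishes over characteristic two) the $r \times r$ minor of $A$ on columns $X \setminus S$. In the normalized form $A = [I_r \mid B]$ and $A^* = [B^\top \mid I_{n-r}]$ this is a direct computation: write $S = S_1 \sqcup S_2$ where $S_1$ lies among the first $r$ columns and $S_2$ among the last $n-r$; then Laplace expansion along the $I_{n-r}$ block of $A^*$ and the $I_r$ block of $A$ both reduce the relevant minors to the same minor of $B$ on rows $[r] \setminus S_1$ and columns $S_2$, establishing the identity. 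Once this minor-matching is in hand, it follows from the definition of the dual that $A^*$ represents $\matroid^*$. The only bookkeeping subtlety is carefully tracking the column permutation $\pi$ when transferring the identity back to the original indexing, but this is purely mechanical.
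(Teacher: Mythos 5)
Your proposal is the standard construction (row-reduce to $[I_r \mid B]$, take $A^* = [B^\top \mid I_{n-r}]$, verify complementary minors agree), which is exactly what the paper cites to Marx's Proposition~3.6, so the approach matches the source. The argument is sound, but there is a small indexing slip in the minor-matching step: after Laplace expansion, both determinants reduce to the minor of $B$ on rows $S_1$ and columns $\{r+1,\dots,n\}\setminus S_2$ (not rows $[r]\setminus S_1$ and columns $S_2$ as you wrote, which isn't even a square submatrix unless $n=2r$); fixing this bookkeeping is routine and does not affect the validity of the construction.
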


See \cite[Proposition 3.6]{Marx09-matroid} for a proof of \Cref{prop:dual}.

We define one last matroid operation, called truncation.
Given a matroid $\matroid = (X,\cI)$ of rank $r$ and a nonnegative integer $h\le r$, the \emph{$h$-truncation} of $\matroid$ is defined to be the set $X$ together with the family of sets in $\cI$ of size at most $h$. 
It is well known that the $h$-truncation of a matroid is a matroid.

\begin{proposition}[Representing Truncations {\cite[Proposition 3.7]{Marx09-matroid}}]
\label{prop:truncate}
    Given a matrix $A$ representing a matroid $\matroid$ of rank $r$,
    and a nonnegative integer $h\le r$,
    we can in randomized polynomial time with high probability output a matrix with $h$ rows and the same column set as $A$ that represents the $h$-truncation of $\matroid$.
\end{proposition}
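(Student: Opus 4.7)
The plan is to take $T$ to be a uniformly random $h\times r$ matrix over $\FF$ (where $|\FF|\ge 2^{\poly(n)}$), and to output $TA$ as the claimed representation; the result has $h$ rows and the same column set as $A$. First I would establish the easy direction: if $S\sub X$ is dependent in $\matroid$, then the columns of $A$ indexed by $S$ admit a nontrivial linear dependence, and since $T$ is a linear map the columns of $TA$ indexed by $S$ inherit this dependence. Hence every dependent set of $\matroid$ remains dependent in the matroid $\matroid_T$ represented by $TA$; in particular any set of size larger than $h$ is automatically dependent in $\matroid_T$ since $TA$ has only $h$ rows, matching the rank bound for the $h$-truncation.

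Next I would show that with high probability every independent set $S$ of $\matroid$ with $|S|=h$ remains independent in $\matroid_T$. Fix such an $S$. By the Cauchy--Binet formula,
\[
\det\grp{TA[\cdot,S]} \;=\; \sum_{J\in \binom{[r]}{h}} \det(T[\cdot,J])\,\det(A[J,S]).
\]
Since $A[\cdot,S]$ has full column rank $h$, some minor $\det(A[J,S])$ is nonzero. The $h\times h$ minors $\det(T[\cdot,J])$ of the symbolic matrix $T$, viewed as polynomials in the entries of $T$, are linearly independent, so the right-hand side is a nonzero polynomial of degree $h$ in the entries of $T$. By Schwartz--Zippel (\Cref{schwartz-zippel}), the probability that it vanishes on a uniformly random $T$ is at most $h/|\FF|$; a union bound over the at most $\binom{n}{h}\le n^h$ size-$h$ independent sets of $\matroid$ gives total failure probability at most $n^h\cdot h/|\FF|$, which is negligible since $|\FF|$ has $\poly(n)$ bits.

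Conditioned on this success event, if $S\sub X$ is any independent set of $\matroid$ with $|S|<h$ I would augment $S$ to an independent set $S'\supseteq S$ of size $h$ (which exists since $\matroid$ has rank $r\ge h$); then $TA[\cdot,S']$ has rank $h$, so its submatrix $TA[\cdot,S]$ has rank $|S|$, and $S$ is independent in $\matroid_T$. Combined with the easy direction this shows that $\matroid_T$ is precisely the $h$-truncation of $\matroid$ with high probability. The main obstacle is verifying that $\det(TA[\cdot,S])$ does not vanish identically in the entries of $T$ for every size-$h$ independent set $S$, which is where Cauchy--Binet together with the linear independence of the $h\times h$ minors of a symbolic $h\times r$ matrix is essential; once this is in hand the algorithm reduces to sampling $T$ and performing one matrix multiplication, requiring $\poly(n)$ arithmetic operations over $\FF$.
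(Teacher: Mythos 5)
Your proposal is correct and is essentially the standard argument for this fact, which the paper does not prove itself but cites from Marx \cite{Marx09-matroid}: left-multiply the representation by a random $h\times r$ matrix, use Cauchy--Binet plus the disjoint monomial supports of the maximal minors of a symbolic matrix to see that $\det(TA[\cdot,S])$ is a nonzero polynomial for each size-$h$ independent set, and finish with Schwartz--Zippel and a union bound. The only cosmetic caveat is that if the original representation lives over a small field one should first pass to an extension field of size $2^{\poly(n)}$ (which preserves the represented matroid); in this paper's setting $\FF=\FF_{2^\ell}$ with $\ell=\poly(n)$ is already large enough, so your argument goes through as written.
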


We now combine the above matroid operations to prove \Cref{lem:extension}.

\extend*
\begin{proof}
    Let $E_v$ be the set of edges in $G_v$.
    We define the bipartite graph $H$ with vertex parts $S_v$ and $E_v$,
    where $v_i\in S_v$ has an edge to $e\in E_v$ if and only if $i\in L_e$.
    Define $\matroid$ to be the transversal matroid on $S_v$ with respect to $H$. 
    By the assumption from the last paragraph of \Cref{subsec:list-degree}, $G_v$ admits a proper list edge coloring.
    Consequently, $\matroid$ has rank $|E_v|$.    
    
    Let $\matroid^*$ be the dual of $\matroid$.
    Set $h = \deg_{\Gnew}(v)$.
    Let $\matextend$ be the $h$-truncation of $\matroid^*$.
    By definition, the bases of $\matextend$ are precisely the sets $B\sub S_v$ of size $h$ such that $\matroid$ has a basis $U\sub S_v\setminus B^*$. 
    By the definition of $\matroid$, $H$ must contain a matching $M$ from $U$ to $E_v$.

    For each $v_i\in U$, assign the edge $e\in E_v$ connected to $v_i$ by $M$ the color $i$. 
    By definition of the graph $H$, we have $i\in L_e$.
    Moreover, $U\cap B=\emptyset$, so $i\not\in\set{j\mid v_j\in B}$.
    Finally, since $B$ is a basis of $\matroid$, $|B| = |E_v|$, so every edge in $E_v$ is assigned a color in this procedure. 
    Thus, $G_v$ admits a proper edge coloring where each edge receives a color from $L_e\setminus\set{i\mid v_i\in B}$.
    
    Similar reasoning shows that any set $B$ of size $h$ for which the previous sentence holds is a basis in $\matextend$.
    Thus, $\matextend$ is the extension matroid of $\Gnew$ at $v$ described in the lemma statement.
    We can construct an $h\times k$ matrix representing $\matextend$ in randomized polynomial time with one-sided error by combining \Cref{prop:transversal,prop:dual,prop:truncate}.
\end{proof}

\section{Dominating Set Algorithm}
\label{sec:dom}

\simpledomset*
\begin{proof}
    Let $D$ be a minimal dominating set.
    By the minimality of $D$, every vertex in $D$ has at least one neighbor in $V \setminus D$.
    So, $V \setminus D$ is a dominating set as well.
    Since $\min \{|D|, |V \setminus D| \} \le n/2$, the graph has a dominating set of size at most $n/2$ as claimed.
    
    We can construct a minimal dominating set by initializing $D\leftarrow V$, and repeatedly trying to delete vertices from $D$ while preserving the property that $D$ is a dominating set.
    Since we only try deleting each vertex once, and we can check if a given set is dominating in linear time, this procedure takes polynomial time overall. 
    Once we obtain a minimal dominating set $D$, by the discussion in the previous paragraph, we simply return the smaller of $D$ and $V\setminus D$ as our dominating set of size at most $n/2$.
\end{proof}

\domsetalgorithm*

\begin{proof}
We first observe some simple properties of dominating sets in \emph{paths}.

        \begin{claim} \label{obs:ds-path}
            There is a polynomial time algorithm that takes in a path $P = (v_1, \dots, v_p)$ on $p$ nodes and returns a minimum dominating set $D$ of $P$ of size exactly $\lceil p/3 \rceil$, such that 
            \begin{enumerate}
                \item if $p\equiv 1\pmod 3$ then $D$ includes both endpoints of $P$,
                \item if  $p\equiv 2\pmod 3$ then $D$ includes exactly one endpoint of $P$,
                and
                \item if $p\equiv 0\pmod 3$ then $D$ includes neither endpoint of $P$.
            \end{enumerate}
            Moreover, if $p\equiv 0\pmod 3$ then a minimum dominating set cannot include either endpoint of $P$, and if $p\equiv 2\pmod 3$ it cannot include both endpoints of $P$.  
        \end{claim}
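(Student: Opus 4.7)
The plan is to prove the claim by combining a simple counting lower bound with an explicit case-based construction, and then to handle the final endpoint constraints via a similar counting argument.

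First I would establish the size lower bound $|D|\ge\lceil p/3\rceil$ for any dominating set $D$ of $P$. The key observation is that each vertex $v_i\in D$ can dominate at most three vertices of $P$ (namely $v_{i-1}, v_i, v_{i+1}$ when these exist). Since the $p$ vertices of $P$ must all be dominated, we get $3|D|\ge p$, which gives the bound.

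Next I would construct an explicit dominating set of size exactly $\lceil p/3\rceil$ meeting the claimed endpoint conditions, by case analysis on $p\bmod 3$. Writing $p = 3q + r$ with $r\in\{0,1,2\}$, I would set
\begin{itemize}
\item $D=\{v_2, v_5, \dots, v_{3q-1}\}$ if $r=0$, containing neither endpoint;
\item $D=\{v_1, v_4, \dots, v_{3q-2}, v_p\}$ if $r=1$, containing both endpoints;
\item $D=\{v_2, v_5, \dots, v_{3q-1}, v_p\}$ if $r=2$, containing exactly the endpoint $v_p$.
\end{itemize}
In each case it is direct to verify that every vertex of $P$ is in $D$ or adjacent to a vertex in $D$, and that $|D|=\lceil p/3\rceil$. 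Since $D$ can be read off from $p\bmod 3$ in linear time, this gives the claimed polynomial time algorithm.

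Finally I would prove the two ``cannot include'' statements. Suppose $p\equiv 0\pmod 3$ and some minimum dominating set $D'$ contains an endpoint, say $v_1$ (the case $v_p$ is symmetric). Then $v_1$ only dominates $\{v_1, v_2\}$, so the remaining $(p-2)$ vertices $v_3,\dots,v_p$ form a subpath which must be dominated by $D'\setminus\{v_1\}$. By the lower bound applied to this subpath, we need $|D'|\ge 1 + \lceil(p-2)/3\rceil = 1 + p/3 > \lceil p/3\rceil$, contradicting minimality. Similarly, if $p\equiv 2\pmod 3$ and $D'$ contains both endpoints $v_1, v_p$, then the subpath $v_3,\dots,v_{p-2}$ of length $p-4$ must be dominated by the remaining vertices of $D'$, giving $|D'|\ge 2 + \lceil(p-4)/3\rceil = 2 + (p-2)/3 > (p+1)/3 = \lceil p/3\rceil$, again a contradiction.

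The proof is essentially routine counting; the only mild subtlety lies in getting the ceiling arithmetic right in the endpoint-restriction argument, which is why I would handle those cases with explicit residue computations as above rather than leaving them to the reader.
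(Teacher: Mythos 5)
Your proof is correct and follows essentially the same approach as the paper: the same counting lower bound ($3|D|\ge p$), the same explicit constructions by residue class, and the same counting argument for the two ``cannot include'' statements (a single endpoint in $D$ covers only two vertices, leaving $p-2$ vertices to be covered by the remaining $|D|-1$ dominators, each of which covers at most three of them, and similarly for two endpoints). The only cosmetic difference is that you phrase the final step as ``the lower bound applied to this subpath,'' which is slightly loose since $D'\setminus\{v_1\}$ may contain $v_2\notin\{v_3,\dots,v_p\}$, but the underlying counting (each remaining dominator covers at most three of $v_3,\dots,v_p$) is exactly what the paper uses and your arithmetic is right.
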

        \begin{claimproof}
            Since each vertex in $P$ dominates at most 3 nodes in the path, the minimum dominating set has size at least $\lceil p/3 \rceil$.

        If $p\equiv 1\pmod 3$, then $\set{v_1, v_4, \dots, v_{p}}$ forms a dominating set of this size for $P$, which we return. 

        If $p\equiv 2\pmod 3$, then both $\set{v_2, v_5, \dots, v_{p}}$ and $\set{v_1, v_4, \dots, v_{p-1}}$ are dominating sets of the relevant size for $P$, and we can return either one. 
            
            If $p\equiv 0\pmod 3$, then $\set{v_2, v_5, \dots, v_{p - 1}}$ forms a dominating set of this size for $P$, which we return. 

        Let $D$ be a dominating set of $P$ that contains an endpoint of the path.
        If $p\equiv 0\pmod 3$,
        then the endpoint in $D$ dominates two vertices and the rest of the nodes in $D$ dominate the remaining vertices in $P$, so $D$ has size at least $1 + \lceil (p-2)/3\rceil > \lceil p/3\rceil$,
        so $D$ cannot be a minimum-size dominating set in this case.

        Now suppose $D$ is a dominating set of $P$ containing both endpoints of the path.
        If $p\equiv 2\pmod 3$,
        then the endpoints in $D$ dominate four vertices and the rest of the nodes in $D$ dominate the remaining vertices in $P$, so $D$ has size at least $2 + \lceil (p-4)/3\rceil > \lceil p/3\rceil$,
        so $D$ cannot be a minimum-size dominating set in this case either.

        This completes the proof.
        \end{claimproof}

    Let $V_1$ be the set of vertices deleted from $G$ when constructing $G'$.
    Furthermore, let $V_2$ and $V_{3}$ denote the set of degree-two vertices and vertices of degree at least three in $G'$ respectively.
    Define $n_1 = |V_1|$, $n_2 = |V_2|$, and $n_3 = |V_3|$.
    We  show how to find a minimum-size dominating set in $G'$ with runtime depending only on $n_3$.

    \begin{claim}[Searching Over High Degree Nodes]
    \label{claim:forsure}
        We can find a minimum  size dominating set of $G'$  in $O^*(2^{n_3})$ time and polynomial space. 
    \end{claim}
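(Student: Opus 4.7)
The plan is to branch over all $2^{n_3}$ subsets $S \sub V_3$, treating each as a guess for the intersection $D \cap V_3$ of the optimal dominating set $D$ with the high-degree vertices. For each $S$, I will show that the residual problem---finding a minimum-size $X \sub V_2$ such that $S \cup X$ dominates $G'$---is solvable in polynomial time. Taking the best result over all $S$ then yields a minimum dominating set of $G'$ in $O^*(2^{n_3})$ total time and polynomial space.

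The structural observation I will exploit is that every vertex in $V_2$ has degree exactly two in $G'$, so the degree-two vertices organize into maximal ``degree-two paths'' $P = (u, v_1, \ldots, v_p, w)$ with $u, w \in V_3$ and $v_i \in V_2$ (possibly with $u = w$), together with ``degree-two cycles'' that form connected components of $G'$ disjoint from $V_3$. Each such cycle is independent of $S$ and contributes $\lceil p/3 \rceil$ to the dominating set by \Cref{obs:ds-path}. For each degree-two path $P$ and each boundary configuration $(\ell, r) \in \{0, 1\}^2$ specifying whether $v_1$ and $v_p$ lie in $X$, I will compute in polynomial time the minimum number $f_P(\ell, r)$ of $V_2$-vertices on $P$ needed to dominate $v_1, \ldots, v_p$, accounting for whether $u, w \in S$ (in which case $v_1$ or $v_p$ is freely dominated by its $V_3$-neighbor). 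Setting $\ell = 1$ additionally dominates $u$ and setting $r = 1$ additionally dominates $w$.

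The only remaining coupling between different paths is through the set $R = \set{u \in V_3 \setminus S : N(u) \cap S = \emptyset}$ of high-degree vertices not dominated by $S$ via their $V_3$-neighbors. Each $u \in R$ must be dominated by the first $V_2$-vertex of some incident degree-two path, which translates to the requirement that for each $u \in R$, at least one incident degree-two path selects a configuration with its $u$-side boundary equal to $1$. Since each degree-two path has at most two $V_3$-endpoints, this is a weighted covering problem in which every ``set'' (path) has size at most two, and thus reduces to minimum-cost edge cover on an auxiliary multigraph with vertex set $R$ and one edge per degree-two path with both endpoints in $R$ (together with loops for paths with only one endpoint in $R$). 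Minimum-cost edge cover is polynomial-time solvable via a standard reduction to minimum-cost perfect matching, completing the polynomial-time residual algorithm and yielding the claimed $O^*(2^{n_3})$ bound.

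The main obstacle in this plan is the coupling of different degree-two paths through their shared endpoints in $R$: naively these subproblems appear to interact in a set-cover-like way that could be hard. The resolution is to observe that the coupling is of bounded size two, so the residual problem collapses to minimum-cost edge cover rather than a general set cover instance.
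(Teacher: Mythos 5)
Your high-level plan matches the paper's: branch over all $2^{n_3}$ guesses for $D\cap V_3$, observe that the degree-two vertices form paths (and possibly a cycle), and then solve a polynomial-time residual problem per guess. Your idea of computing the per-path costs $f_P(\ell,r)$ by dynamic programming rather than the paper's explicit $p\bmod 3$ casework is a clean reformulation. The gap is in the final step, the claimed reduction of the residual coupling to minimum-cost edge cover.

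The difficulty is that for a degree-two path $P$ with both $V_3$-endpoints $u,w$ in $R$, the cost function $f_P$ is \emph{supermodular}, not submodular, when $p\equiv 2\pmod 3$. Concretely, take $p=5$ with neither endpoint freely dominated: $\{v_2,v_4\}$, $\{v_1,v_4\}$, and $\{v_2,v_5\}$ show $f_P(0,0)=f_P(1,0)=f_P(0,1)=2$, yet $f_P(1,1)=3$ because any set containing both $v_1$ and $v_5$ needs a third vertex to dominate $v_3$. So the marginal cost of covering $u$ alone via $P$, or $w$ alone via $P$, is $0$, while covering both via $P$ costs $1$. No fixed edge cost can model this. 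If you encode $P$ as a single edge $\{u,w\}$ with cost $f_P(1,1)-f_P(0,0)=1$, edge cover overpays in any solution where $w$ is already covered by a different path and $P$ should be used only for $u$ at cost $0$. If instead you try to encode it as two half-loops at $u$ and $w$ of cost $0$ each, edge cover may take both and conclude that $P$ covers both endpoints for free, which is false. So the residual problem, as you have set it up over an \emph{arbitrary} guess $S\subseteq V_3$, is not an instance of minimum-cost edge cover.

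The paper avoids this by guessing $D_3 = D\cap V_3$ for a minimum dominating set $D$ that \emph{maximizes} $|D\cap V_3|$, and showing (implicitly, via the casework in \textsf{Type 2} and \textsf{Type 3}) that such a $D$ never asks a single $p\equiv 2\pmod 3$ path to dominate both of its $V_3$-neighbors at once; the remaining free choice of which single endpoint each such path covers becomes an \emph{unweighted} bipartite matching, not a weighted edge cover. To repair your proposal you would need either an analogous exchange lemma showing the supermodular configuration $(1,1)$ is never needed in some optimal solution, or a genuinely more general polynomial-time routine for the residual covering problem; the assertion that it ``reduces to minimum-cost edge cover'' is where the argument currently breaks.
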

    \begin{claimproof}
        Let $D$ be a minimum-size dominating set that maximizes the value of $|D \cap V_3|$.
        Our goal is to find $D$. 
        We do this by trying out all possibilities for $D_3 = D \cap V_3$ in $O(2^{n_3})$ time.
        In what follows, fix a guess for the set $D_3$.
        To finish constructing the dominating set, it remains to determine which additional vertices in $V_2$ we will include. 

        Since every vertex in $V_2$ has degree two by definition in $G'$,
        every vertex in $G'[V_2]$ has degree at most two. 
        Thus every connected components of $G'[V_2]$ must be a path or cycle.

        Since $G$ is connected, the only way a component of $G'[V_2]$ can be a cycle is if in fact this component is the full graph $G' = G'[V_2]$.
        In this case, if the cycle has $p$ vertices, then its minimum dominating set has size at least $\lceil p/3\rceil$ because each vertex in the cycle dominates exactly three nodes, 
        and by \Cref{obs:ds-path} we can find a minimum dominating set of this size in polynomial time.

        Otherwise, the connected components of $G'[V_2]$ are all paths. 
        For each connected component in this graph, we must decide which of its nodes to include in the dominating set we construct. 
        There are three types of paths that may appear as connected components in $G'[V_2]$, based on how their endpoints are adjacent to vertices in $D_3$.

        \begin{description}
        \item[Type 1]
            Both endpoints are adjacent to nodes in $D_3$.
        \item[Type 2]
            One endpoint is adjacent to a node in $D_3$ and the other is not.
        \item[Type 3]
            Neither endpoint is adjacent to a node in $D_3$.
        \end{description}

        Below, we perform casework on these three different types of paths,
        and describe which nodes in these paths we include in the dominating set. 
        Note that in all cases, because $P$ consists of nodes in $G'[V_2]$,
        the endpoints of $P$ are adjacent to unique vertices in $V_3$, while the internal nodes of $P$ are not adjacent to any vertices in $V_3$. 
        
        We assume that we guessed the subset $D_3 = D\cap V_3$ correctly,
        and construct a minimum-size dominating set with this assumption in mind.
        After trying out all possible $D_3$ sets, at the end we will return the minimum-size dominating set that arose from each of these cases.
        In what follows, we let $P = (v_1, \dots, v_p)$ denote the path we are considering (so that $P$ has length $p$, and endpoints $v_1$ and $v_p$). 
        We say a vertex $v$ is \emph{dominated} if it is a node in or adjacent to a node in the dominating set we are building up. 

        \proofsubparagraph*{Type 1}
        In this case, because the endpoints of $P$
        are already dominated and the internal nodes of $P$ are not adjacent to nodes in $V_3$,
        we get a minimum dominating by applying the construction of \Cref{obs:ds-path} to $P$ with its endpoints removed.

        \proofsubparagraph{Type 2}
        Without loss of generality, suppose that $v_1$ is adjacent to a node in $D_3$,
        and that $v_p$ is not. 
        Let $w \in V_3\setminus D_3$ be the unique vertex that $v_p$ is adjacent to in $V_3$.
        Since $v_1$ is dominated by $D_3$, it remains to dominate the path $P'$ on the remaining $p' = p - 1$ vertices.
        We do this by casework on the length $p'$ of this remainder:
        \begin{itemize}

            \item If $p' \equiv 0 \pmod 3$, then we get a dominating set of size $p'/3$ vertices for $P'$ using \Cref{obs:ds-path}.
            \item If $p' \equiv 1 \pmod 3$, then
            by \Cref{obs:ds-path} the minimum dominating set within $P'$ has size $(p'+2)/3$.
            However, we could also dominate $P'$ with this many vertices by using $w$ (to dominate $v_p$) and then using \Cref{obs:ds-path} to dominate the remaining vertices of $P'$ using $(p'-1)/3$ additional vertices within the path. 
            This choice of vertices would still minimize the size of the dominating set, while including more vertices from $V_3$. 
            So in this case we know that our guess for $D_3 = D\cap V_3$ was incorrect, contradicting our assumption. 
            \item If $p' \equiv 2 \pmod 3$, we use \Cref{obs:ds-path} to dominate the path with $(p'+1)/3$ vertices.
            The construction from \Cref{obs:ds-path} includes $v_p$, and so $w$ is dominated by this set. 
        \end{itemize}

        \proofsubparagraph*{Type 3}
        Let $w_1$ and $w_p$ be the unique vertices in $V_3\setminus D_3$ adjacent to $v_1$ and $v_p$ respectively. 
        We  perform casework based on the length $p$ of the path $P$. 
        \begin{itemize}
            \item If $p \equiv 0 \pmod 3$, we use \Cref{obs:ds-path} to dominate the path with $p/3$ vertices.
            If we guess $D_3 = D\cap V_3$ correctly,
            then $w_1$ and $w_p$ must be dominated by vertices outside of $P$. 
            
            \item If $p \equiv 1 \pmod 3$, we use \Cref{obs:ds-path} to dominate the path with $(p + 2)/3$ vertices.
            The construction from \Cref{obs:ds-path} includes $v_1$ and $v_p$ in the dominating set, so in this case their neighbors $w_1$ and $w_p$ are dominated. 
            \item If $p \equiv 2 \pmod 3$, we can use \Cref{obs:ds-path} to dominate the path with $(p+1)/3$ vertices.
            Moreover, by \Cref{obs:ds-path} any such dominating set includes exactly one endpoint of $P$, and we can freely choose whether we decide to include $v_1$ or $v_p$.
            Put all paths $P$ that appear in this subcase into a family of paths $\mathcal{P}$.
            We discuss this case further below. 
        \end{itemize}

        So far we have described how to include vertices from paths in the minimum dominating set extending $D_3$, except in the case where the path $P$ falls into \textsf{\textbf{\textcolor{lipicsGray}{type 3}}} and has length congruent to 2 modulo 3. 
        We have collected all such paths $P$ into a family $\mathcal{P}$.

        We form a graph $H$ that has a node for each vertex in $V_3$ not dominated by any of the vertices we have included in the dominating set so far (i.e., not dominated by any nodes in $D_3$ or any nodes from the dominating sets for the path components chosen in the above case analysis),
        and that also has a node for each path $P\in\mathcal{P}$.
        We include an edge in $H$ between a path $P\in\mathcal{P}$ and a vertex $w\in V_3$
        if and only if $w$ is a neighbor of one of the endpoints of $P$. 

        This graph $H$ is bipartite between its nodes corresponding to paths $P\in\mathcal{P}$ and its nodes corresponding to vertices in $V_3$.
        Since each endpoint of $P$ is adjacent to a unique vertex in $V_3$ (because $P$ is of \textsf{\textbf{\textcolor{lipicsGray}{type 3}}}),
        each path $P$ viewed as a node in $H$ has degree at most two. 

        Assuming we guessed $D_3 = D\cap V_3$ correctly,
        obtaining a minimum-size dominating set extending our selection so far corresponds precisely to choosing for each $P = (v_1, \dots, v_p)$ in $\mathcal{P}$ whether we take a dominating set on it of size $\lceil p/3\rceil$ that includes $v_1$, or we take a dominating set of this size that includes $v_p$ (this is always possible by \Cref{obs:ds-path}) such that all remaining nodes in $V_3$ are dominated. 
        For the purpose of dominating vertices in $V_3$, the choice between including $v_1$ or $v_p$ in the dominating set for $P$ amounts to deciding between whether we use the dominating set on $P$ to dominate the neighbor of $v_1$ or the neighbor of $v_p$.
        In other words, for each $P\in\mathcal{P}$ we are deciding between a choice of the two edges incident to $P$ in $H$, and we want our final selection of edges to cover all nodes in $H$ corresponding to vertices in $V_3$ (if $P$ instead has unit-degree in $H$, then the choice of dominating set on it is forced, and if $P$ has degree zero in $H$, then we can take an arbitrary minimum-size dominating set of $P$).

        We can make this selection, and thus decide the remaining vertices to include in the dominating set, in polynomial time just by running a maximum matching algorithm on $H$.

        This completes the description of the dominating set algorithm.
        We spend polynomial time for each guess of $D_3$,
        and so the algorithm runs in $O^*(2^{n_3})$ time as claimed.
        The algorithm is correct because in the case where we guess $D_3 = D\cap V_3$ for some minimum-size dominating set $D$ that maximizes the value of $D\cap V_3$,
        the analysis above shows that we correctly extend $D_3$ to a minimum-size dominating set for all of $G'$.
    \end{claimproof}

    Having established \Cref{claim:forsure}, we are now ready to prove the desired runtime bound for finding a dominating set of minimum size in $G'$.

        Let $m'$ be the number of edges in $G'$.
    It is known that we can find a minimum-size dominating set in  $O^*(1.4969^n) \le O^*(2^{3n/5})$ time and polynomial space \cite{DBLP:journals/dam/RooijB11}.
    
    If $m' > 6n/5$, then this algorithm already finds a minimum dominating set in $G'$
    in the desired runtime bound. 

    Suppose  instead that we have $m' \le 6n/5$.
    
    By definition, $m' = m - n_1$.
    From  the handshaking lemma, we get that 
    \[2n_2 + 3n_{3} \le 2m' \le (m - n_1) + 6n/5.\]
    Since $n_1 + n_2 + n_{3} = n$, we have 
    \begin{align*}
        % n_{3} - n_1 = n_1 + 2n_2 + 3n_{3} - 2(n_1 + n_2 + n_{3}) \le 2n/5.
        n_3 &= 2n_1 + (2n_2 + 3n_3) - 2(n_1 + n_2 + n_3) \\
        &\le \left(n_1 + n/5 \right) + \left(m - n_1 + 6n/5 \right) - 2n = m - 3n/5
    \end{align*}
    where we used the assumptions  that $n_1\le n/5$ and $m'\le 6n/5$. 
    
    Consequently, in this case the algorithm from \Cref{claim:forsure} finds a minimum size dominating set in   $O^*(2^{m - 3n/5})$ time and polynomial space, as desired. 
\end{proof}

\end{document}